\newtheorem{theorem}{Theorem}
\newtheorem{corollary}[theorem]{Corollary}
\newtheorem{lemma}[theorem]{Lemma}
\newtheorem{definition}[theorem]{Definition}
\newtheorem{claim}[theorem]{Claim}
\newtheorem{fact}[theorem]{Fact}
\newcommand{\poly}{{\rm poly}}
\newcommand{\eps}{\varepsilon}
\newcommand{\bra}[1]{\langle #1|}
\newcommand{\ket}[1]{|#1\rangle}
\newcommand{\C}{\mathbb{C}}
\newcommand{\Tr}{Tr}
\newcommand{\Trho}{Tr_\rho} 
\newcommand{\Id}{Id} %CHECK
\newcommand{\Exs}[2]{E_{#1}\left[#2\right]} %CHECK
\newcommand{\beq}{\begin{equation}}
\newcommand{\eeq}{\end{equation}}
\begin{document}

\title{Parallel Repetition of Entangled Games}
\author{ Julia Kempe\footnote{Blavatnik School of Computer Science, Tel Aviv University, Tel Aviv 69978, Israel and CNRS \& LRI,
 University of Paris-Sud, Orsay, France. Supported
by an Individual Research Grant of the Israeli Science Foundation, by European Research Council (ERC) Starting Grant
QUCO and by the Wolfson Family Charitable Trust.}
 \and Thomas Vidick\footnote{Computer Science division, UC Berkeley, USA. Supported by ARO Grant W911NF-09-1-0440 and NSF Grant CCF-0905626. Part of this work while done while visiting LRI, University of Paris-Sud, Orsay, France.}}
\date{\today}
\maketitle

\begin{abstract}
We consider one-round games between a classical referee and two players. One of the main questions in this area is the
{\em parallel repetition question}: Is there a way to decrease the maximum winning probability of a game
without increasing the number of rounds or the number of players? Classically, efforts to resolve this question, open for many years, have
culminated in Raz's celebrated parallel repetition theorem on one hand, and in efficient product testers for PCPs on the other.

In the case where players share entanglement, the only previously known results are for special cases of games, and
are based on techniques that seem inherently limited. Here we show for the first time that the maximum success probability of
 entangled games can be reduced through parallel repetition, provided it was not initially $1$.
Our proof is inspired by a seminal result of Feige and Kilian in the context of classical two-prover one-round interactive proofs.
One of the main components in our proof is an orthogonalization lemma for operators, which might be of independent interest.
\end{abstract}

\section{Introduction}

Two-player games play a major role both in theoretical computer science, where they have led to many breakthroughs such as
the discovery of tight inapproximability results for some constraint satisfaction problems, and in quantum physics, where they first arose in the context of Bell inequalities. In such games, a referee (or verifier) chooses a pair of questions from some distribution
and sends one question to each of two non-communicating players (or provers), who then respond with answers taken from some finite set. The
referee, based on the questions and answers, decides whether to accept (i.e., whether the players win). The main question of interest is the following: given the referee's
behavior as specified by the game, what is the maximum winning probability achievable by the players?
Somewhat surprisingly, the answer to this question turns out to depend on whether we force the players to behave
classically, or allow them to use quantum mechanics. In the former case,
the players' answers are simply deterministic functions of their inputs\footnote{One can allow the players to use randomness, but this does not change their maximum winning probability.}, and the maximum probability
of winning is known as the \emph{(classical) value} of the game.
In the latter case the players, though
still not allowed to communicate, may share an arbitrary entangled state and each perform arbitrary measurements on their share of the state.
The maximum winning probability in this case is known as the \emph{entangled value}
of the game. This model of entangled players (also known as that of \emph{non-local games}) dates back at least to the work of Tsirelson, and it
has been intensely studied in recent years; yet many questions about it are still wide open.

\medskip

One of the most important and interesting questions in this context is the parallel repetition question. It is well known that one can reduce both the value and the entangled value of a game by repeating it sequentially,
or alternatively, by repeating it in parallel with several independent pairs of players. However, for many
applications (like hardness of approximation results or amplifications preserving zero-knowledge) we need a way to decrease the winning probability without increasing the number of rounds or the number
of players, i.e., while staying in the model of two-player one-round games. Parallel repetition is designed
to do just that: in its most basic form, in the $\ell$-parallel repeated game, the referee simply chooses $\ell$
pairs of questions independently and sends to each player his corresponding $\ell$-tuple of questions. Each player then replies with an $\ell$-tuple
of answers, which are accepted if and only if each of the $\ell$ answer pairs would have been accepted in the original game.

Clearly the value of an $\ell$-parallel repeated game is \emph{at least} the $\ell$-th power of the value of the
original game, since the players can just answer each of the $\ell$ questions independently as in the original
protocol. However, contrary to what intuition might suggest and to the case of sequential
repetition, parallel repetition does \emph{not} necessarily decrease the value of a game in a straightforward
exponential manner\footnote{See~\cite{Fei91SCT} for a classical example, and~\cite{CleveSUU07} for an example using entangled
players due to Watrous. See also~\cite{Kempe2010a} for another example where parallel repetition does not reduce the value of a game at the exact rate one would expect if the players were playing independently.}. The parallel repetition question is that of finding \emph{upper bounds} on the value of a
repeated game, and for a long time no such upper bound, even very weak, could be proved. First results date to
Verbitsky~\cite{Verbitsky} who showed that indeed the value goes to zero with the number of repetitions. Following
this, Feige and Kilian~\cite{Feige2000} showed that the value decreases polynomially with the number of repetitions for
the special case of so-called \emph{projection games} (in which the second player's answer is uniquely determined by
the first player's). They used a modified parallel repetition procedure in which a large fraction of the repetitions
are made of \emph{dummy} rounds, that is, rounds in which the questions are chosen independently at random for both
players, and in which any answer is accepted. In this paper we deviate somewhat from the common terminology, and use
the term ``parallel repetition'' even when referring to such more general procedures. Finally, in a breakthrough
result, Raz~\cite{Raz98} showed that the value of a game repeated in parallel indeed decreases exponentially with the
number of repetitions (albeit not exactly at the same rate as sequential repetition). There is still very active
research in this area, mostly on simplifying the analysis, which, over a decade later, remains quite involved, and
improving it for certain special cases of
games~\cite{Holenstein07,Rao:parallel,FKR:foams,Raz08,BarakHHRRS08,BarakRRRS09,AroraKKSTV08,RazRosen10}. 

\medskip

\subsection{Previous work}

In this paper we focus on parallel repetition of \emph{games with entangled players}. The only two previous results in
this area are for two special classes of games. First, Cleve et al.\ showed that for the class of \emph{XOR games}
(i.e., games with binary answers in which the referee's decision is based solely on the XOR of the two answers), {\em
perfect} parallel repetition holds~\cite{CleveSUU07}. This means that the entangled value of an $\ell$-parallel
repeated game is exactly the $\ell$-th power of the entangled value of the original game. Parallel repetition has also
been shown to hold for the more general (but still quite restricted) class of \emph{unique games}~\cite{KempeRT:unique}
(i.e., games where the referee applies some permutation to the answers of the second player and accepts if and only
they match those from the first player). One might also add a third result by Holenstein~\cite{Holenstein07}, who
proved a parallel repetition theorem for the so-called \emph{no-signaling value}; since the no-signaling value is an
upper bound on the entangled value, this can sometimes be used to upper bound the entangled value of repeated games.
However, there is in general no guarantee regarding the quality of this upper bound, and in many cases (e.g., all
unique games) the no-signaling value is always $1$, making it useless as an upper bound on the entangled value.

It is important to note that in these results the entangled value of
the parallel repeated game is never analyzed directly; instead, one uses a ``proxy" such as
a semidefinite program~\cite{CleveSUU07,KempeRT:unique} or the no-signaling value~\cite{Holenstein07},
whose behavior under parallel repetition is well understood. Moreover, in all these
cases, the proxy's value is efficiently computable. This unfortunately gives a very strong indication that
such techniques cannot be extended to deal with general games. Indeed, it is known that
it is NP-hard to tell if the entangled value of a given game is $1$ or not~\cite{KempeKMTV2008,ItoKM09};
hence, unless P=NP, for any efficiently computable upper bound on the entangled value, there are necessarily
games whose entangled value is strictly less than $1$ yet for which that upper bound is $1$
(and such games can often be exhibited explicitly without relying on P$\ne$NP).
We note that some of the early parallel repetition results for the
classical value~\cite{FeiLov92STOC} followed the same route
(of upper bounding the value by a semidefinite program) and were limited
to special classes of games for the exact same reason.

\medskip

To summarize, no parallel repetition result (not even one with very slow decay) is known for the entangled
value of general games, and, moreover, the known techniques are unlikely to extend to this case.
Hence the natural question:

\begin{quote}
\emph{Can parallel repetition decrease the entangled value of games? If so, can we bound the rate of
decrease?}
\end{quote}

In parallel to work on the parallel repetition problem, the related question of {\em product testing} arose in the context of error amplification for PCPs~\cite{Dinur2006,Dinur2008,Impagliazzo2008,Impagliazzo2009}.
Roughly speaking, the question here is to design tests by which a referee can check that the players play according to a \emph{product
strategy}, i.e., answer each question independently of the other questions (as one would expect from an honest
behavior). Note that if the players are constrained to follow a product strategy, then their maximum winning probability
must necessarily go down exponentially, hence the connection to the parallel repetition question.
The result of Feige and Kilian~\cite{Feige2000} mentioned above in fact also shows that the strategy of the
players must have some product structure, and recently there has been lots of renewed interest in this question
leading to much stronger product testers~\cite{Dinur2010}.
 In the case of entangled players, however,
absolutely nothing was known:

\begin{quote}
\emph{Is there a way to test if the strategy of entangled players is in some sense close to a product strategy?}
\end{quote}

\subsection{Our results}

In this work we answer both questions in the affirmative, and our main result can be informally stated as follows.

\begin{theorem}[informal]\label{thm:maininformal}
For any $s<1$, $\delta>0$, and entangled game $G$, there is a corresponding $\ell$-parallel repeated game $G'$, where $\ell =
\poly((1-s)^{-1},\delta^{-1})$, such that if the value of $G$ is less than $s$ then the value of $G'$ is at most
$\delta$, whereas if the value of $G$ is $1$ then this also holds\footnote{See the discussion following
the theorem for some caveats.} for the repeated game.
\end{theorem}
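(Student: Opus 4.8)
The plan is to follow the strategy of Feige and Kilian~\cite{Feige2000}, adapted to the entangled setting. The key conceptual obstacle is that, classically, one argues about the players' answer functions on a ``typical'' subset of coordinates by conditioning on events; with entangled players there are no answer functions, only measurements on a shared state, and conditioning on answers in some coordinates disturbs the state in the remaining ones. So the first step is to set up the right framework: fix an optimal strategy for the repeated game $G'$ (shared state $\ket{\Psi}$ and measurement operators for each player on each $\ell$-tuple of questions), introduce the state-dependent norms $\np{\cdot}$ and $\npf{\cdot}$ from the macros, and use them to control how much the post-measurement state in the ``live'' coordinates drifts when we condition on the outcomes in the ``dummy'' coordinates. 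The point of interspersing a large fraction of dummy rounds (where any answer is accepted) is precisely to make this drift small on average, so that the players' effective strategy on the live coordinates is, in the state-dependent norm, close to a strategy that ignores the dummy questions entirely.

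Next I would extract from a good strategy for $G'$ a collection of strategies for single copies of $G$. The natural object is: condition on a fixed transcript in the dummy coordinates, obtain a residual strategy on the live coordinates, and argue that on average over the choice of which coordinates are live and over the dummy transcript, the residual strategy wins each live round with probability close to the value of $G'$ raised to the appropriate power — contradicting $v(G) < s$ if $v(G')$ is too large. To make this go through one needs the measurement operators restricted to a single live coordinate to be genuinely well-defined measurements (completeness, near-orthogonality of the different outcome operators) rather than arbitrary positive operators; this is where the orthogonalization lemma advertised in the abstract enters. The plan is to invoke it to replace the natural but messy ``marginal'' operators one gets from the repeated strategy by honest projective (or near-projective) measurements, at the cost of an error controlled by the state-dependent norms, which in turn is small because of the dummy rounds.

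The main obstacle, and the step I expect to be the most delicate, is the quantitative bookkeeping that ties these pieces together: one must show that the total error accumulated — from conditioning on dummy transcripts, from the orthogonalization step, and from summing over the $O(\ell)$ live coordinates — stays below $\delta$ while $\ell$ remains $\poly((1-s)^{-1},\delta^{-1})$. Classically this is a union-bound-plus-averaging argument; in the entangled case each ``union bound'' is really a triangle inequality in $\np{\cdot}$ or $\npf{\cdot}$, and the non-commutativity of the operators means that seemingly innocuous rearrangements cost additional error terms that must be absorbed. Concretely, I would: (i) prove the orthogonalization lemma as a standalone operator statement; (ii) prove a ``drift'' lemma quantifying the disturbance to the live-coordinate state from measuring the dummy coordinates, with a bound that improves as the fraction of dummy rounds grows; (iii) combine (i) and (ii) to show that a strategy winning $G'$ with probability $> \delta$ yields, by averaging, a single-copy strategy winning $G$ with probability $> s$; and (iv) handle the $v(G)=1$ case separately, noting (per the promised caveats) that one cannot in general certify $v(G')$ stays bounded away from $1$, only that the reduction is vacuous there. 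The soundness analysis in (iii) is where the real work lies.

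For completeness, if the value of $G$ equals $1$ then the players can win every copy, dummy or live, with certainty by simply playing the perfect single-copy strategy in each live coordinate independently and answering arbitrarily in the dummy coordinates, so $v(G') = 1$ as well; the content of the theorem is therefore entirely in the contrapositive of the soundness direction, i.e.\ that $v(G') > \delta$ forces $v(G) \geq s$.
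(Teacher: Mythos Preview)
Your outline misses the central structural step and inverts the role of the dummy rounds. The paper does \emph{not} condition on a transcript in the dummy coordinates and then try to extract a single-copy strategy by averaging. The dummy (``confuse'') rounds are never conditioned on; their sole purpose is to be \emph{averaged} over, so that the marginalized measurement $\{X_{q_S}^{a_S}\}$ on a small set $S$ of \emph{correlated} coordinates is well-defined and stable (this is what Claims~\ref{claim:exptrace2} and~\ref{claim:exptrace3} provide, and is the correct analogue of your ``drift lemma''). The conditioning happens instead on a block $(R,q_R,a_R)$ of the game/consistency coordinates, and the key question is whether the marginalized strategy on $R$ is \emph{predictable}. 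Classically one measures this by entropy of the answer function; quantumly there is no answer function, and the paper's replacement is the \emph{collision probability} $P_{\text{col}}(q_R)$: the probability that two sequential applications of the marginalized measurement on $\ket{\Psi}$ yield the same $a_R$. This is the object on which the whole argument pivots, and it does not appear in your plan.

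With $P_{\text{col}}$ in hand, the proof is a dichotomy (Lemma~\ref{lem:alive}): there is a scale $r^*$ at which either most blocks $(R,q_R)$ are \emph{dead} (small $P_{\text{col}}$), or the alive blocks are $(1-\eta)$-\emph{serial} (adding one more coordinate barely decreases $P_{\text{col}}$). Dead strategies fail the projection/consistency checks outright (Claim~\ref{claim:deadsuccess}) --- no orthogonalization is needed or possible for them. The orthogonalization lemma is applied \emph{only} to serial blocks, and it is precisely the serial condition that supplies the almost-orthogonality hypothesis the lemma requires (Claim~\ref{claim:orthogonalgen0}); your plan to invoke orthogonalization globally to ``fix the marginal operators'' has no hypothesis to feed it. Finally, one does not extract a single-copy strategy with value $>s$; rather, the block-diagonal form $\Pi_{q_1}^{a_1}\cdots\Pi_{q_g}^{a_g}$ obtained from serial blocks is used directly to bound the players' success on the game rounds by a Chernoff-type estimate (Claim~\ref{claim:prodsuccess}), since on those rounds Bob is effectively playing a sequential strategy. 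Without the dead/serial dichotomy and the collision-probability telescoping that produces $r^*$, step~(iii) of your plan has no mechanism to separate the cases, and the bookkeeping cannot get started.
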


The dependency of $\ell$ on $\delta$ in our theorem is polynomial, whereas as we already mentioned it is known that in some cases this dependence can be made poly-logarithmic (and this is certainly the case if the players are assumed to play independently). While a poly-logarithmic dependence is important in some applications for which one would like to perform amplification up to an exponentially small value, in many cases the main use of parallel repetition is to amplify a small ``gap'' between value $1$ and value $1-1/poly(|G|)$ to a constant gap, say between $1$ and $1/2$. In this case the polynomial dependence of $\ell$ on $(1-s)^{-1}$ that we obtain is optimal (up to the exact value of the exponent). 

In the course of the proof of this theorem we also establish that the player's strategies have a certain ``serial'' or ``product'' structure (more on this in the proof ideas and techniques section below). The informal statement above hides some details, which we now discuss.
The kind of parallel repetition we perform depends on the structure of the game $G$, and we distinguish whether it is a projection game or not.

\paragraph{Repetition for projection games.} If $G$ is a projection game, then the repeated game is obtained by independently playing the original
$G$ on a subset of the repetitions, and playing dummy rounds in the other repetitions. We note that projection games
form a wide class of games that captures most of the games one typically encounters in the classical literature
(see~\cite{Rao:parallel}).

If, in addition, the game happens to be a \emph{free} game (i.e., a game in which the
referee's distribution on question pairs is a product distribution), then the dummy questions are no longer
needed and hence our analysis applies to the
\emph{standard} $\ell$-fold repetition. 

\paragraph{Repetition for general games.} If the game $G$ does not have the projection property, then it is necessary to add a number of
\emph{consistency} rounds to the repetition. In those rounds the referee sends identical questions to the players, and expects identical answers. As before, the other rounds of the repetition are either the game $G$ or dummy rounds.  The consistency questions are added to play the role of the  projection constraints.

This kind of repetition raises the following issue\footnote{This is why we treat the projection case separately,
despite it leading to similar decay.}: namely, it is not obvious that honest entangled players can answer the consistency
questions correctly. This implies that, even if the original game had value $1$, players might not be able to succeed in the consistency questions and hence the value of the repeated game might not equal $1$ anymore. This may or may not be an issue depending on where the original game comes from. In many cases it is known that, if there is a perfect strategy, it does not require any entanglement at all, or it can be achieved using the maximally entangled state. In both cases it is not hard to see that players will be able to answer consistency questions perfectly, and hence our result holds. Because of this we regard this issue as a minor one; however it might be important in some contexts.

\subsection{Proof idea and techniques}

We focus on the case of projection games, as the proof of the other cases does not present additional challenges. The
starting point of our proof is the work of Feige and Kilian~\cite{Feige2000}, for which the following intuition can be
given\footnote{We refer to Ryan O'Donnell's excellent lecture notes~\cite{O'Donnell2005,O'Donnell2005a} for a helpful exposition of Feige and Kilian's proof.}. Our goal as the referee is to force the players to use a product strategy, preventing any elaborate cheating
strategies. In other words, we want to make sure that the player chooses his answer to the $i$th question based only on
that question and not on any of the other $\ell-1$ questions. Towards this end, the referee chooses a certain
(typically large) fraction of the $\ell$ question pairs to be independently distributed \emph{dummy questions}, the
answers to which are ignored. These dummy questions are meant to confuse the players: if they were indeed trying to
carefully choose their answer to a certain question by looking at many other questions, now most of these other
questions will be completely random and uncorrelated with the other player's questions, so that such a strategy cannot
possibly be helpful.

In more detail, Feige and Kilian prove the following dichotomy theorem on the structure of single-player repeated
strategies (that is, maps from $\ell$-tuples of questions to $\ell$-tuples of answers): either the strategy looks rather {\em random} (in which case the players cannot win the game with good
probability --- this is where the projection property is used) or it is almost a \emph{serial} or \emph{product}
strategy, i.e., the answer to each question is chosen based on that question only (in which case the player is playing
the rounds independently, and his success probability will suffer accordingly).

Our proof follows a similar structure. However, an important challenge immediately surfaces: the proof in~\cite{Feige2000}, and indeed \emph{all} proofs of parallel repetition theorems or direct product tests, make the important initial step of assuming that the player's strategies are deterministic (which is easily seen to hold without loss of generality). And indeed, it is not at all trivial to extend those proofs to even the randomized setting without making this initial simplifying assumption. To give a simple example, an important notion in Feige and Kilian's proof is that of a \emph{dead} question --- simply put, a question to which the player does not give any majority answer, when one goes over all possible ways of completing that specific question into a tuple of questions for the repeated game. It is easily seen that, in the case of a deterministic strategy, dead questions are harmful, as the players are unlikely to satisfy the projection property on them. However, it is just as easily seen that for most randomized strategies, good or bad, \emph{all} questions are dead.

This illustrates the kinds of difficulties that one encounters while trying to show parallel repetition in the case of entangled players, when one cannot simply ``fix the randomness''. The issue we just raised is not too hard to solve, and others are more challenging.
Indeed the main difficulty is to define a proper notion of {\em almost serial} for operators, which would in particular incorporate the inherent
randomness of quantum strategies. It turns our that the right notion is the notion of consecutive measurements (rather
than tensor products of measurements for each question, a tempting but excessively strong possibility).
Based on a quantum analogue of Feige and Kilian's dichotomy theorem, we are able to show that the almost serial condition induces a condition of {\em almost
orthogonality} on the player's operators. At this point we need to prove a genuinely quantum lemma, which lets us extract a \emph{product} strategy from the almost-orthogonal condition. This novel {\em orthogonalization lemma} is at the heart of our proof.
We obtain that the players approximately perform a series of consecutive measurements, each depending
only on the current question. An upper bound on the value of the repeated game then follows.

\paragraph{Organization of the paper.} We start with a few definitions, including a description of the form of the repeated games that we consider, in Section~\ref{sec:prelim}. We then give a high-level overview of the structure of the proof, and the main ideas governing it, in Section~\ref{sec:overview}. Section~\ref{sec:meproof} contains the proof of our main theorem. Finally, Section~\ref{sec:blockdiag} contains the proof of an important technical component of our proof: an approximate joint block-diagonalization of positive matrices which are close to being orthogonal. Appendix~\ref{sec:exp} contains a few additional useful technical facts.

%%%%%%%%%%%%%%%%%%%%%%%%%%%%%%%%%%%
%%%%%%%% Preliminaries
%%%%%%%%%%%%%%%%%%%%%%%%%%%%%%%%

\section{Preliminaries}\label{sec:prelim}

\subsection{Games}\label{sec:games}

In this paper we study two-player one-round games.
Let $Q$ and $A$ be finite sets. An entangled game (or simply game) can be defined as follows.

\begin{definition} An entangled game $G = (V,\pi)$ is given by a function $V \colon A^2
\times Q^2 \rightarrow \{0,1\}$ and a distribution $\pi \colon Q^2 \to [0, 1]$. The referee samples questions $(q',q)$ according to $\pi$, and sends $q'$ to the first player and $q$ to the second player. He receives back answers $a',a$ respectively. He accepts those answers if and only if
$V(a',a \mid q',q)=1$.  The value of the game is
 $$\omega^* (G) = \sup_{\ket{\Psi},A_q,B_q}
 \,\sum_{(q', q) \in Q^2}\,\sum_{ (a',  a) \in A^2}\,
\pi(q',q) V(a',a|q',q)\, \bra{\Psi} A_{q'}^{a'} \otimes B_{q}^a \ket{\Psi}$$
where the supremum is taken over all finite-dimensional Hilbert spaces $\mathcal{H}$, all a priori shared states $\ket{\Psi}\in\mathcal{H}$ and all Projective Operator-Valued Measurements (POVMs)\footnote{The POVM condition states that each $A_{q'}^{a'} \geq 0$, and $\sum_{a'} A_{q'}^{a'} = \Id$.} $A_{q'} =
\{A_{q'}^{a'}\}_{a' \in A}$ and  $B_{q} = \{B_{q}^{a}\}_{a \in A}$ on $\mathcal{H}$. 
\end{definition}

We note that by standard purification techniques (see~\cite{Cleve2004}) one can assume that for each question $q$ each player performs a projective
  measurement with outcomes in $A$ (i.e., $\sum_{a \in A} A^a_q=\Id$ and
  $(A_q^a)^\dagger=A_q^a=(A_q^a)^2$).

   We will be interested in some special classes of games.
  
  \begin{definition} A game $= (V,\pi)$ is called a 
  \begin{itemize}
  \item \emph{Projection game} if for every $q',q\in Q$ and $a' \in A$, there is a unique $a\in A$ such that $V(a',a|q',q)=1$.
  \item \emph{Free game} if $\pi = \pi_A \times \pi_B$ is a product distribution.
  \item \emph{Symmetric game} if $\pi$ is symmetric, and for any $q',q,a',a$ we have $V(a',a|q',q) = V(a,a'|q,q')$.
  \end{itemize}
\end{definition}

\subsection{Repeated games}

We consider two different types of repeated games. The first one, originally used by Feige
and Kilian, applies to projection games, and we describe it in Definition~\ref{def:fkrep}. The second type of
repetition applies to consistency games, and is closer to the direct product testing technique originally introduced by
Dinur and Reingold~\cite{Dinur2006}; we explain it in Definition~\ref{def:drrep}.

\begin{definition}[Feige-Kilian repetition]\label{def:fkrep}
Let $\ell$ be any integer, and define $C_1:=\ell^{1/2}$ and $C_2:=\ell-C_1$. Given a two-player projection game $G = (\pi,V,Q,A)$, its $\ell$-th Feige-Kilian repetition is the following game $G_{FK(\ell)}$:
\begin{itemize}
\item The referee picks a random partition $[\ell] = M \cup F$, where $|M|=C_1$ and $|F|=C_2=\ell-C_1$. Indices in $M$ will be
called ``game'' indices, while indices in $F$ will be called ``confuse'' indices.
\item The referee picks $(q_M',q_M) \sim_{\pi^{C_1}} (Q\times Q)^{C_1}$.
 \item He picks $(q_F',q_F)
\sim_{(\pi_A\times\pi_B)^{C_2}} (Q\times Q)^{C_2}$, where $\pi_A$ is the marginal of $\pi$ on the first player, and
$\pi_B$ the marginal on the second player. \item The referee sends the questions to the players (without specifying
which questions are of which type). On game questions he verifies that the original game constraint is satisfied. He
accepts any answers to confuse questions.
\end{itemize}
\end{definition}

\begin{definition}[Dinur-Reingold repetition]\label{def:drrep}
Let $\ell$ be any integer, and define $C'_1:=\ell^{1/2}$, $C_1=2C'_1$ and $C_2:=\ell-C_1$. 
Given a two-player symmetric game $G=(\pi,V,Q,A)$, its $\ell$-th Dinur-Reingold repetition is the following game
$G_{DR(\ell)}$:
\begin{itemize}
\item The referee picks a random partition $[\ell] = R\cup G \cup F$, where $|R|=C'_1$, $|G|=C'_1$,
and $|F|=C_2$. Indices in $R$ will be called ``consistency'' indices, those in
$G$ will be called ``game'' indices, and those in $F$ ``confuse'' indices. 
\item The referee picks $C'_1$ questions
$q_{R}\sim_{\pi_A^{C'_1}} Q^{C'_1}$ and sets $q_R' = q_R$, where $\pi_A$ is the marginal of $\pi$ on the first player
(since we assumed $G$ was symmetric, this is the same as $\pi_B$, the marginal on the second player). \item The referee
picks $C'_1$ pairs of questions $(q_{G}',q_{G})\sim_{\pi^{C'_1}} (Q\times Q)^{C'_1}$.
\item He picks $(q_F',q_F)
\sim_{(\pi_A\times\pi_B)^{C_2}} (Q\times Q)^{C_2}$.
 \item The referee sends the questions to the players (without
specifying which questions are of which type). On consistency questions he verifies that both answers, from Alice and
from Bob, are identical. On game questions he verifies that the original game constraint is satisfied. He accepts any
answers to confuse questions.
\end{itemize}
\end{definition}

Note that, if a game $G$ has value $1$, then its Dinur-Reingold repetition does not necessarily also have value $1$, as the player's optimal strategy in $G$ might not be \emph{consistent}. A consistent strategy is one in which whenever the players are asked the same question they provide the same answer with certainty. This may not always hold of an optimal strategy; nevertheless the following lemma shows that we can assume it holds in some natural settings.

\begin{lemma}[Lemmas~3 and~4 in~\cite{KempeKMTV2008}]\label{lem:consistentplayers}
Let $G=(V,\pi)$ be an arbitrary $2$-player entangled game. Then there exists a game $G'=(V',\pi')$ \emph{of the same classical and quantum values} with twice as many questions, and such that $\pi'$ and $V'$ are
symmetric under permutation of the variables. Moreover, given
any strategy $P_1,\ldots,P_N$ with entangled state $\ket{\Psi}$ that wins $G$ with probability $p$, there exists
a strategy $P'_1,\ldots, P'_N$ with entangled state $\ket{\Psi'}$ that wins $G'$ with probability $p$ and is such that
$P'_{1}=\cdots = P'_{k}$ and $\ket{\Psi'}$ is symmetric with respect to the provers $1,\ldots,k$. In addition, if $\ket{\Psi}$ was a maximally entangled state then $\ket{\Psi'}$ is also. 
\end{lemma}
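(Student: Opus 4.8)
The plan is to prove this via the standard \emph{symmetrization} construction (it is exactly Lemmas~3 and~4 of~\cite{KempeKMTV2008}), which I now describe. Given $G=(V,\pi)$ with question set $Q$, let $G'$ have question set $Q'=Q\times\{A,B\}$, the second coordinate being a \emph{role label}, so that $|Q'|=2|Q|$. The referee of $G'$ draws $(q^A,q^B)\sim\pi$, flips a fair coin, and either sends $(q^A,A)$ to the first player and $(q^B,B)$ to the second, or sends $(q^B,B)$ to the first and $(q^A,A)$ to the second; it accepts iff the answers $a^A,a^B$ given to the $A$- and $B$-labelled questions satisfy $V(a^A,a^B\mid q^A,q^B)=1$ (on question pairs that do not carry one label of each kind, which have probability $0$, define $V'$ arbitrarily so as to keep symmetry). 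Since acceptance and the question distribution refer only to the role labels and not to which physical player holds them, $\pi'$ and $V'$ are symmetric under exchanging the two players.

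The first step is the easy direction of the value equalities. Any strategy for $G'$, conditioned on the coin landing on its first outcome, yields a legal strategy for $G$ (the first player uses its $A$-labelled operators, the second its $B$-labelled operators, on the same shared state), so the conditional acceptance probability is at most $\omega^*(G)$; conditioned on the other outcome it yields a legal strategy for the game obtained from $G$ by relabelling the two players, which has the same entangled value $\omega^*(G)$. Averaging gives $\omega^*(G')\le\omega^*(G)$, and identically $\omega(G')\le\omega(G)$ for the classical value. The reverse inequalities will follow from the strategy transformation below.

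Next I would carry out the strategy transformation, which is the heart of the lemma. Let $\ket{\Psi}\in\mathcal H_1\otimes\mathcal H_2$ with projective measurements $\{A_q^a\}$ on $\mathcal H_1$ and $\{B_q^a\}$ on $\mathcal H_2$ (projectivity is w.l.o.g.\ by the purification remark following the definition of a game) win $G$ with probability $p$. Take two copies of this state, on $\mathcal H_1\otimes\mathcal H_2$ and on $\mathcal H_1'\otimes\mathcal H_2'$, let $\ket{\Psi'}=\ket{\Psi}\otimes\ket{\Psi}$, and give registers $\mathcal H_1\otimes\mathcal H_2'$ to the first player and $\mathcal H_2\otimes\mathcal H_1'$ to the second. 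Both players run the single strategy $P'$: upon receiving $(q,A)$, measure the ``$\mathcal H_1$-type'' register (namely $\mathcal H_1$ for the first player, $\mathcal H_1'$ for the second) with $\{A_q^a\}$; upon receiving $(q,B)$, measure the ``$\mathcal H_2$-type'' register ($\mathcal H_2'$ for the first player, $\mathcal H_2$ for the second) with $\{B_q^a\}$. Then $P'_1=P'_2$, and the unitary swapping $\mathcal H_1\leftrightarrow\mathcal H_1'$ and $\mathcal H_2\leftrightarrow\mathcal H_2'$ exchanges the two players' registers while leaving $\ket{\Psi'}$ invariant, so $\ket{\Psi'}$ is symmetric. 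On the first coin outcome only the first copy is touched and the players play precisely the original game on $\ket{\Psi}$, winning with probability $p$; on the second outcome only the second copy is touched and, up to relabelling the two players, they again play precisely the original game on $\ket{\Psi}$, winning with probability $p$; the average is $p$. This gives $\omega^*(G')\ge\omega^*(G)$ (and classically $\omega(G')\ge\omega(G)$), completing the value equalities, and produces the promised symmetric strategy with $P'_1=P'_2$.

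Finally, for the maximally entangled claim: if $\ket{\Psi}$ is the maximally entangled state of local dimension $d$, then across the first-player/second-player cut $\ket{\Psi'}$ is $\ket{\Psi}_{\mathcal H_1\mathcal H_2}\otimes\ket{\Psi}_{\mathcal H_1'\mathcal H_2'}$ with the first player holding $\mathcal H_1\otimes\mathcal H_2'$ and the second holding $\mathcal H_2\otimes\mathcal H_1'$, i.e.\ a tensor product of two maximally entangled states of dimension $d$, which is the maximally entangled state of dimension $d^2$. The $N$-prover/$k$-prover phrasing of the statement is obtained by applying the same construction to the $k$ relevant coordinates; for the two-player games in this paper the case $N=k=2$ above is all that is needed. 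I expect no genuinely hard obstacle here: the only point requiring care is the register bookkeeping in the strategy transformation --- verifying that $P'$, run by either player, reproduces the original operators \emph{exactly} on the appropriate copy of $\ket{\Psi}$ so that no loss is incurred, and that the two coin branches together recover exactly $\omega^*(G)$ (resp.\ $\omega(G)$).
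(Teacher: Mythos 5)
Your proof is correct and takes exactly the approach the paper intends: the paper gives no proof of its own, deferring entirely to Lemmas~3 and~4 of the cited reference, and your argument faithfully reconstructs that standard symmetrization — the role-labelled question set $Q\times\{A,B\}$, the fair coin, the two-copy state $\ket{\Psi}\otimes\ket{\Psi}$ with the cross-assignment of registers, the swap symmetry, and the observation that a tensor product of maximally entangled states is maximally entangled. The register bookkeeping showing $P'_1=P'_2$ and that each coin branch recovers the original game on one copy of $\ket{\Psi}$ is handled correctly, so there is nothing to add.
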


This lemma shows that, if $G$ is any game, then we may symmetrize it and assume that the provers are also playing according to a symmetric strategy. In particular, if $G$ had value $1$, and the optimal strategy used either no entanglement or a maximally entangled state, then this also holds of the optimal strategy in the symmetrized game. Such a strategy is automatically consistent.

%%%%%%%%%%%%%%%%%%%%%%%%%%%%%%%%%%%%%%
%%%%% Proof overview
%%%%%%%%%%%%%%%%%%%%%%%%%%%%%%%%%%%

\section{Proof overview}\label{sec:overview}

We first give a formal account of our results in the next section, before proceeding to give an overview of their proof in Section~\ref{sec:pfoverview}.

\subsection{Results}
We first state our main theorems. They refer to the two types of repetition of an entangled game $G$ defined in the previous section,
its $\ell$-th \emph{Feige-Kilian repetition} $G_{FK(\ell)}$, and its $\ell$-th \emph{Dinur-Reingold} repetition $G_{DR(\ell)}$. Both types of repeated games are made of $\ell$ independent rounds, played in parallel. Some of these rounds
consist of independent repetitions of $G$, while others are either \emph{confuse} or \emph{consistency} rounds, containing
simple tests independent of the original game (except for the distribution with which questions are chosen in those
rounds). Our first result pertains to projection games.

\begin{theorem}\label{thm:mainproj}
There exists a constant $c\geq 1$ such that, for all $s<1$ and $\delta>0$ there is a $\ell = O((\delta^{-1}\,(1-s)^{-1})^{c})$ such that, if 
 $G$ is a projection game with value $\omega^*(G) \leq s$,
then the entangled value of the game $G_{FK(\ell)}$ is at most $\delta$. Moreover, if the value of $G$ is $1$ then the
value of $G_{FK(\ell)}$ is also $1$.
\end{theorem}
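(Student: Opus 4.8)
The plan is to follow the Feige–Kilian strategy, adapted to the quantum setting. First I would fix an optimal (or near-optimal) strategy for the repeated game $G_{FK(\ell)}$: a shared state $\ket{\Psi}$ and, for each player, projective measurements $\{A_{q_1,\dots,q_\ell}^{a_1,\dots,a_\ell}\}$ indexed by an $\ell$-tuple of questions, with outcomes an $\ell$-tuple of answers. The referee's random partition $[\ell]=M\cup F$ and the confuse distribution on $F$ are the key structural devices. The goal is a dichotomy: either the strategy is essentially ``serial'' (the answer to the $i$-th coordinate is produced by a measurement depending only on the $i$-th question, independently of the other coordinates), in which case the players are effectively playing $|M|=C_1=\ell^{1/2}$ independent copies of $G$ and so their success probability on the game coordinates is at most roughly $\omega^*(G)^{C_1}\le s^{\ell^{1/2}}$, which is below $\delta$ once $\ell=\mathrm{poly}((1-s)^{-1},\delta^{-1})$; or the strategy is ``far from serial'', in which case I want to show it must violate the projection constraint on a game coordinate with noticeable probability, again capping the value. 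The two cases are glued together by a threshold argument on an appropriately defined quantity measuring how much the $i$-th outcome depends on the other coordinates.

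Concretely, the key steps in order are: (1) define, for each coordinate $i$ and question $q$, the ``marginal'' measurement obtained by summing out all other outcomes, and use the confuse rounds to argue that, conditioned on a random completion of $q$ into a full tuple, these operators act almost like a fixed POVM depending on $q$ alone — this is the quantum analogue of Feige–Kilian's ``most questions have a stable majority answer''; (2) formalize ``almost serial'' via consecutive (sequential) measurements rather than a tensor product, since a genuine product of per-coordinate measurements is far too strong a conclusion to hope for; (3) show that almost-serial-ness forces an \emph{almost orthogonality} condition on the relevant operators across coordinates, with error controlled by the fraction of confuse rounds; (4) invoke the orthogonalization lemma (the approximate joint block-diagonalization of nearly orthogonal positive matrices, Section~\ref{sec:blockdiag}) to extract genuine orthogonal projections, hence an actual product strategy, paying only a small loss in $\ket{\Psi}$-norm; (5) bound the value of a product strategy on $C_1$ game coordinates by $\omega^*(G)^{C_1}$ and combine with the loss terms; (6) track all the polynomial dependencies to read off the exponent $c$. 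For the ``value $1$'' statement, note that $G_{FK(\ell)}$ only ever imposes the original projection constraints (on $M$) and accepts everything on $F$, and the confuse distribution on $F$ is exactly the product of the marginals of $\pi$, so a perfect strategy for $G$ applied coordinatewise wins every round with certainty; this uses that projection games with value $1$ admit perfect strategies without extra structural assumptions, which is why the projection case is treated separately from the Dinur–Reingold case.

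The main obstacle I expect is step (4), the orthogonalization lemma, together with its interaction with step (3): quantum strategies are inherently randomized, so one cannot ``fix the randomness'' and talk about a deterministic majority answer — as the paper notes, essentially every coordinate is ``dead'' in the deterministic sense. The correct substitute is to phrase everything in terms of operator norms weighted by $\ket{\Psi}$ (the $\np{\cdot}$ and $\npf{\cdot}$ norms in the preamble), and to prove that a collection of positive semidefinite operators $\{X_j\}$ satisfying $\sum_{j\ne k}\np{X_j X_k}^2 \le \eps$ can be simultaneously approximated, up to an error $\mathrm{poly}(\eps)$ in the appropriate norm, by operators that are block-diagonal in a common basis — and then to use this block structure to define the serial measurement. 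Getting the error dependence in this lemma to be polynomial (rather than, say, exponential in the number of operators, which would be fatal since there are $|A|^{C_1}$ of them) is the crux, and it is what ultimately pins down the exponent $c$ and forces the particular choice $C_1 = \ell^{1/2}$ balancing the serial error against the $\omega^*(G)^{C_1}$ gain.
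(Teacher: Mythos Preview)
Your high-level architecture matches the paper: a dead/serial dichotomy, the orthogonalization lemma to pass from ``almost serial'' to an actual product of projective measurements, and then a bound on the product strategy. But there are two structural inaccuracies that, as stated, would prevent the argument from closing.

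First, the product structure cannot be extracted on all $C_1=\ell^{1/2}$ game coordinates, and the bound is not $s^{C_1}$. The error in the product approximation (your step~(4)$\to$(5)) accumulates with the number of coordinates: in the paper's Lemma~\ref{lem:productgen} the error is $O(g^2\eta^{1/(4c_2)})$ where $g$ is the number of coordinates on which you enforce the product form. So you can only afford $g=O\big((1-s)^{-1}\log(1/\delta)\big)$ coordinates, and the resulting bound on the serial case is the Chernoff-type expression $e^{-(1-s-O(\eta^{c'}))^2 g}$, not $s^{C_1}$. This is exactly why the final $\ell$ is polynomial in $\delta^{-1}$ rather than polylogarithmic. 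Relatedly, your worry about dependence on $|A|^{C_1}$ is misplaced: the orthogonalization lemma is applied \emph{per coordinate} to the operators $\{\hat{Y}_{q_i}^{a_i}\}_{a_i\in A}$ indexed by a single answer, yielding projectors $\{\Pi_{q_i}^{a_i}\}_{a_i}$; the product over $g$ coordinates is then built by chaining these (Lemma~\ref{lem:productgen}), and it is this chaining whose error grows with $g$.

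Second, the dichotomy is not a global property of the strategy but is indexed by a \emph{block} $(R,q_R)$: the paper's Lemma~\ref{lem:alive} produces a special size $r^*\le C_1/2$ (via a potential argument on the collision probability~\eqref{eq:pcol}) such that either most blocks $(R,q_R)$ with $|R|=r^*$ are $\eps$-dead, or alive blocks have most of their weight on answers $a_R$ for which $(R,q_R,a_R)$ is $(1-\eta)$-serial. The referee's first $r^*$ game questions pick out a random block; dead blocks are handled by Claim~\ref{claim:deadsuccess} (this is where the projection property is used, matching your ``far from serial violates projection''), and on serial blocks one conditions on $a_R$ and applies the product analysis to an additional random set $G_1\subset G$ of size $g$. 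Your step~(1) also inverts the logic slightly: the confuse rounds are not used to argue that marginals are ``almost a fixed POVM'' directly --- that is the \emph{conclusion} for serial blocks --- but rather to ensure that conditioning on one more coordinate $(i,q_i)$ changes the collision probability by at most $O(C_2^{-1/2})$ in expectation, which is what drives the potential argument.
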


In the case of free projection games, questions to the players are chosen independently, so that the distribution on questions in the confuse rounds of the game $G_{FK(\ell)}$ is exactly the same as that in the original game. The only difference is that in such a round, all answers are accepted, which can only help the players. Hence the direct parallel repetition of $G$ has a smaller value than its Feige-Kilian repetition, which implies the following.

\begin{corollary}
Let $s<1$ and $\delta>0$. Then there is a $\ell = O((\delta^{-1}\,(1-s)^{-1})^{c})$ such that, if $G$ is a free projection game such that $\omega^*(G) \leq s$, then the (direct) $\ell$-fold parallel repetition of $G$ has value at most $\delta$. 
\end{corollary}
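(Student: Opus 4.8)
The plan is to derive the corollary directly from Theorem~\ref{thm:mainproj} by a simple monotonicity argument, so essentially no new technical work is required. First I would fix a free projection game $G = (V, \pi)$ with $\pi = \pi_A \times \pi_B$ and $\omega^*(G) \leq s$, and take $\ell = O((\delta^{-1}(1-s)^{-1})^c)$ as supplied by Theorem~\ref{thm:mainproj}, so that $\omega^*(G_{FK(\ell)}) \leq \delta$. Let $G^{\otimes \ell}$ denote the standard $\ell$-fold parallel repetition. The goal is to show $\omega^*(G^{\otimes \ell}) \leq \omega^*(G_{FK(\ell)})$, whence $\omega^*(G^{\otimes \ell}) \leq \delta$.

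The key observation is that when $G$ is free, the distribution of questions in $G_{FK(\ell)}$ is \emph{identical} to that of $G^{\otimes \ell}$. Indeed, in $G_{FK(\ell)}$ the game indices $M$ receive questions drawn from $\pi^{C_1}$ and the confuse indices $F$ receive questions drawn from $(\pi_A \times \pi_B)^{C_2} = \pi^{C_2}$; since the referee does not reveal the partition $[\ell] = M \cup F$, and since conditioning on the partition the joint question distribution over all $\ell$ coordinates is exactly $\pi^{\ell}$, averaging over the choice of partition still yields $\pi^{\ell}$ — the same as in $G^{\otimes \ell}$. Consequently any strategy (shared state $\ket{\Psi}$ and projective measurements $A_{q_1 \cdots q_\ell}$, $B_{q_1 \cdots q_\ell}$ indexed by $\ell$-tuples) is a legal strategy in \emph{both} games, and its success probability in $G_{FK(\ell)}$ is at least its success probability in $G^{\otimes \ell}$: in $G^{\otimes \ell}$ the acceptance predicate requires $V$ to hold on all $\ell$ coordinates, whereas in $G_{FK(\ell)}$ it only requires $V$ to hold on the $C_1$ game coordinates in $M$, any answers being accepted on $F$. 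Thus, coordinate-wise, the $G_{FK(\ell)}$ acceptance event contains the $G^{\otimes \ell}$ acceptance event, pointwise in the questions and answers. Taking the supremum over strategies gives $\omega^*(G^{\otimes \ell}) \leq \omega^*(G_{FK(\ell)}) \leq \delta$, which is the corollary.

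I do not anticipate a genuine obstacle here; the only point requiring a modicum of care is the claim that the marginal question distribution in $G_{FK(\ell)}$ coincides with $\pi^{\ell}$, which relies precisely on freeness ($\pi_A \times \pi_B = \pi$) so that the confuse-round distribution matches the game-round distribution. Without freeness this fails and one genuinely needs the confuse rounds to differ, which is why the corollary is stated only for free projection games. One should also note in passing that $G^{\otimes \ell}$ has fewer questions per player than the informal ``modified'' repetitions (no partition is sent), but since the partition is never revealed anyway this changes nothing about the value comparison; a strategy for $G^{\otimes \ell}$ is literally a strategy for $G_{FK(\ell)}$ and vice versa. Hence the proof is just the two sentences: the question distributions agree, and the $G_{FK(\ell)}$ predicate is implied by the $G^{\otimes \ell}$ predicate, so the value can only go up — and then invoke Theorem~\ref{thm:mainproj}.
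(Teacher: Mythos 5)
Your proposal is correct and is exactly the argument the paper gives in the paragraph preceding the corollary: freeness makes the question distribution of $G_{FK(\ell)}$ coincide with $\pi^\ell$, and the Feige--Kilian acceptance predicate (checking only the game coordinates) is weaker than the direct-repetition predicate, so $\omega^*(G^{\otimes \ell}) \leq \omega^*(G_{FK(\ell)}) \leq \delta$ by Theorem~\ref{thm:mainproj}. No gaps.
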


Our second result is more general, as it applies to arbitrary games. It only comes with the mild caveat that, in order to preserve the fact that the original game had value $1$ (whenever this indeed holds), it is required that in that case there also exists a perfect strategy which is consistent. 

\begin{theorem}\label{thm:maincons}
There exists a constant $c\geq 1$ such that, for all $s<1$ and $\delta>0$ there is a $\ell = O((\delta^{-1}\,(1-s)^{-1})^{c})$ such that, if 
 $G$ is an arbitrary game with value $\omega^*(G)\leq s$ , then
 the entangled value of the game $G_{DR(\ell)}$ is at most $\delta$. Moreover, if $G$ has a perfect consistent strategy then the value of
$G_{DR(\ell)}$ is also $1$.
\end{theorem}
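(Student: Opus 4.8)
The plan is to reduce Theorem~\ref{thm:maincons} to Theorem~\ref{thm:mainproj}, by observing that a Dinur-Reingold repetition of a general game can be \emph{simulated} inside the analysis of a Feige-Kilian repetition of a related projection game. First I would take an arbitrary game $G=(V,\pi)$ and apply Lemma~\ref{lem:consistentplayers} to obtain a symmetric game of the same value, so that without loss of generality $G$ is symmetric; note also that this preserves the existence of a perfect consistent strategy. Next I would construct from the symmetric $G$ a projection game $G_{cons}$ on the same question set whose constraint is simply the consistency constraint ``$a'=a$'' — this is by definition a projection game. The key point is that the Dinur-Reingold repetition $G_{DR(\ell)}$ is, up to relabelling the blocks, a parallel composition of game rounds of $G$, confuse rounds, and consistency rounds, and the consistency rounds together with the confuse rounds form exactly a Feige-Kilian repetition of $G_{cons}$ (with the confuse questions drawn from $\pi_A\times\pi_B$, which is the product of marginals — matching the confuse distribution in Definition~\ref{def:fkrep}).

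The heart of the argument is then the following: suppose $\omega^*(G)\le s$ and that players have a strategy in $G_{DR(\ell)}$ succeeding with probability exceeding $\delta$. I would argue that, conditioned on the referee's partition and questions, a strategy winning $G_{DR(\ell)}$ with probability $p$ in particular wins the sub-game consisting of the $C_1'$ consistency rounds plus the $C_2$ confuse rounds with probability at least $p$ (the game rounds can only be harder to satisfy, so dropping them can only increase the winning probability). Applying Theorem~\ref{thm:mainproj} to the projection game $G_{cons}$ — which has value $1$ since answering any consistent strategy wins it perfectly — does \emph{not} immediately give a contradiction, since $G_{cons}$ has value $1$. So the reduction must be more careful: I would instead use the \emph{structural} conclusion of the proof of Theorem~\ref{thm:mainproj}, namely that a strategy doing well on the consistency-plus-confuse rounds must be (approximately) a product strategy — the players essentially answer each of the $C_1'+C_2$ non-game rounds by a consecutive sequence of measurements, each depending only on the current question. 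This is precisely the ``almost serial'' conclusion advertised in the proof-idea section. Once the strategy on the game rounds $G$ is pinned down to be (close to) such a product strategy, each individual game round is played essentially independently, and since $\omega^*(G)\le s<1$, the probability of passing all $C_1'$ game rounds decays like $s^{\Omega(C_1')} = s^{\Omega(\ell^{1/2})}$, which is below $\delta$ for $\ell = O((\delta^{-1}(1-s)^{-1})^c)$ with a suitable constant $c$. For the ``moreover'' part: if $G$ has a perfect consistent strategy, the players use it independently in every round; they pass every game round (value $1$), every confuse round (all answers accepted), and every consistency round (by consistency), so $\omega^*(G_{DR(\ell)})=1$.

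I expect the main obstacle to be the structural extraction step — turning ``wins the consistency rounds well'' into ``is approximately a product strategy on \emph{all} the rounds, including the game rounds.'' In the classical Feige-Kilian argument this is the dichotomy between random-looking and serial strategies, and the role of the confuse rounds is exactly to force this dichotomy; quantumly it requires the orthogonalization lemma promised in Section~\ref{sec:overview}. The delicate points are: (i) the confuse rounds must be numerous enough ($C_2 = \ell - C_1 \approx \ell$ dominates) that the product structure is forced with the right quantitative parameters; (ii) one must track how the approximation errors from the orthogonalization lemma accumulate over the $C_1'$ game rounds, ensuring the final bound is still of the claimed polynomial form; and (iii) one must verify that the consistency constraint genuinely plays the role the projection constraint plays in the pure projection case, i.e.\ that it is what lets the referee detect a ``dead'' (non-product) strategy. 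Since the excerpt explicitly states that ``the proof of the other cases does not present additional challenges'' beyond the projection case, I anticipate the formal proof will indeed be a short reduction of this flavor, deferring all the real work to the machinery developed for Theorem~\ref{thm:mainproj}.
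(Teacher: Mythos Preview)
Your proposal is essentially correct and matches the paper's approach: both recognize that consistency questions supply the ``Alice's answer determines Bob's answer'' property needed for the dead-block analysis (Claim~\ref{claim:deadsuccess}), and both reuse the serial/product structural machinery from the projection case to bound the success on the game rounds via Claim~\ref{claim:prodsuccess}.

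The paper's execution is slightly more direct than your framing. It does not introduce an auxiliary projection game $G_{cons}$ or phrase things as a reduction to Theorem~\ref{thm:mainproj}; rather, it simply re-applies Lemma~\ref{lem:alive} and Claims~\ref{claim:deadsuccess}--\ref{claim:prodsuccess} to the full $G_{DR(\ell)}$ strategy, taking $R$ to be a random block of $r^*$ \emph{consistency} rounds and $G_1$ to be a random subset of the \emph{game} rounds. One imprecision in your sketch worth flagging: the structural conclusion of Lemma~\ref{lem:productgen} does not give product form ``on all the $C_1'+C_2$ non-game rounds'' but only on a small random set $G_1$ of size $g=O((1-s)^{-1}\log(1/\delta))$; the whole point of the paper's bookkeeping is to arrange that $G_1$ sits inside the game-round block $G$, so that the $\Pi_{q_i}^{a_i}$ product form is exactly what feeds into Claim~\ref{claim:prodsuccess} with the original game's value $s$. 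Your $G_{cons}$ detour slightly obscures this, since the ``Feige--Kilian repetition of $G_{cons}$'' you describe contains no game rounds of $G$ at all, and you would still need to explain why the product structure extends to those rounds --- which is automatic once you observe (as the paper does) that the dichotomy and serial analysis are statements about the full $\ell$-tuple strategy and are agnostic to how the referee labels the rounds.
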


Lemma~\ref{lem:consistentplayers} shows that the requirement that $G$ has a perfect consistent strategy (which is only a requirement in cases where we are interested in preserving the fact that $G$ might have value $1$) is satisfied for many examples of games, including those for which we know a priori that, if the value of $G$ is $1$, then there is an optimal strategy that either does not use any entanglement at all, or uses the maximally entangled state.

\subsection{Proof overview}\label{sec:pfoverview}

In the remainder of this section we describe the main ideas behind the proof of Theorem~\ref{thm:mainproj} and Theorem~\ref{thm:maincons};
full details can be found in Sections~\ref{sec:meproof} and~\ref{sec:blockdiag}. 
Our goal is to understand
\emph{repeated} quantum strategies, that is, maps $q\in Q^\ell \mapsto \{\,X_q^a\,\}_{a\in A^\ell}$ which map tuples of
questions $q=(q_1,\ldots,q_\ell)$ to projective measurements $\{\,X_q^a\,\}_{a\in A^\ell}$ in dimension $d$. The semantics are that, on receiving questions
$q$, a player measures his share of the entangled state $\ket{\Psi}$ according to $\{\,X_q^a\,\}_{a\in A^\ell}$,
resulting in him sending back answer $a$ with probability $\bra{\Psi} \Id \otimes X_q^a \ket{\Psi}$. Interestingly, most of the proof will be directly concerned with
the measurements $\{\,X_q^a\,\}_{a\in A^\ell}$ themselves (together with the reduced density $\rho = \Tr_A \ket{\Psi}\bra{\Psi}$), without reference to the other player's
measurements or even the underlying game.

We will be interested in a strategy's \emph{marginals}: given a fixed subset of indices $S\subseteq [\ell]$ and a set of questions $q_S$ on the indices in $S$, one can define the marginalized measurement
$$\Big\{ X_{q_S}^{a_S} :\,\rho\,\mapsto\,  \textsc{E}_{q\sim \pi^{[\ell]\backslash S}}\Big[\,\sum_{a\in A^{[\ell]\backslash S}} \,\sqrt{X_{q_Sq }^{ a_S\,a }}\,\rho\, \sqrt{X_{q_Sq }^{ a_S\,a }}\,\Big]\,\Big\}_{a_S \in A^{S}}$$
which corresponds to choosing a tuple $q\in Q^{[\ell]\backslash S}$ by picking the question in each coordinate independently according to some fixed distribution $\pi$,\footnote{We will often drop the reference to $\pi$ and simply write $\Exs{q}{\cdot}$. $\pi$ will be fixed throughout, and later instantiated to the (marginal) distribution on questions from the original game $G$ that is being repeated.}  making the measurement corresponding to the POVM described by $\{\,X_{q_S q}^{a_S a}\,\}_{(a_S, a)\in A^\ell}$, and marginalizing over those answers $a$ corresponding to indices not in $S$.
 
Given that $X$ was a projective measurement, the marginalized strategy is a POVM --- it is not necessarily projective
any more. Our main results will pertain to the structure of such marginalized strategies. We will show that they are either very random (this is formally called \emph{dead} later on, and morally means that the
marginalized strategy is very far from a projective measurement; rather its singular values tend to be small and
spread out), or highly structured (this is called \emph{serial} later on, and after some work we will show that it
implies that the marginalized strategy has somewhat of a product form, i.e.~it can be decomposed as a product
$\Pi_{q_1}^{a_1}\cdots \Pi_{q_{s}}^{a_{s}}$ on a subset of the coordinates). The attentive reader might already see
that once this is proven it will be possible to bound the success probability of both types of strategies in the repeated game; however we
should warn that the exact statements, and their proofs, are quite technical and carry only a fair share of the
intuition we have just given.

\medskip

We proceed to give a few more details on the structure of the proof of our results. It can be divided into three main
steps. The first two steps establish facts about the structure of repeated single-player strategies, and are independent of
the game being played, as well as of the other player's strategy. 

\paragraph{Step 1: A quantum dichotomy theorem.}
In the first step we prove an analogue of Feige and Kilian's
dichotomy theorem~\cite{Feige2000}. The precise statement is given in Lemma~\ref{lem:alive}, and its simple proof very
closely follows that of Feige and Kilian's theorem. Informally, it states that there exists an integer $1\leq r^* \ll
\ell$, such that a tuple of questions $(R,q_R)$, where $R \subseteq [\ell]$ denotes a subset of $r^*$ indices, and
$q_R$ fixed questions in those positions, can be of two types only. Either it is \emph{dead} (case 1 in the lemma), or
it is $(1-\eta)$\emph{-serial}, where $\eta>0$ is a small parameter (case 2 in the lemma). Both types of strategies are precisely defined in Definition~\ref{def:serial}, and the meaning of dead is the easiest to grasp. The
technical definition is simply that the (marginalized) measurement $\{X_{q_R}^{a_R}\}_{a_R\in A^R}$, when performed
twice (sequentially) on the same half\footnote{In fact we will also need to consider the outcome of performing the same measurement simultaneously on the two halves of $\ket{\Psi}$.} of the state $\ket{\Psi}$, is unlikely to produce the same result. This kind of strategy is easily seen to be bad for the players, as is shown in step 3. of the proof. 

Serial strategies are more subtle. In the case of a classical deterministic player, a serial strategy is such that, when
one conditions on the player giving answers $a_R$ to the questions $q_R$ in $R$, the answers to most other questions (not in $R$) are for the
most part determined by the player as a direct function of the corresponding question, i.e. he is playing an honest
product strategy on those coordinates. In the quantum case, we will adopt a seemingly weaker definition, which is that
a strategy is serial on $(q_R,a_R)$ if, in expectation over the choice of an additional question $q_i$ in position $i$, when the marginalized measurement $\{X_{q_R q_i}^{a_R a_i}\}_{(a_R,a_i)\in A^{R\cup\{i\}}}$ is performed twice on the same half of $\ket{\Psi}$, the probability that the same answer $(a_R,a_i)$ is obtained twice is almost as large as the probability that just $a_R$ is obtained twice: conditioned on being consistent on the answers to the questions in $R$, the strategy is also consistent in its answer on a random additional question $q_i$ in position $i$. 

Fleshing out the consequences of this definition to eventually show that it implies something close to the classical definition requires some work, and is the object of the second step of the proof.

\paragraph{Step 2: A product theorem for serial strategies.}
While for a classical deterministic player a serial strategy, as defined in the previous section, is one which decides on the answer $a_i$ to most questions $q_i$ not in $R$ as a function of that question alone, in the quantum setting this is much less clear. The first task is to decide on what one expects from a serial strategy. For instance, one might ask for the measurements to take some ``approximately-tensor''
form; however we find that this is too strong a requirement. Instead, we first show that the serial property implies that the player's measurement operator $\{X_{q_R q_i}^{a_R a_i}\}_{(a_R,a_i)\in A^{R\cup\{i\}}}$ has a certain block-diagonal form, in the sense that\footnote{Note that this ``approximation'' should be taken with a grain of salt; in particular one cannot expect to extract any information about the measurement operators themselves simply by observing statistics of measurement outcomes. Rather, all our estimates will bear on the post-measurement state, resulting from applying the measurement corresponding to $ X_{q_R q_i}^{a_R a_i}$ to one half of $\ket{\Psi}$.}
$$X_{q_R q_i}^{a_R a_i} \,\approx \,\Pi_{q_i}^{a_i}\,X_{q_R q_i}^{a_R a_i}\, \Pi_{q_i}^{a_i}$$
where $\{\Pi_{q_i}^{a_i}\}_{a_i\in A}$ are \emph{orthogonal} projectors; the precise statement is given in Claim~\ref{claim:blocksgen}. Its proof goes through a technical statement about sets of operators which are close to being pairwise orthogonal. That statement, proven in Lemma~\ref{lem:blockdiaggen}, shows the natural fact that such operators are close to having a common block-diagonalization basis.

Once this is shown it is not hard to extend the approximation to a small number of additional questions $q_1,\ldots,q_g$, showing that the corresponding measurement also has a block-diagonal form, this time described by the product of the corresponding projectors $\Pi_{q_1}^{a_1} \cdots \Pi_{q_g}^{a_g}$; a precise statement is given in Lemma~\ref{lem:productgen}.
It is in the precise sense described in that lemma that we can say that a serial strategy has a product form, based on which we can think of the player as playing sequentially on a subset of the coordinates. 

\paragraph{Step 3: Both dead and serial strategies fail the repeated game.} 
In the last step of the proof we show that both types of strategies, dead or serial, must fail in the repeated game with high probability (provided the value of the original game was bounded away from $1$). For the case of dead strategies this is fairly intuitive: since a dead strategy does not assign consistent answers to a certain  subset of the questions $q_R$, this implies that the player's answers in positions $R$ will very much depend on the
questions present in those indices not in $R$; not only that but it will be virtually impossible for the other player
to correlate well with this player's answers on those indices. Here we crucially use the ``projection'', or ``consistency'' rounds of the repeated game in order to show that such strategies will fail in those rounds with high probability. This is proven in Claim~\ref{claim:deadsuccess}. 

The case of serial strategies is slightly harder to analyze, but it boils down to showing that the block-diagonal form we described earlier roughly implies that we can in fact see one of the players as making a sequential measurement governed by the $\Pi_{q_i}^{a_i}$. Since in this case the player's answer to question $q_i$ is decided by applying a projective measurement depending on $q_i$ alone, in case the original game had a value $s<1$ such a strategy will fail in at least a fraction $s/2$ of the ``game'' rounds with high probability, and be caught by the referee provided there are enough such rounds. This is shown in Claim~\ref{claim:prodsuccess}.

Finally note that the ``confuse'' rounds of the repeated game are not used in this stage (and indeed the referee accepts any answers in those rounds), but they are crucial to show the dichotomy lemma and the following claims, which only hold for strategies which have been marginalized over a sufficiently large number of questions; in order to be able to perform this marginalization it is important that questions to the players in the confuse rounds are picked independently.

%%%%%%%%%%%%%%%%%%%%%%%%%%%%%%%%%%%%%%%
%%%% Proof of main theorem
%%%%%%%%%%%%%%%%%%%%%%%%%%%%%%%%%%%%%%%%

\section{Proof of the main theorem}\label{sec:meproof}

In this section we give the proof our main results, Theorems~\ref{thm:mainproj} and~\ref{thm:maincons}. It is divided in three parts. The first, in Section~\ref{sec:dicho}, establishes our ``quantum dichotomy theorem''. The second, in Section~\ref{sec:directprod}, investigates the structure of serial strategies, and shows that they admit a certain block structure. The results in this section are based on our ``orthogonalization lemma'', which is proved separately in Section~\ref{sec:blockdiag}. Finally, in the third part, Section~\ref{sec:succbound}, we use the results from the first two parts to bound the success probability of the players in the repeated game. 

Because of the nature of repeated strategies, which are indexed by large tuples of questions and answers, we are constrained to use rather heavy notation. We explain it in detail in the following section, which can also serve as a reading guide for the statements that are to follow. 

\subsection{Notation}\label{sec:notation}

Recall that for every $q\in Q^\ell$, $\{X_q^a\}_{a\in A^\ell}$ is an arbitrary projective measurement in $d$
dimensions, that is, the $X_q^a$ are projector matrices, and for any fixed $q$ they sum to the identity over $a$. 
The position of the questions (or answers) in a tuple will always be fixed and usually clear from the
context; for example when we write $q = (q_G,q_F)$, where $G,F \subseteq [\ell]$ are sets of indices, it is not necessary that the
questions $q_G$ are placed before the questions $q_F$ in the tuple $q$; rather their position is determined by the
indices in $G,F$. When precision is needed we shall write $(i,q_i)$ to express the fact that question $q_i$ is destined
to appear in the $i$-th position of some tuple $q$. We also write $q_{\neg i}$ to denote $q_1,\ldots ,q_{i-1},q_{i+1},\ldots
,q_\ell$.

 We will often consider \emph{marginalized} POVMs over a certain set $S \subseteq [\ell]$. Given questions $q_S$ indexed by $S$, the marginalized POVM is the POVM indexed by answers $a_S$, which results from applying $\{X_{q_S q}^{a_S a}\}_{a_S a}$ for a random $q \in Q^{[\ell]\backslash S}$, and ignoring the answers $a$ not in $S$. More precisely, given $(S,q_S,a_S)$ it will be convenient to work with the Stinespring representation 
$$\hat{X}_{q_S}^{a_S} := \sum_q\sum_a\, \sqrt{\pi(q)}\, \sqrt{X_{q_S q}^{a_S a}} \otimes \bra{q,a}_E $$
where $E$ is an extra register of the appropriate dimension, and $\pi$ denotes an arbitrary distribution, fixed throughout (it will later be instantiated to the marginal distribution that arises from the original game $G$ that is being repeated). This definition satisfies, for any $\rho \geq 0$, 
$$ \textsc{E}_q \Big[\,\sum_a\, \sqrt{X_{q_S q}^{a_S a}} \,\rho\, \sqrt{X_{q_S q}^{ a_S a}} \,\Big] =  \hat{X}_{q_S}^{a_S} \,(\rho\otimes \Id_E)\, (\hat{X}_{q_S}^{a_S})^\dagger $$
where the identity $\Id_E$ was created on the additional register $E$ introduced in the definition of $\hat{X}_{q_S}^{a_S}$, and the expectation is with respect to the distribution $\pi$. In order to make measurements corresponding to marginalization over different sets $S$, we will assume that the register $E$ is always of large enough dimension, and if necessary $\hat{X}_{q_S}^{a_S}$ is tensored with $\frac{1}{\sqrt{|Q|^{|S|} |A|^{|S|}}} \sum_{q,a} \bra{q,a}$ on the extra $2|S|$ registers. Note that there is nothing in the definitions above that require the questions and answers in $\hat{X}_{q_S}^{a_S}$ to be indexed to the same set, hence we extend them to define $\hat{X}_{q_S}^{a_T}$, for $T\subseteq S\subseteq [\ell]$, in the obvious way. 

For any $\rho \geq 0$, we write $\Tr_\rho(A)$ for $\Tr(A(\rho\otimes\Id_E))$, so that in particular 
$$\Trho\big((\hat{X}_{q_S}^{a_T})^\dagger \hat{X}_{q_S}^{a_T}\big) \,=\, \textsc{E}_q \Big[\,\sum_a\,  \Tr\big(\sqrt{X_{ q_S q}^{a_T a}} \,\rho\, \sqrt{X_{ q_S q}^{a_T a}}\big)\,\Big]\,=\, \Tr\big( X_{q_S}^{a_T}\rho\big)$$
where we define
$$X_{q_S}^{a_T}  \,:=\, \hat{X}_{q_S}^{a_T} (\hat{X}_{q_S}^{a_T})^\dagger \,=\,\textsc{E}_{q\in Q^{[\ell]\backslash S}} \Big[\,\sum_{a\in A^{[\ell]\backslash T}}\, X_{q_S q}^{a_T a}\,\Big]$$
Terms such as $\Tr\big( X_{q_S}^{a_T}\rho\big)$ will frequently appear on the right-hand side of our inequalities, and they should simply be considered as normalization factors, accounting for the (possibly unnormalized) underlying state $\rho$, and the conditioning on a fixed $a_T$. Finally, given $\rho\geq 0$ and a matrix $A$ of appropriate dimension, we introduce the semi-norm
\beq\label{eq:seminorm}
 \|A \|_\rho^2\,:=\, \Tr\big( A \rho^{1/2} A^\dagger \rho^{1/2} \big) 
 \eeq
Note that $\|\cdot \|_\rho$ is definite only if $\rho$ has full rank. We will mostly use this norm for notational convenience. At this point it suffices to observe that it derives from a semi inner-product, so that it satisfies the Cauchy-Schwarz inequality. 

At a first reading it may be helpful for the reader to consider the special case of the totally mixed state $\rho = d^{-1} \Id$; putting the notation in context this corresponds to the players sharing the maximally entangled state. In this case very little of the above is really needed, and in particular $\Trho\big((X_{q_S}^{a_T})^\dagger X_{q_S}^{a_T}\big)$ is simply the normalized trace $\Exs{q}{\sum_{a} d^{-1}\Tr\big( X_{q_S q}^{a_T a} \big)}$. Many of our statements are easier to prove, and to understand, in this setting (the main cause of simplification being the commutation between $\rho$ and the $X$ operators), so that the reader may wish to consider it first. 

\subsection{A quantum dichotomy theorem}\label{sec:dicho}

In this section we prove two important lemmas on the structure of any quantum strategy in a repeated game.
The main lemma, Lemma~\ref{lem:alive}, is the analogue of Lemma~11 in~\cite{Feige2000}. It establishes a
dichotomy between two different types of strategies that a player can use, showing that either the strategy is very
random, or it must have a relatively strong sequential structure. Its proof follows that of the classical
setting without too much added difficulty, provided the definitions are made correctly --- which we now proceed to do. 

A crucial difficulty in adapting Feige and Kilian's argument is to define an appropriate measure of a strategy's \emph{unpredictability}. In the classical case of a deterministic strategy, this can be measured through the entropy of the marginalized distribution on answers; however in the quantum or even the randomized setting such a measure is no longer helpful, as even honest product strategies can be very random, just by being convex combinations of distinct deterministic strategies. Instead, we measure unpredictability as follows.

\begin{definition} Given a strategy $X_q^a$, a state $\rho$, and a fixed set of questions $q_R$ in positions $R\subseteq [\ell]$, define the collision probability of $X$ on $q_R$ as 
\beq\label{eq:pcol}
P_\text{col}(q_R|X,\rho) := \sum_{a_R} P_\text{col}(q_R,a_R|X,\rho)
\eeq
 where
\beq P_\text{col}(q_R,a_R|X,\rho) \,:=\, \Big(\Trho\big( (\hat{X}_{q_R}^{a_R})^\dagger\,\hat{X}_{q_R}^{a_R}(\hat{X}_{q_R}^{a_R})^\dagger\, \hat{X}_{q_R}^{a_R}\big)+\Tr\big( {X}_{q_R}^{a_R}\,\rho^{1/2}{X}_{q_R}^{a_R}\,\rho^{1/2}\big)\Big)\label{eq:pcol2}
\eeq
\end{definition}

To understand this definition, first consider the case when $\rho$ is the totally mixed state $d^{-1}\Id$. In this case both terms inside the summation are equal to the normalized squared Frobenius norm $d^{-1} \|X_{q_R}^{a_R}\|_F^2$. Expression~\eqref{eq:pcol} can be interpreted in two different ways. From an operational point of view, it corresponds to the probability that one obtains twice the same answers when one sequentially performs a measurement using the POVM with elements $\{X_{q_R}^{a_R}\}_{a_R}$. In this sense, $P_\text{col}$ is a measure of the predictability of the strategy $X_q^a$: pick two completions $q,q'$ at random and measure using first $\{X_{q_R q}^{a_Ra}\}_{a_Ra}$ and then using $\{X_{q_R q'}^{a_Ra'}\}_{a_Ra'}$; $P_\text{col}(q_R|X,\rho)$ is the probability of getting twice the same result $a_R$ (and ignoring the other answers $a,a'$). The analytic interpretation is that this is a measure of the entropy of the spectrum of $X_{q_R}^{a_R}$, which is maximized when $X_{q_R}^{a_R}$ is a projector (for a fixed value of the trace).

In case $\rho$ is not the identity, and hence does not commute with the $X_q^a$, we need to adopt the more cumbersome definition~\eqref{eq:pcol} for technical reasons. However, note that the operational interpretation remains --- the first term on the right-hand side of~\eqref{eq:pcol2} is the probability of obtaining the same answer when performing the measurement twice on the \emph{same half} of $\ket{\Psi}$, while the second term is the same, when the measurement is performed on the two \emph{different halves} of $\ket{\Psi}$: indeed, note that $\Tr\big( {X}_{q_R}^{a_R}\,\rho^{1/2}{X}_{q_R}^{a_R}\,\rho^{1/2}\big) = \bra{\Psi} X_{q_R}^{a_R}\otimes (X_{q_R}^{a_R})^T \ket{\Psi}$.\footnote{Note the transpose sign, which indicates that our interpretation is only rigorously correct for the case of real symmetric $X$.}

\medskip

The following lets us make the distinction between the two different types of strategies alluded to above.

\begin{definition}\label{def:serial} We will say that:
\begin{itemize}
\item A block $(R,q_R)$ is $\eps$-\emph{dead} if $P_\text{col}(q_R|X,\rho)\leq\eps$.
If a block is not $\eps$-dead it is $\eps$-\emph{alive}. Moreover, we say that the answer $a_R$ is $\eps$-\emph{alive} if it satisfies 
$$P_\text{col}(q_R,a_R|X,\rho)\, \geq \,\eps\, \Tr\big( X_{q_R}^{a_R}\rho\big)$$
 Note that any $\eps$-alive block has at least one $\eps$-alive answer. Sometimes we will simply say that a block or an answer are alive or dead, leaving the parameter $\eps$ implicit.
\item A block $(R,q_R,a_R)$ is $(1-\eta)$-serial if $a_R$ is alive and the following holds:
\beq
 \Exs{(i,q_i)}{\,P_\text{col}(q_{R},q_i|X,\rho)\,}  \,\geq\, (1-\eta) P_\text{col}(q_R|X,\rho)\label{eq:serial}
\eeq  
\end{itemize}
\end{definition}

\begin{lemma}\label{lem:alive} Assume that $\eps,\eta>0$ are chosen such that $\eta\,\eps^3 > 16\,C_1^{-1/2}$.\footnote{Recall that $C_1,C_2$ are chosen so that $C_1 + C_2 = \ell$: see Definitions~\ref{def:fkrep} and~\ref{def:drrep} for more details.} Then one of the following holds
\begin{enumerate}
\item At least a $(1-\eps)$ fraction of blocks $(R,q_R)$ are $\eps$-dead.
\item At least an $\eps$ fraction of blocks $(R,q_R)$ are $\eps$-alive, and moreover if $(R,q_R)$ is an $\eps$-alive block then
\beq\label{eq:alivenotserial}
\sum_{a_R:\,a_R\text{ alive but }\atop (q_R,a_R)\text{ is not $(1-\eta)$-serial}} \Tr\big( X_{q_R }^{a_R}\rho\big) \,\leq\,\eps/2
\eeq
%\Big(\Trho\big( (X_{q_R}^{a_R})^\dagger\,X_{q_R}^{a_R}(X_{q_R}^{a_R})^\dagger\, X_{q_R }^{a_R}\big)+\Tr\big( X_{q_R\cdot}^{a_R\,\cdot}\,\rho^{1/2}X_{q_R\cdot}^{a_R\,\cdot}\,\rho^{1/2}\big)\Big)  \leq \eps/2$$
i.e. alive answers which are not $(1-\eta)$-serial have a small probability of occurring.
\end{enumerate}
\end{lemma}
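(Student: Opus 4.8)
The plan is to mimic the proof of Lemma 11 in Feige--Kilian \cite{Feige2000}, working with the collision probability $P_\text{col}$ in place of the entropy used there, and exploiting the averaging that comes from the large number $C_1$ of ``game'' (or consistency) indices. The rough idea is a martingale-type argument: think of a block $(R,q_R)$ as being built up one coordinate at a time, and track how $P_\text{col}$ evolves. By the operational interpretation, adding a coordinate $(i,q_i)$ to the block can only decrease the collision probability (performing an extra measurement and requiring an extra answer to match can only make a collision less likely), so $\Exs{(i,q_i)}{P_\text{col}(q_R,q_i|X,\rho)} \le P_\text{col}(q_R|X,\rho)$ always. A block fails to be $(1-\eta)$-serial on $a_R$ precisely when this expected decrease is more than an $\eta$-fraction; since $P_\text{col} \le 1$ and $r^*$ (the block size) is at most roughly $C_1$, such large drops cannot happen too often as we grow the block from size $0$ to size $r^*$ — there is a ``budget'' of total decrease bounded by $1$, so at most $O(1/\eta)$ coordinates can cause an $\eta$-drop, and choosing $r^* \approx C_1^{1/2}$ (or whatever the lemma's implicit choice is) large compared to $1/\eta$ forces most coordinates, hence most blocks obtained by extending a prefix, to be serial.

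More precisely, I would first dispose of case 1: suppose \emph{fewer} than a $(1-\eps)$ fraction of blocks $(R,q_R)$ are $\eps$-dead, i.e. at least an $\eps$ fraction are $\eps$-alive; this is exactly the hypothesis of case 2, so it remains to prove \eqref{eq:alivenotserial} for every $\eps$-alive block. Fix such a block $(R,q_R)$. For each alive answer $a_R$, consider the quantity $P_\text{col}(q_R,a_R|X,\rho)$ and the conditional version obtained by further fixing a random extra coordinate. The key monotonicity fact — which I would state and prove as a small sub-claim using the two-term structure of \eqref{eq:pcol2}, Cauchy--Schwarz for $\|\cdot\|_\rho$, and the fact that $\{X_{q_Rq_i}^{a_Ra_i}\}_{a_i}$ sums (over $a_i$) to $X_{q_R}^{a_R}$ after marginalization — is that for each $a_R$,
\beq
\Exs{(i,q_i)}{\,P_\text{col}(q_R,q_i,a_R|X,\rho)\,}\,\le\, P_\text{col}(q_R,a_R|X,\rho),
\eeq
where $P_\text{col}(q_R,q_i,a_R|X,\rho) := \sum_{a_i} P_\text{col}(q_R,q_i,a_Ra_i|X,\rho)$. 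Summing over the alive $a_R$ and comparing with the definition of $(1-\eta)$-serial, an answer $a_R$ is alive-but-not-serial exactly when the per-answer drop exceeds $\eta\,P_\text{col}(q_R,a_R|X,\rho) \ge \eta\eps\,\Tr(X_{q_R}^{a_R}\rho)$ (using the aliveness lower bound). Thus $\sum_{a_R \text{ alive, not serial}} \Tr(X_{q_R}^{a_R}\rho)$ is at most $(\eta\eps)^{-1}$ times the total expected drop $\Exs{i}{P_\text{col}(q_R|X,\rho) - P_\text{col}(q_R,q_i|X,\rho)}$.

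The final ingredient is a global averaging over how the block $(R,q_R)$ was formed. View $R$ of size $r^*$ as the result of exposing coordinates $i_1,\dots,i_{r^*}$ one at a time with fresh random questions; $P_\text{col}$ along this sequence is a nonincreasing $[0,1]$-valued process, so the expected sum of its increments (decreases) over all $r^*$ steps is at most $1$. Hence the \emph{average} over a random last step of the expected one-step drop is at most $1/r^*$. Choosing $r^*$ comparable to $C_1^{1/2}$ and recalling the hypothesis $\eta\eps^3 > 16\,C_1^{-1/2}$, this makes the average drop at most something like $\eps^3\eta/16$, and then $(\eta\eps)^{-1}$ times it is at most $\eps^2/16 \le \eps/2$, which is \eqref{eq:alivenotserial} — modulo a union bound / Markov step to pass from ``average over blocks'' to ``every $\eps$-alive block'' (this is where one loses a factor of $\eps$ or so and why aliveness, i.e. $P_\text{col} \ge \eps$, is needed to make the per-block bound meaningful rather than vacuous). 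The main obstacle, and the place I expect to spend real effort, is the monotonicity sub-claim with the \emph{non-commuting} $\rho$: the clean operational picture is exact only for the totally mixed state, and for general $\rho$ one has to massage the two terms of \eqref{eq:pcol2} — in particular the ``cross-half'' term $\Tr(X_{q_R}^{a_R}\rho^{1/2}X_{q_R}^{a_R}\rho^{1/2})$ — using the Stinespring operators $\hat X$, the semi-norm $\|\cdot\|_\rho$, and its Cauchy--Schwarz inequality, to get the correct one-sided inequality with no uncontrolled error terms. Everything else should follow Feige--Kilian's counting argument fairly closely.
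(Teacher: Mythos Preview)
Your overall architecture is the same as the paper's — a Feige--Kilian style budget argument on $P_\text{col}$, finding a good block size $r^*$ where the expected one-step drop is small, then converting ``large drop'' into ``alive-but-not-serial has small mass''. The contradiction at the end is organised exactly as the paper does it.

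The substantive divergence is your ``monotonicity sub-claim'' $\Exs{(i,q_i)}{P_\text{col}(q_R,q_i|X,\rho)}\le P_\text{col}(q_R|X,\rho)$. The paper does \emph{not} prove this, and for good reason: passing from $q_R$ to $(q_R,q_i)$ does two things at once, and they push in opposite directions. Adding $i$ to the \emph{collision set} is genuinely monotone (more answers to match can only lower the collision probability), but \emph{conditioning} on a fixed $q_i$ goes the other way — e.g.\ for the second term, $\|X_{q_R}^{a_R}\|_\rho^2=\|E_{q_i}X_{q_Rq_i}^{a_R}\|_\rho^2\le E_{q_i}\|X_{q_Rq_i}^{a_R}\|_\rho^2$ by convexity, and the same phenomenon occurs for the Stinespring term. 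So the combined step is not one-sided in general, and your ``performing an extra measurement and requiring an extra answer to match can only make a collision less likely'' intuition ignores that the two copies now share the \emph{same} $q_i$ rather than independent ones.

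The paper's fix is precisely to decouple the two effects: it introduces $P_\text{col}(q\,|\,R,X,\rho)$ with the question set and the collision set allowed to differ, observes that enlarging the collision set is monotone, and then shows (this is Eq.~\eqref{eq:predinc}, proved via Claims~\ref{claim:exptrace3} and~\ref{claim:exptrace2}) that conditioning on an extra question changes $P_\text{col}$ by at most $4C_1^{-1/2}$ in expectation. The budget argument then runs over all $r\le C_1$ steps: total increase $\le 4C_1^{1/2}$, total decrease $\le 1+4C_1^{1/2}$, hence some step has net decrease $\le 8C_1^{-1/2}$. Your proposal would go through cleanly if you replace the attempted exact monotonicity by this bounded-increase estimate; everything downstream (your use of aliveness to get the $\eta\eps$ lower bound on the per-answer drop, and the final arithmetic with $\eta\eps^3>16C_1^{-1/2}$) matches the paper.

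One minor point: the ``Markov step to pass from average over blocks to every $\eps$-alive block'' at the end cannot literally give a statement about \emph{every} alive block; the paper's proof is written as a contradiction argument that implicitly uses the average, and the lemma's usage downstream only needs the averaged version.
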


\begin{proof}
We extend the definition of the collision probability to measuring collisions over answers which are not necessarily on the same indices as the questions: 
$$P_\text{col}(q|R,X,\rho) \,:=\,\sum_{a_R} \Big(\Trho\big( (\hat{X}_{q}^{a_R})^\dagger\,\hat{X}_{q}^{a_R}(\hat{X}_{q}^{a_R})^\dagger\, \hat{X}_{q}^{a_R}\big)+\Tr\big( X_{q}^{a_R}\,\rho^{1/2}X_{q}^{a_R}\,\rho^{1/2}\big)\Big)$$
where now $q$ can be any subset of fixed questions, and $R$ denotes the subset of answers on which we are measuring the collision probability. 

\begin{claim}\label{claim:pred}
There exists an integer $1\leq r^* \leq C_1$ such that
$$ \Exs{R,q_R}{\,P_\text{col}(q_R|R,X,\rho)\,} - \Exs{R,q_R,i,q_i}{\,P_\text{col}(q_R,q_i|R\cup\{i\},X,\rho)\,} \leq 8\, C_1^{-1/2}$$
where the expectation is taken over all subsets $R$ of size $|R| = r^*$.
\end{claim}

\begin{proof}
There is a similar statement in~\cite{Feige2000}. Here we closely follow the proof of Corollary~3.2 in the lecture notes~\cite{O'Donnell2005}; since the argument is very similar (mostly replacing the use of Fact~1.3 in those notes by our Claims~\ref{claim:exptrace3} and~\ref{claim:exptrace2}) we only outline it here, leaving the details to the reader. The proof goes by considering what happens to the collision probability when one conditions on an additional question, resp. one considers collisions over an additional answer. First, note that if one extends $R$ by an index $i$, then $P_\text{col}(q|R\cup\{i\},X,\rho) \leq P_\text{col}(q|R,X,\rho)$, since obtaining identical answers on $R$ is a necessary condition to obtain identical answers on $R\cup\{i\}$. The following equation is the analogue of Fact~1.4 in~\cite{O'Donnell2005}:
\beq\label{eq:predinc}
\big|\Exs{(i,q_i)}{ P_\text{col}(q,q_i|R,X,\rho)} - P_\text{col}(q|R,X,\rho) \big|\,\leq\, 4\,C_1^{-1/2}
\eeq
The proof of~\eqref{eq:predinc} follows directly from Claims~~\ref{claim:exptrace3} and~\ref{claim:exptrace2}, and we omit it.
It shows that the collision probability cannot increase by too much when one conditions on an additional question, in expectation. The proof of the claim is then concluded exactly as in the classical case: consider a sequence of steps in which one successively looks for collisions on an additional coordinate $i$, and conditions on an additional question $q_i$. In expectation over the choice of $(i,q_i)$, $P_{\text{col}}$ will never go up by more than $4C_1^{-1/2}$ when one performs this operation. Since $P_{\text{col}}$ is always between $0$ and $1$, the fact that it never goes up by much implies that there must be a step in which it doesn't decrease by more than $8C_1^{-1/2}$: the total decrease cannot be larger than the total increase plus $1$. $r^*$ is chosen so that this step occurs when $r^*$ indices (and questions) have already been fixed.  
\end{proof}

Towards a contradiction, assume the negation of both 1. and 2. With probability at least $\eps$ a random block $(R,q_R)$ is alive, and moreover if $(R,q_R)$ is alive then alive answers which are not $(1-\eta)$-serial have a significant contribution.
Fix such an answer $a_R$. Since~\eqref{eq:serial} is not satisfied, summing over all $a_R$ which are alive but not $(1-\eta)$-serial one can see that the collision probability, for this $(R,q_R)$, must decrease by at least
$$ \eta\cdot\sum_{a_R:\,a_R\text{ alive but }\atop (q_R,a_R)\text{ is not $(1-\eta)$-serial}} P_\text{col}(q_R,a_R|X,\rho)$$
By the negation of~\eqref{eq:alivenotserial} and the fact that the answers are alive, this quantity is at least $\eta \eps^2/2$. 
 Finally, taking the expectation over the choice of $(R,q_R)$ gives a total decrease in $P_{\text{col}}$ of
at least $\eta \eps^3/2$, contradicting Claim~\ref{claim:pred} if $\eta\eps^3/2
> 8\,C_1^{-1/2}$.
\end{proof}

\subsection{Serial strategies}\label{sec:directprod}

The main result of this section is Lemma~\ref{lem:productgen}, which shows that serial strategies have a product structure. 
Given that most of the strategies that we consider in this section will have a fixed $q_R$ and $a_R$, we introduce the useful notation $Y_{q_S}^{a_S} := X_{q_R q_S}^{a_R a_S}$ (resp. $\hat{Y}_{q_S}^{a_S} := \hat{X}_{q_R q_S}^{a_R a_S}$) for any $S\subseteq [\ell]\backslash R$; the value of $q_R$ and $a_R$ should always be clear from the context. We will also simply write $Y$ for $X_{q_R}^{a_R}$ (resp. $\hat{Y}$ for $\hat{X}_{q_R}^{a_R}$). For the totality of this section $\eta>0$ is a fixed parameter, which one can think of as polynomial in the soundness $\delta$ that we are aiming for in the repeated game. 

We start with a simple fact which expands on the defining property of $(1-\eta)$-serial strategies.

\begin{fact}\label{fact:serial} Let $q_R\in Q^R$. For every $a_R\in A^R$ there exists $\alpha_{a_R}\geq \Tr\big(X_{q_R}^{a_R}\rho\big)$ such that $\sum_{a_R} \alpha_{a_R} \leq 3$ and the following holds. Suppose $(R,q_R,a_R)$ is $(1-\eta)$-serial, and assume that $\eta \geq C_2^{-1/2}$. Then for a fraction at least $(1-\eta^{1/4})$ of all $(i,q_i)$ for $i\notin R$ we have that
\begin{align}
0 \, \leq\,\Trho\big( \hat{Y}_{q_i}^\dagger\,\hat{Y}_{q_i}\hat{Y}_{q_i}^\dagger\, \hat{Y}_{q_i}\big) -  \sum_{a_i} \Trho\big( (\hat{Y}_{q_i}^{a_i})^\dagger\,\hat{Y}_{q_i}^{a_i}(\hat{Y}_{q_i}^{a_i})^\dagger\, \hat{Y}_{q_i}^{a_i}\big) &\leq 4\,\eta^{3/4}\,\alpha_{a_R} \label{eq:ortgen1}\\
0 \,\leq\,\Tr\big( Y_{q_i}\,\rho^{1/2}Y_{q_i}\,\rho^{1/2}\big) - \sum_{a_i} \Tr\big( Y_{q_i}^{a_i}\, \rho^{1/2} Y_{q_i }^{a_i}\,\rho^{1/2}\big)&\leq 4\,\eta^{3/4}\,\alpha_{a_R}\label{eq:ortgen4}
\end{align}
\end{fact}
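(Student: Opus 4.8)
The plan is to unwind the definition of $(1-\eta)$-serial and convert the single inequality~\eqref{eq:serial} into an averaged statement over the choice of $(i,q_i)$, then apply Markov's inequality to get the ``good fraction'' conclusion.

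\textbf{Setting up the averaging.} First I would recall that $(R,q_R,a_R)$ being $(1-\eta)$-serial means $\Exs{(i,q_i)}{P_\text{col}(q_R,q_i|X,\rho)} \geq (1-\eta) P_\text{col}(q_R|X,\rho)$, while monotonicity of the collision probability under adding a coordinate (established in the proof of Claim~\ref{claim:pred}: conditioning on more answers can only decrease $P_\text{col}$) gives the reverse inequality termwise, i.e. $P_\text{col}(q_R,q_i|X,\rho)\leq P_\text{col}(q_R|X,\rho)$ for every $(i,q_i)$. Hence the nonnegative quantity $P_\text{col}(q_R|X,\rho) - P_\text{col}(q_R,q_i|X,\rho)$ has expectation at most $\eta\, P_\text{col}(q_R|X,\rho)$ over $(i,q_i)$. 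Here $P_\text{col}(q_R|X,\rho)$ plays the role of a normalization; since $a_R$ is alive we have $P_\text{col}(q_R,a_R|X,\rho)\geq \eps\Tr(X_{q_R}^{a_R}\rho)$, but more importantly I expect the proof only needs the crude bound that $P_\text{col}(q_R|X,\rho) = \sum_{a_R} P_\text{col}(q_R,a_R|X,\rho)$ is at most some absolute constant (it is a sum of two ``collision'' terms each bounded by the operator-norm-type quantity which sums to at most $1$ over $a_R$ for a projective measurement $X$, and $\leq 1$ for a general POVM in the marginalized case, so $P_\text{col}(q_R|X,\rho) \leq 2$ or so). This is where the constant $3$ in $\sum_{a_R}\alpha_{a_R}\leq 3$ comes from, and I would set $\alpha_{a_R}$ to be (a suitable upper bound on) $P_\text{col}(q_R,a_R|X,\rho)$ — which is indeed $\geq \Tr(X_{q_R}^{a_R}\rho)$ up to the precise normalization used, so I may need to take $\alpha_{a_R} = P_\text{col}(q_R,a_R|X,\rho) + \Tr(X_{q_R}^{a_R}\rho)$ or similar to make both the lower bound and the sum bound hold simultaneously.

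\textbf{Splitting into the two terms and applying Markov.} The quantity $P_\text{col}(q_R,q_i|X,\rho)$ is, by definition~\eqref{eq:pcol2} applied with the block $R\cup\{i\}$, a sum over $a_i$ of (the same-half term) plus (the two-halves term). Summing over $a_i$, the same-half contribution is $\sum_{a_i}\Trho\big((\hat Y_{q_i}^{a_i})^\dagger \hat Y_{q_i}^{a_i}(\hat Y_{q_i}^{a_i})^\dagger \hat Y_{q_i}^{a_i}\big)$ and the two-halves contribution is $\sum_{a_i}\Tr\big(Y_{q_i}^{a_i}\rho^{1/2}Y_{q_i}^{a_i}\rho^{1/2}\big)$, in the $Y$-notation. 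Meanwhile $P_\text{col}(q_R,a_R|X,\rho)$ itself, in $Y$-notation, is $\Trho(\hat Y^\dagger \hat Y \hat Y^\dagger \hat Y) + \Tr(Y\rho^{1/2}Y\rho^{1/2})$. The key point is that $\Trho(\hat Y^\dagger\hat Y\hat Y^\dagger\hat Y)$ should equal $\Exs{(i,q_i)}{\Trho\big(\hat Y_{q_i}^\dagger\hat Y_{q_i}\hat Y_{q_i}^\dagger\hat Y_{q_i}\big)}$ up to an error $C_2^{-1/2}$ — this is precisely the content of the cited Claims~\ref{claim:exptrace3} and~\ref{claim:exptrace2} (averaging over the extra marginalized question changes the fourth-moment trace by at most $C_2^{-1/2}$), which is why the hypothesis $\eta\geq C_2^{-1/2}$ appears. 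Similarly $\Tr(Y\rho^{1/2}Y\rho^{1/2}) \approx \Exs{(i,q_i)}{\Tr\big(Y_{q_i}\rho^{1/2}Y_{q_i}\rho^{1/2}\big)}$ up to $C_2^{-1/2}$. With these, the nonnegativity in~\eqref{eq:ortgen1} and~\eqref{eq:ortgen4} becomes the termwise monotonicity (adding the answer $a_i$ can only decrease each fourth-moment/second-moment collision quantity — a fact provable directly from $\sum_{a_i}\sqrt{Y_{q_i}^{a_i}}(\cdot)\sqrt{Y_{q_i}^{a_i}} = \hat Y_{q_i}(\cdot)\hat Y_{q_i}^\dagger$-type identities and operator convexity), modulo the $C_2^{-1/2}\leq\eta$ slack. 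Then the expectation of the difference (same-half part plus two-halves part) over $(i,q_i)$ is at most $\eta\, P_\text{col}(q_R,a_R|X,\rho)$ plus $O(C_2^{-1/2}) \leq O(\eta)\,\alpha_{a_R}$; since each summand is nonnegative, Markov's inequality says that outside a fraction $\eta^{1/4}$ of $(i,q_i)$, the same-half difference and the two-halves difference are each at most roughly $\eta^{3/4}\alpha_{a_R}$, with a union bound over the two events and a crude constant absorbing the split — giving the claimed $4\eta^{3/4}\alpha_{a_R}$.

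\textbf{The main obstacle.} I expect the bookkeeping around the two separate terms (same-half vs.\ two-halves) to be the fiddly part: one must verify that serial-ness, which is phrased in terms of the \emph{sum} $P_\text{col}$, actually forces \emph{each} of the two terms to be near-additive under adding $a_i$, not just their sum. Since both differences are individually nonnegative (by the monotonicity facts above), a small sum forces each to be small, so this works — but one has to be careful that the error terms from Claims~\ref{claim:exptrace3} and~\ref{claim:exptrace2} are applied to each term separately, and that the ``alive'' hypothesis on $a_R$ is really what guarantees $\alpha_{a_R} \geq \Tr(X_{q_R}^{a_R}\rho)$ is a meaningful (nonzero) normalization rather than just making the right-hand side vacuous. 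The exponents $\eta^{1/4}$ (fraction) and $\eta^{3/4}$ (bound) are forced by the single application of Markov to a quantity with expectation $O(\eta)\cdot\alpha_{a_R}$, so no cleverness is needed there beyond choosing to split the budget as $\eta = \eta^{1/4}\cdot\eta^{3/4}$.
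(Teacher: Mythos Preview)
Your overall plan---unwind the serial condition into an averaged inequality, use Claims~\ref{claim:exptrace2} and~\ref{claim:exptrace3} to replace $\hat Y$ by $\hat Y_{q_i}$, observe that each of the two terms (same-half and two-halves) is separately nonnegative, then apply Markov with the split $\eta = \eta^{1/4}\cdot\eta^{3/4}$---is exactly the paper's approach.

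There is, however, one genuine gap in your handling of $\alpha_{a_R}$. You propose $\alpha_{a_R} \approx P_\text{col}(q_R,a_R|X,\rho) + \Tr(X_{q_R}^{a_R}\rho)$ and then write ``$O(C_2^{-1/2}) \leq O(\eta)\alpha_{a_R}$'' for the error incurred when replacing $\Trho(\hat Y^\dagger \hat Y \hat Y^\dagger \hat Y)$ by $\Exs{(i,q_i)}{\Trho(\hat Y_{q_i}^\dagger \hat Y_{q_i} \hat Y_{q_i}^\dagger \hat Y_{q_i})}$. But Claim~\ref{claim:exptrace2} only bounds this error \emph{after summing over $a_R$}: for a single fixed $a_R$ the error can be much larger than $C_2^{-1/2}$, and there is no reason it should be controlled by $P_\text{col}(q_R,a_R|X,\rho)$ (which could be tiny). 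So your inequality $O(C_2^{-1/2}) \leq O(\eta)\alpha_{a_R}$ is unjustified with your $\alpha_{a_R}$.

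The paper's fix is to build this error directly into $\alpha_{a_R}$: it sets
\[
\alpha_{a_R} \,:=\, \max\Big(\Tr(X_{q_R}^{a_R}\rho),\; \eta^{-1}\Exs{(i,q_i)}{\big|\Trho((\hat X_{q_R}^{a_R})^\dagger X_{q_R}^{a_R}\hat X_{q_R}^{a_R}) - \Trho((\hat X_{q_R q_i}^{a_R})^\dagger X_{q_R q_i}^{a_R}\hat X_{q_R q_i}^{a_R})\big|}\Big).
\]
Then the same-half replacement error is $\leq \eta\,\alpha_{a_R}$ by definition, and the sum bound $\sum_{a_R}\alpha_{a_R}\leq 3$ follows from Claim~\ref{claim:exptrace2} (summed over $a_R$, giving $\leq 2C_2^{-1/2}$) together with the hypothesis $\eta \geq C_2^{-1/2}$. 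The two-halves replacement error, by contrast, is handled by Claim~\ref{claim:exptrace3} and is already $\leq C_2^{-1}\Tr(X_{q_R}^{a_R}\rho) \leq \eta\,\alpha_{a_R}$ for the fixed $a_R$, so no extra term is needed there. After this, the serial condition bounds the expected total difference by $\eta\cdot P_\text{col}(q_R,a_R|X,\rho) + 2\eta\,\alpha_{a_R} \leq 4\eta\,\alpha_{a_R}$ (using $P_\text{col}(q_R,a_R|X,\rho) \leq 2\Tr(X_{q_R}^{a_R}\rho) \leq 2\alpha_{a_R}$), and Markov finishes as you describe.
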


\begin{proof}
By condition~\eqref{eq:serial} in the definition of $(1-\eta)$-serial, the $Y_{q_i}^{a_i}$ satisfy
\begin{align}
&E_{(i,q_i)}\Big[  \Trho\big( \hat{Y}^\dagger\,\hat{Y}\hat{Y}^\dagger\, \hat{Y}\big) -  \sum_{a_i} \Trho\big( (\hat{Y}_{q_i}^{a_i})^\dagger\,\hat{Y}_{q_i}^{a_i}(\hat{Y}_{q_i}^{a_i})^\dagger\, \hat{Y}_{q_i}^{a_i}\big) \Big] \notag\\
&\hskip2cm +\,E_{(i,q_i)}\Big[\Tr\big( Y\,\rho^{1/2}Y\,\rho^{1/2}\big) - \sum_{a_i} \Tr\big( Y_{q_i}^{a_i}\, \rho^{1/2} Y_{q_i }^{a_i}\,\rho^{1/2}\big)\Big]\notag\\
&\leq \eta\Big(\Trho\big( \hat{Y}^\dagger\,\hat{Y}\hat{Y}^\dagger\, \hat{Y}\big) +\Tr\big( Y\,\rho^{1/2}Y\,\rho^{1/2}\big)\Big)\label{eq:fs1}
\end{align}
For any $a'_R\in A^R$, let
\beq\label{eq:alphar}
\alpha_{a'_R}\,:=\,\,\max\Big(\Tr(X_{q_R}^{a'_R}\rho), \eta^{-1}\,\Exs{(i,q_i)}{ \big| \Trho\big( (\hat{X}_{q_R}^{a'_R})^\dagger X_{q_R}^{a'_R} \hat{X}_{q_R}^{a'_R}\big) - \Trho\big( (\hat{X}_{q_R q_i}^{a'_R})^\dagger X_{q_R q_i}^{a'_R} \hat{X}_{q_R q_i}^{a'_R}\big) \big|}\Big)
\eeq
By applying Claim~\ref{claim:exptrace2} to the $\hat{X}_{q_R q}^{a'_R}$ we obtain
\begin{align*}
\sum_{a'_R}\Exs{(i,q_i)}{\big| \Trho\big( (\hat{X}_{q_R}^{a'_R})^\dagger X_{q_R}^{a'_R} \hat{X}_{q_R}^{a'_R}\big) - \Trho\big( (\hat{X}_{q_R q_i}^{a'_R})^\dagger X_{q_R q_i}^{a'_R} \hat{X}_{q_R q_i}^{a'_R}\big) \big|} \,\leq\,2 C_2^{-1/2} \Tr(\rho)
\end{align*}
which, by using our assumption that $C_2^{-1/2} \leq \eta$ and $\sum_{a'_R} \Tr(X_{q_R}^{a'_R}\rho)\leq \Tr(\rho)$, implies $\sum_{a'_R} \alpha_{a'_R} \leq 3\Tr(\rho)\leq 3$. Applying Claim~\ref{claim:exptrace3} to the $Y_q^a$ we also obtain
$$ \Exs{(i,q_i)}{\big| \Tr\big(Y\,\rho^{1/2}Y\,\rho^{1/2}\big)- \Tr\big( Y_{q_i }\,\rho^{1/2}Y_{q_i }\,\rho^{1/2}\big) \big|} \leq \eta\, \alpha_{a_R} $$
Hence~\eqref{eq:fs1}, together with an application of Markov's inequality, implies that, for a fraction at least $(1-\eta^{1/4})$ of all $(i,q_i)$,
\begin{align*}
&\Big(\Trho\big( \hat{Y}_{q_i}^\dagger\,\hat{Y}_{q_i}\hat{Y}_{q_i}^\dagger\, \hat{Y}_{q_i}\big) -  \sum_{a_i} \Trho\big( (\hat{Y}_{q_i}^{a_i})^\dagger\,\hat{Y}_{q_i}^{a_i}(\hat{Y}_{q_i}^{a_i})^\dagger\, \hat{Y}_{q_i}^{a_i}\big)\Big)+\Big(\Tr\big( Y_{q_i}\,\rho^{1/2}Y_{q_i}\,\rho^{1/2}\big) - \sum_{a_i} \Tr\big( Y_{q_i}^{a_i}\, \rho^{1/2} Y_{q_i }^{a_i}\,\rho^{1/2}\big)\Big)\\
&\hskip1cm\leq \eta^{3/4}\,\Big(\Trho\big( \hat{Y}^\dagger\,\hat{Y}\hat{Y}^\dagger\, \hat{Y}\big) +\Tr\big( Y\,\rho^{1/2}Y\,\rho^{1/2}\big) + 2 \,\alpha_{a_R}\Big)
\end{align*}
By expanding out the $Y_{q_i}$ terms, one can verify that both terms on the left-hand-side of this equation are positive, hence each of them must be smaller than the right-hand-side, itself smaller than $4\,\eta^{3/4}\alpha_{a_R}$. This proves both~\eqref{eq:ortgen1} and~\eqref{eq:ortgen4}. 
\end{proof}

We now prove a simple claim which shows that $(1-\eta)$-serial strategies are close to being orthogonal; this is how we will subsequently exploit that property. 

\begin{claim}\label{claim:orthogonalgen0} Let $q_R\in Q^R$. For every $a_R\in A^R$ there exists $\alpha_{a_R}\geq \Tr\big(X_{q_R}^{a_R}\rho\big)$ such that $\sum_{a_R} \alpha_{a_R} \leq 3$ and the following holds. Suppose that $(R,q_R,a_R)$ is $(1-\eta)$-serial. Then for a fraction at least $(1-\eta^{1/4})$ of all $(i,q_i)$ for $i\notin R$,
\beq\label{eq:orthogen}
\sum_{a_i\neq a'_i} \Tr_{\rho_{a_i}}\big((\hat{Y}_{q_i }^{ a_i})^\dagger \hat{Y}_{ q_i}^{ a'_i}\, (\hat{Y}_{ q_i}^{ a'_i})^\dagger\, \hat{Y}_{q_i }^{ a_i}\big)\,\leq\,8\eta^{3/4}\,\alpha_{a_R}
\eeq
where $\rho_{a_i} = \rho^{1/2} Y_{q_i}^{a_i} \rho^{1/2}$.
\end{claim}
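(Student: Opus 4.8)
The plan is to derive \eqref{eq:orthogen} directly from the two inequalities \eqref{eq:ortgen1} and \eqref{eq:ortgen4} established in Fact~\ref{fact:serial}, by recognizing the left-hand side of \eqref{eq:orthogen} as (up to constants) the defect between the ``coarse'' collision quantity for $\hat Y_{q_i}$ and the sum of the ``fine'' collision quantities for the refinements $\hat Y_{q_i}^{a_i}$. Concretely, I would first fix the $(i,q_i)$ in the $(1-\eta^{1/4})$-fraction for which both \eqref{eq:ortgen1} and \eqref{eq:ortgen4} hold (with the same $\alpha_{a_R}$), so that the whole argument is deterministic from there on. The key algebraic identity is the expansion $\hat Y_{q_i} = \sum_{a_i} \hat Y_{q_i}^{a_i}$ (this holds for the Stinespring dilations since $\{X_{q_Rq_i}^{a_Ra_i}\}_{a_i}$ refines $X_{q_R}^{a_R}$ by summing the projectors $X_{q_Rq_iq}^{a_Ra_ia}$ over $a_i$, and the square-root-and-tensor-with-$\bra{q,a}$ operation is linear in the appropriate index once we keep track of the $E$-register). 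Substituting this into $\Trho(\hat Y_{q_i}^\dagger \hat Y_{q_i}\hat Y_{q_i}^\dagger \hat Y_{q_i})$ and expanding the product of two sums gives a diagonal part, equal to a sum of $\Trho((\hat Y_{q_i}^{a_i})^\dagger \hat Y_{q_i}^{a_i}(\hat Y_{q_i}^{a_i})^\dagger \hat Y_{q_i}^{a_i})$ plus cross terms of the form $\Trho((\hat Y_{q_i}^{a_i})^\dagger \hat Y_{q_i}^{a_i}(\hat Y_{q_i}^{a_i})^\dagger \hat Y_{q_i}^{a_i'})$ with various index patterns, together with genuinely off-diagonal terms $\Trho((\hat Y_{q_i}^{a_i})^\dagger \hat Y_{q_i}^{a_i'}(\hat Y_{q_i}^{a_i'})^\dagger \hat Y_{q_i}^{a_i})$ for $a_i\neq a_i'$ — and it is exactly this last family that matches, up to the reweighting $\rho_{a_i}=\rho^{1/2}Y_{q_i}^{a_i}\rho^{1/2}$, the summand in \eqref{eq:orthogen}.

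The core of the argument is therefore a positivity/grouping observation: one regroups the full expansion of $\Trho(\hat Y_{q_i}^\dagger \hat Y_{q_i}\hat Y_{q_i}^\dagger \hat Y_{q_i}) - \sum_{a_i}\Trho((\hat Y_{q_i}^{a_i})^\dagger \hat Y_{q_i}^{a_i}(\hat Y_{q_i}^{a_i})^\dagger \hat Y_{q_i}^{a_i})$ so as to pull out the nonnegative quantity $\sum_{a_i\neq a_i'}\Tr_{\rho_{a_i}}((\hat Y_{q_i}^{a_i})^\dagger \hat Y_{q_i}^{a_i'}(\hat Y_{q_i}^{a_i'})^\dagger \hat Y_{q_i}^{a_i})$ (which is nonnegative because each term is of the form $\Tr(M^\dagger M \,\sigma)$ with $M=\hat Y_{q_i}^{a_i'}(\hat Y_{q_i}^{a_i})^\dagger$-type operator acting on a positive operator). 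The leftover terms after this extraction must themselves be shown to be nonnegative (or bounded using \eqref{eq:ortgen4}), so that the extracted orthogonality term is bounded above by the defect, which \eqref{eq:ortgen1} (possibly combined with \eqref{eq:ortgen4}) caps at $4\eta^{3/4}\alpha_{a_R}$; summing the two contributions and absorbing a factor $2$ yields the stated bound $8\eta^{3/4}\alpha_{a_R}$. This is essentially the observation already flagged in the proof of Fact~\ref{fact:serial} that ``by expanding out the $Y_{q_i}$ terms, one can verify that both terms on the left-hand side are positive''; here it is applied one level deeper, to identify the off-diagonal contribution explicitly.

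The main obstacle I anticipate is the bookkeeping of the Stinespring registers and the non-commutation of $\rho$ with the $Y$'s: one must be careful that, when $\hat Y_{q_i} = \sum_{a_i}\hat Y_{q_i}^{a_i}$ is expanded inside a fourth-order expression $\hat Y^\dagger \hat Y \hat Y^\dagger \hat Y$ traced against $\rho\otimes\Id_E$, the cross terms indeed organize into the clean $\Tr_{\rho_{a_i}}$ form of \eqref{eq:orthogen} and not into some other inseparable combination. In the maximally-mixed case $\rho=d^{-1}\Id$ this is transparent — everything reduces to Frobenius-norm identities $\|\sum_{a_i}Y_{q_i}^{a_i}\|_F^2$-type manipulations and $\rho_{a_i}$ collapses to a scalar multiple of $Y_{q_i}^{a_i}$ — so I would write that case first as a sanity check and then track the $\rho^{1/2}$ factors through the general computation, using the semi-inner-product structure of $\|\cdot\|_\rho$ from \eqref{eq:seminorm} (in particular its Cauchy--Schwarz property) wherever a cross term needs to be controlled rather than shown to vanish or be positive. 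The role of \eqref{eq:ortgen4} is precisely to handle the second (``two different halves of $\ket{\Psi}$'') family of cross terms in parallel with the first, so the final bound is the sum of the two $4\eta^{3/4}\alpha_{a_R}$ estimates.
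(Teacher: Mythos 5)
Your overall plan coincides with the paper's: set $Z_i = \hat{Y}_{q_i}^\dagger\hat{Y}_{q_i}\hat{Y}_{q_i}^\dagger\hat{Y}_{q_i} - \sum_{a_i}(\hat{Y}_{q_i}^{a_i})^\dagger\hat{Y}_{q_i}^{a_i}(\hat{Y}_{q_i}^{a_i})^\dagger\hat{Y}_{q_i}^{a_i}$, expand it into off-diagonal collision terms, bound one family via~\eqref{eq:ortgen1} and the mismatched-state family via~\eqref{eq:ortgen4} together with $\sum_{a'_i}\hat{Y}_{q_i}^{a'_i}(\hat{Y}_{q_i}^{a'_i})^\dagger\leq\Id$, and add the two $4\eta^{3/4}\alpha_{a_R}$ contributions to get $8\eta^{3/4}\alpha_{a_R}$. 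The identification of the off-diagonal terms, their positivity (each is $\Tr(M^\dagger M\,\sigma)$ with $\sigma\geq 0$), and the division of labor between~\eqref{eq:ortgen1} and~\eqref{eq:ortgen4} are all correct.

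The one step that does not work as you describe it is how the state $\rho_{a_i}$ enters. Expanding $\Trho(Z_i)$ produces terms $\Trho\big((\hat{Y}_{q_i}^{a_i})^\dagger\hat{Y}_{q_i}^{a'_i}(\hat{Y}_{q_i}^{a'_i})^\dagger\hat{Y}_{q_i}^{a_i}\big)$, all traced against the \emph{same} state $\rho$; no regrouping of these terms ``pulls out'' the reweighted quantity $\Tr_{\rho_{a_i}}(\cdots)$, because $\rho_{a_i}=\rho^{1/2}Y_{q_i}^{a_i}\rho^{1/2}$ is not a summand of anything in that expansion. The missing mechanism is operator monotonicity in the state: the expansion shows $\Tr_\sigma(Z_i)=\sum_{a_i\neq a'_i}\Tr_\sigma\big((\hat{Y}_{q_i}^{a_i})^\dagger\hat{Y}_{q_i}^{a'_i}(\hat{Y}_{q_i}^{a'_i})^\dagger\hat{Y}_{q_i}^{a_i}\big)\geq 0$ for \emph{every} $\sigma\geq 0$, and since $\rho_i:=\sum_{a_i}\rho_{a_i}=\rho^{1/2}Y_{q_i}\rho^{1/2}\leq\rho$, this gives $\Tr_{\rho_i}(Z_i)\leq\Trho(Z_i)\leq 4\eta^{3/4}\alpha_{a_R}$. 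Writing $\rho_i=\sum_{a''_i}\rho_{a''_i}$ in the expansion of $\Tr_{\rho_i}(Z_i)$ yields a nonnegative triple sum; the terms with $a''_i=a_i$ are exactly the left-hand side of~\eqref{eq:orthogen}, and the terms with $a''_i\neq a_i$ are the ones controlled by~\eqref{eq:ortgen4}. With this substitution your argument closes and gives the stated constant $8=4+4$.
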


\begin{proof} Define $\alpha_{a_R}$ as in~\eqref{eq:alphar}.
Letting $Z_i = \hat{Y}_{q_i}^\dagger (\hat{Y}_{q_i}\hat{Y}_{q_i}^\dagger) \hat{Y}_{q_i} - \sum_{a_i}(\hat{Y}_{q_i }^{ a_i})^\dagger \hat{Y}_{ q_i}^{ a_i}\, (\hat{Y}_{ q_i}^{ a_i})^\dagger\, \hat{Y}_{q_i }^{ a_i}$, Eq.~\eqref{eq:ortgen1} from Fact~\ref{fact:serial} can be re-written (for the $(i,q_i)$ for which it holds) as 
$$\Trho(Z_i) \,\leq\, 4\eta^{3/4} \,\alpha_{a_R}$$
Let $\rho_i:= \sum_{a_i} \rho_{a_i}$, where $\rho_{a_i} = \rho^{1/2} Y_{q_i}^{a_i}  \rho^{1/2}$. Since $\rho_i \leq \rho$ and $Z_i \geq 0$, we get
$$\Tr_{\rho_i}(Z_i) \,\leq\,\Trho(Z_i) \,\leq\, 4\eta^{3/4} \,\alpha_{a_R}$$
and hence, expanding out $Z_i$,
\beq\label{eq:ortgen3}
\sum_{a_i\neq a'_i} \Tr_{\rho_i}\big((\hat{Y}_{q_i }^{ a_i})^\dagger \hat{Y}_{ q_i}^{ a'_i}\, (\hat{Y}_{ q_i}^{ a'_i})^\dagger\, \hat{Y}_{q_i }^{ a_i}\big)\,\leq\,4\eta^{3/4}\,\alpha_{a_R}
\eeq
Finally, we can use~\eqref{eq:ortgen4} to upper-bound
$$\sum_{a_i\neq a''_i, a'_i} \Tr_{\rho_{a''_i}}\big((\hat{Y}_{q_i }^{ a_i})^\dagger \hat{Y}_{ q_i}^{ a'_i}\, (\hat{Y}_{ q_i}^{ a'_i})^\dagger\, \hat{Y}_{q_i }^{ a_i}\big)\,\leq\,4\eta^{3/4}\,\alpha_{a_R}$$
where we used $\sum_{a'_i} \hat{Y}_{q_i}^{a'_i}(\hat{Y}_{q_i}^{a'_i})^\dagger \leq \Id$. Together with~\eqref{eq:ortgen3}, this proves the claim.
\end{proof}

\begin{claim}\label{claim:blocksgen} Let $q_R\in Q^R$. For every $a_R\in A^R$ there exists $\alpha_{a_R}\geq \Tr\big(X_{q_R}^{a_R}\rho\big)$ such that $\sum_{a_R} \alpha_{a_R} \leq 3$ and the following holds. Suppose that $(R,q_R,a_R)$ is $(1-\eta)$-serial, let $1 \leq g \leq C_1/2$ be a fixed parameter, and $(G,q_G)$ chosen at random under the constraint that $G\cap R = \emptyset$ and $|G|=g$. Then with probability at least $(1-\eta^{1/4}-e^{-2g})$ over the
choice of $(G,q_G)$, there is a partition $G=G'\cup G''$, where $g''=|G''| \geq (1-4\eta^{c/4})\,g$,
such that for every $i\in G''$
\begin{align}
\sum_{a_i} \Tr_{\rho_G}\big( (\hat{Y}_{q_G}^{a_i})^\dagger  (\Id-\Pi_{q_i}^{a_i})  \hat{Y}_{q_G}^{a_i} \big)  &\leq O\big(g\,\eta^{1/c_2}\big)\, \alpha_{a_R} \label{eq:blocksgen}
\end{align}
where for $i\in G''$, $\{\Pi_{q_i}^{a_i}\}_{a_i}$ is an orthogonal measurement depending only on $q_R,a_R$ and $q_i$ (it is independent of the particular choice of $(G,q_G)$), $\rho_G = \rho^{1/2} Y_{q_G} \rho^{1/2}$, and $c>0,c_2 \geq 1$ are universal constants.
\end{claim}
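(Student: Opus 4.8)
The plan is to extract the orthogonal measurements $\{\Pi_{q_i}^{a_i}\}_{a_i}$ directly from the single-extra-question operators $\hat{Y}_{q_i}^{a_i}$, so that they depend only on $(q_R,a_R,q_i)$, and then to argue that these same projectors also approximately block-diagonalize the full operators $\hat{Y}_{q_G}^{a_i}=\hat{X}_{q_R q_G}^{a_R a_i}$, now measured against the state $\rho_G$. The starting point is Claim~\ref{claim:orthogonalgen0}: since $(R,q_R,a_R)$ is $(1-\eta)$-serial, there is a set $\mathcal{G}$ of pairs $(i,q_i)$ with $i\notin R$, of density at least $1-\eta^{1/4}$, on which the family $\{\hat{Y}_{q_i}^{a_i}\}_{a_i\in A}$ is $O(\eta^{3/4})\alpha_{a_R}$-close to pairwise orthogonal in the weighted sense of~\eqref{eq:orthogen}, with $\alpha_{a_R}$ as in~\eqref{eq:alphar}. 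For each $(i,q_i)\in\mathcal{G}$ I would feed this family into the orthogonalization lemma (Lemma~\ref{lem:blockdiaggen}), obtaining an orthogonal measurement $\{\Pi_{q_i}^{a_i}\}_{a_i\in A}$ that is manifestly a function of $(q_R,a_R,q_i)$ only, together with the guarantee that, in the weighting inherited from~\eqref{eq:orthogen}, $\hat{Y}_{q_i}^{a_i}$ is $O(\eta^{c/4})\,\alpha_{a_R}$-close to $\Pi_{q_i}^{a_i}\hat{Y}_{q_i}^{a_i}$ for a universal constant $c>0$; for $(i,q_i)\notin\mathcal{G}$ the projectors can be chosen arbitrarily, since those indices will be discarded. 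This is the ``$g=1$'' instance of~\eqref{eq:blocksgen}, and it does not refer to $G$ at all.

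The heart of the argument is to lift this to arbitrary $g$. For a fixed $i$ with $(i,q_i)\in\mathcal{G}$, the operator $\hat{Y}_{q_i}^{a_i}$ is exactly what one obtains from $\hat{Y}_{q_G}^{a_i}$ by additionally averaging the $g-1$ questions of $G\setminus\{i\}$ according to $\pi$ and marginalizing over their answers, and under the same averaging the weighting $\rho_G=\rho^{1/2}Y_{q_G}\rho^{1/2}$ collapses to the state underlying the $g=1$ estimate. I would therefore move the questions of $G\setminus\{i\}$ from ``fixed'' to ``averaged'' one coordinate at a time, in a hybrid over the coordinates of $G\setminus\{i\}$: each individual swap changes the relevant quartic weighted trace of products of our operators by at most $O(C_2^{-1/2})=O(\eta)$ times the ambient mass $\alpha_{a_R}$ — precisely the type of estimate provided by the averaging/expansion Claims~\ref{claim:exptrace2} and~\ref{claim:exptrace3}, the same ones used to prove~\eqref{eq:predinc} and Fact~\ref{fact:serial} — and telescoping over the at most $g$ coordinates gives the linear loss $O(g\,\eta)\,\alpha_{a_R}$. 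The change of weighting, and the sum over $a_i$, are handled exactly as in the proof of Claim~\ref{claim:orthogonalgen0}, via $\rho_G\le\rho$ together with operator inequalities such as $\sum_{a'_i}\hat{Y}_{q_i}^{a'_i}(\hat{Y}_{q_i}^{a'_i})^\dagger\le\Id$. Adding the $g=1$ block-diagonalization error $O(\eta^{c/4})\alpha_{a_R}$ to the telescoped error $O(g\,\eta)\alpha_{a_R}$ bounds, in expectation over the choice of $q_{G\setminus\{i\}}$, the left-hand side of~\eqref{eq:blocksgen} by $O(g\,\eta^{1/c_2})\,\alpha_{a_R}$ for a suitable universal constant $c_2\ge1$.

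It remains to package this probabilistically. Because $\mathcal{G}$ has density $1-\eta^{1/4}$ and the coordinates of $q_G$ are independent, a Chernoff bound shows that, except with probability $e^{-2g}$, all but an $O(\eta^{c/4})$ fraction of the indices of $G$ are ``locally good'', meaning $(i,q_i)\in\mathcal{G}$. For each locally good index, the expectation bound from the second paragraph together with Markov's inequality shows that~\eqref{eq:blocksgen} fails on only an $O(\eta^{c/4})$ fraction of the remaining randomness; averaging this over $G$ and applying Markov once more over the whole choice of $(G,q_G)$ contributes the $\eta^{1/4}$ term. Letting $G''$ be the set of indices of $G$ that are locally good and on which~\eqref{eq:blocksgen} holds, and $G'=G\setminus G''$, we obtain $|G''|\ge(1-4\eta^{c/4})g$ with probability at least $1-\eta^{1/4}-e^{-2g}$, as claimed, after fixing $c$ and $c_2$ to the worst exponents produced along the way.

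The main obstacle is the lifting step of the second paragraph: one must argue that projectors extracted from the one-extra-question operator $\hat{Y}_{q_i}^{a_i}$ still block-diagonalize $\hat{Y}_{q_G}^{a_i}$, in which all $g$ questions of $G$ are fixed. This is delicate because fixing an extra question is not at all the same as averaging it — it is only the specific quartic weighted traces occurring here that are stable under this operation, and then only because they involve a marginalization over the $\Theta(\ell)$ ``confuse'' coordinates. Tracking the Stinespring register $E$, the non-commutativity of $\rho$ with the measurement operators, and the accumulation of the per-coordinate errors into the linear factor $g$, is where essentially all the work lies; the probabilistic bookkeeping and the precise exponents $c,c_2$ are comparatively routine.
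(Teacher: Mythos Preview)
Your proposal is correct and follows essentially the same route as the paper: invoke Claim~\ref{claim:orthogonalgen0} to get the approximate orthogonality of the single-extra-question operators on a $(1-\eta^{1/4})$-fraction of $(i,q_i)$, feed those into the orthogonalization Lemma~\ref{lem:blockdiaggen} to produce the projectors $\Pi_{q_i}^{a_i}$ depending only on $(q_R,a_R,q_i)$, and then lift the resulting $g=1$ block-diagonalization to the full $\hat{Y}_{q_G}^{a_i}$ and weighting $\rho_G$ using the stability Claims~\ref{claim:exptrace2}--\ref{claim:exptrace3}, with Chernoff and Markov for the probabilistic packaging. The paper organizes the lifting a little differently from your coordinate-by-coordinate hybrid---it compares $\Tr_{\rho_{a_i}}\big((\hat{Y}_{q_i}^{a_i})^\dagger\hat{Y}_{q_i}^{a_i}\big)$ and $\Tr_{\rho_G}\big((\hat{Y}_{q_G}^{a_i})^\dagger\hat{Y}_{q_G}^{a_i}\big)$ directly, and inserts an extra Markov step to control $\Tr(Y_{q_i}\rho)$ so that the mixed-exponent bound $\alpha_{a_R}^c\big(\sum_{a_i}\Tr(\rho_{a_i})\big)^{1-c}$ coming out of Lemma~\ref{lem:blockdiaggen} collapses to $O(\eta^{1/c_2})\alpha_{a_R}$---but the content is the same.
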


\begin{proof}
Since $(q_R,a_R)$ is $(1-\eta)$-serial, we can apply Claim~\ref{claim:orthogonalgen0} to obtain that a fraction $(1-\eta^{1/4})$ of $(i,q_i)$ satisfy 
\beq\label{eq:bll1}
\sum_{a_i\neq a'_i} \Tr_{\rho_{a_i}}\big((\hat{Y}_{q_i }^{ a_i})^\dagger \hat{Y}_{ q_i}^{ a'_i}\, (\hat{Y}_{ q_i}^{ a'_i})^\dagger\, \hat{Y}_{q_i }^{ a_i}\big)\,\leq\,8\eta^{3/4}\,\alpha_{a_R}
\eeq
where as before $\rho_{a_i} = \rho^{1/2} Y_{q_i}^{a_i} \rho^{1/2}$. We can now apply Lemma~\ref{lem:blockdiaggen} to the $Y_{q_i}^{a_i}$ (with the states $\rho_{a_i}$) to obtain, for the fraction $(1-\eta^{1/4})$ of $(i,q_i)$ considered above, orthogonal projectors $\{\Pi_{q_i}^{a_i}\}_{a_i}$ satisfying
\begin{align}
\sum_{a_i} \Tr_{\rho_{a_i}}\big((\hat{Y}_{ q_i}^{ a_i })^\dagger (\Id - \Pi_{q_i}^{a_i}) \hat{Y}_{ q_i}^{ a_i }\big)&\leq O\big(\eta^{3c/4}\big)\alpha_{a_R}^c \Big(\sum_{a_i}\Tr\big(\rho_{a_i}\big)\Big)^{1-c}\label{eq:blgen1}
%& = O\big(\eta^{1/c_2}\big)\, \alpha_{a_R}
\end{align}
Moreover, the $\Pi_{q_i}^{a_i}$ can easily be made into a projective measurement by enlarging one of them, so that they sum to identity; this will not harm the above bound. 
By Markov's inequality, with probability at least $(1-\eta^{c/4})$ over the choice of $(i,q_i)$ it holds that $\Tr\big(Y_{q_i} \rho\big) \leq \eta^{-c/4} \Tr\big(Y\rho\big) \leq \eta^{-c/4}\alpha_{a_R}$. For any given $(G,q_G)$, let $G''\subseteq G$ denote those indices $i$ in $G$ for which this property holds for $(i,q_i)$, and moreover $(i,q_i)$ falls in the set of indices for which~\eqref{eq:blgen1} holds. By the union bound and a Chernoff bound, the probability that $|G''| \leq (1-4\eta^{c/4})g$ is less than $e^{-2g}$, and for ever $i\in G''$ we have
\begin{align}
\sum_{a_i} \Tr_{\rho_{a_i}}\big((\hat{Y}_{ q_i}^{ a_i })^\dagger (\Id - \Pi_{q_i}^{a_i}) \hat{Y}_{ q_i}^{ a_i }\big)&\leq O\big(\eta^{1/c_2}\big)\alpha_{a_R}\label{eq:blgen1a}
\end{align}
for some constant $c_2>0$. 
Applying Claim~\ref{claim:exptrace3} to the $\hat{Y}_{q_i}^{a_i}$, and summing over $a_i$, we find that in expectation
$$E_{(G,q_G)}\Big[ \sum_{a_i}\big|\Tr_{\rho_{a_i}} \big( (\hat{Y}_{q_i}^{a_i})^\dagger \hat{Y}_{q_i}^{a_i} \big) - \Tr_{\rho_{G,a_i}}\big( (\hat{Y}_{q_G}^{a_i})^\dagger \hat{Y}_{q_G}^{a_i}\big) \big|\Big] \leq g \,C_2^{-1}\,\Tr\big(Y_{q_i} \rho\big) \,\leq\, g \eta^{3/4} \alpha_{a_R}$$
where we used $C_2^{-1} \leq \eta$, $\rho_{G,a_i} := \rho^{1/2} Y_{q_G}^{a_i} \rho^{1/2}$, and we think of the choice of $(G,q_G)$ as first picking $(i,q_i)$ and then the remaining positions and questions. Another application of Claim~\ref{claim:exptrace3} combined with~\eqref{eq:ortgen4} shows that for every $i\in G''$,
$$E_{(G,q_G)}\Big[ \sum_{a_i \neq a'_i}\Tr_{\rho_{G,a_i'}} \big( (\hat{Y}_{q_i}^{a_i})^\dagger \hat{Y}_{q_i}^{a_i} \big)\Big] \leq O(g \,\eta^{3/4})\,\alpha_{a_R}$$
Hence, letting $\rho_{G} := \rho^{1/2} Y_{q_G}\rho^{1/2} = \sum_{a_i} \rho_{G,a_i}$, combining the two previous equations we get 
$$E_{(G,q_G)}\Big[ \sum_{a_i}\big|\Tr_{\rho_{a_i}} \big( (\hat{Y}_{q_i}^{a_i})^\dagger \hat{Y}_{q_i}^{a_i} \big) - \Tr_{\rho_G}\big( (\hat{Y}_{q_G}^{a_i})^\dagger \hat{Y}_{q_G}^{a_i}\big) \big|\Big] \leq O(g \,\eta^{3/4})\,\alpha_{a_R}$$
Using Markov's inequality, his lets us replace $\hat{Y}_{q_i}^{a_i}$ by $\hat{Y}_{q_G}^{a_i}$ in~\eqref{eq:blgen1} for a fraction $(1-\eta^{1/4})$ of $(G,q_G)$, losing an additional factor $O(g\eta^{1/2})\alpha_{a_R}$. Hence
\beq\label{eq:blocks2} \sum_{a_i} \Tr_{\rho_G}\big( (\hat{Y}_{q_G}^{a_i})^\dagger (\Id-\Pi_{q_i}^{a_i})  \hat{Y}_{q_G}^{a_i}\big) \leq  O\big(g\,\eta^{1/c_2}\big)\, \alpha_{a_R}
\eeq
where we safely assumed that $c_2 \geq 2$. 
%Note that since the $\Pi_{q_i}^{a_i}$ sum to identity,~\eqref{eq:blocks2} also implies that
%$$ \sum_{a_i \neq a_i' } \Tr_{\rho_G}\big(  (\hat{Y}_{q_G}^{a_i})^\dagger  \Pi_{q_i}^{a_i'}  \hat{Y}_{q_G}^{a_i}\big) \,\leq \, O\big(g\,\eta^{1/c_2}\big)\, \Tr\big( Y\rho\big) $$
%which gives the second equation in the claim.
\end{proof}

\begin{lemma}\label{lem:productgen} Let $q_R\in Q^R$. For every $a_R\in A^R$ there exists $\alpha_{a_R}\geq \Tr\big(X_{q_R}^{a_R}\rho\big)$ such that $\sum_{a_R} \alpha_{a_R} \leq 3$ and the following holds. Under the same conditions as in Claim~\ref{claim:blocksgen}, except for a lower fraction $(1-2\eta^{1/4c_2}-e^{-2g})$ of $(G,q_G)$, it holds that
\begin{align}
\sum_{a_{G''}}\Tr_{\rho_G}\big((\hat{Y}_{q_{G} }^{ a_{G''}} - \hat{Y}_{q_G})^\dagger \Pi_{q_{g''}}^{a_{g''}}\cdots \Pi_{q_1}^{a_1}\cdots \Pi_{q_{g''}}^{a_{g''}} (\hat{Y}_{q_G}^{a_{G''}} - \hat{Y}_{q_G})\big) &\leq O\big( g^2\eta^{1/(4c_2)}\big)\,\alpha_{a_R} \label{eq:prodc2}\\
\sum_{a_{G''}} \Tr_{\rho_G}\big( (\hat{Y}_{q_{G} }^{ a_{G''}})^\dagger \hat{Y}_{q_{G} }^{ a_{G''}}-    (\hat{Y}_{q_G}^{a_{G''}})^\dagger \Pi_{q_{g''}}^{a_{g''}}\cdots \Pi_{q_1}^{a_1}\cdots \Pi_{q_{g''}}^{a_{g''}} \hat{Y}_{q_G}^{a_{G''}} \big) &\leq O\big( g\eta^{1/(8c_2)}\big)\,\alpha_{a_R} \label{eq:prodc1}
\end{align}
\end{lemma}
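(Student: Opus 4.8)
The plan is to bootstrap from the single-coordinate block-diagonalization of Claim~\ref{claim:blocksgen} to a product over the $g''$ coordinates in $G''$, processing the coordinates one at a time. I will keep $(R,q_R,a_R)$ fixed throughout, use the abbreviation $Y_{q_S}^{a_S}=X_{q_Rq_S}^{a_Ra_S}$, and take $\alpha_{a_R}$ to be exactly the quantity defined in~\eqref{eq:alphar}, so that $\sum_{a_R}\alpha_{a_R}\le 3$ automatically. First I would condition on the good event from Claim~\ref{claim:blocksgen}: with probability at least $(1-\eta^{1/4}-e^{-2g})$ there is a partition $G=G'\cup G''$ with $|G''|\ge(1-4\eta^{c/4})g$ and, for every $i\in G''$, the bound~\eqref{eq:blocksgen}
$$\sum_{a_i}\Tr_{\rho_G}\big((\hat Y_{q_G}^{a_i})^\dagger(\Id-\Pi_{q_i}^{a_i})\hat Y_{q_G}^{a_i}\big)\le O(g\,\eta^{1/c_2})\,\alpha_{a_R}.$$
Relabel $G''=\{1,\dots,g''\}$. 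The key step is to prove, by induction on $k$, that the partial product $\Pi_{q_1}^{a_1}\cdots\Pi_{q_k}^{a_k}\cdots\Pi_{q_1}^{a_1}$ (the palindromic sandwich) approximately fixes $\hat Y_{q_G}^{a_{G''}}$ applied to the relevant post-measurement state, in the sense that
$$\sum_{a_{G''}}\Tr_{\rho_G}\big((\hat Y_{q_G}^{a_{G''}})^\dagger\big(\Id-\Pi_{q_1}^{a_1}\cdots\Pi_{q_k}^{a_k}\cdots\Pi_{q_1}^{a_1}\big)\hat Y_{q_G}^{a_{G''}}\big)\le O\big(k\,g\,\eta^{1/c_2}\big)\,\alpha_{a_R}.$$
The base case $k=1$ is~\eqref{eq:blocksgen} itself (after summing over the $a_j$, $j\ne 1$, which only enlarges $\rho_G$ back to itself). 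For the inductive step I would write $\Id-\Pi_{[k+1]}=(\Id-\Pi_{[k]})+\Pi_{q_1}^{a_1}\cdots\Pi_{q_k}^{a_k}(\Id-\Pi_{q_{k+1}}^{a_{k+1}})\Pi_{q_k}^{a_k}\cdots\Pi_{q_1}^{a_1}$ (using that the $\Pi$ are projectors), bound the first summand by the induction hypothesis, and for the second use that conjugating by the contraction $\Pi_{q_1}^{a_1}\cdots\Pi_{q_k}^{a_k}$ only decreases the trace, reducing it to~\eqref{eq:blocksgen} for coordinate $k+1$ but with $\hat Y_{q_G}^{a_{G''}}$ in place of $\hat Y_{q_G}^{a_{k+1}}$ — this replacement is where one needs to re-run the Claim~\ref{claim:exptrace3} / Markov argument already used at the end of the proof of Claim~\ref{claim:blocksgen} to pass from single-coordinate marginals to the full $G''$-marginal, at the cost of the extra factor $\eta^{1/(4c_2)}$ appearing in the lemma statement and of discarding another small fraction of $(G,q_G)$ (hence the $2\eta^{1/(4c_2)}$). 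Summing the geometric-in-$k$ losses over $k=1,\dots,g''\le g$ gives the stated $O(g^2\eta^{1/(4c_2)})\alpha_{a_R}$; applying triangle inequality for the seminorm $\|\cdot\|_{\rho_G}$ on the register carrying $E$ and the observation that $\hat Y_{q_G}=\sum_{a_{G''}}\hat Y_{q_G}^{a_{G''}}$ turns the "fixing" statement into the displacement bound~\eqref{eq:prodc2}. For~\eqref{eq:prodc1} I would expand $(\hat Y_{q_G}^{a_{G''}})^\dagger\hat Y_{q_G}^{a_{G''}}-(\hat Y_{q_G}^{a_{G''}})^\dagger\Pi_{[g'']}\hat Y_{q_G}^{a_{G''}}=(\hat Y_{q_G}^{a_{G''}})^\dagger(\Id-\Pi_{[g'']})\hat Y_{q_G}^{a_{G''}}$ and apply the induction bound at $k=g''$ directly — but this costs only a single factor of $g$ rather than $g^2$, and by a Cauchy–Schwarz trade-off (splitting the exponent, which is the source of the $\eta^{1/(8c_2)}$ versus $\eta^{1/(4c_2)}$) one gets the cleaner $O(g\,\eta^{1/(8c_2)})\alpha_{a_R}$ claimed.

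The main obstacle I anticipate is the bookkeeping in the inductive step: the projectors $\Pi_{q_i}^{a_i}$ for different $i$ need not commute, so the palindromic ordering $\Pi_{q_{g''}}^{a_{g''}}\cdots\Pi_{q_1}^{a_1}\cdots\Pi_{q_{g''}}^{a_{g''}}$ in~\eqref{eq:prodc2} is not obviously the same object as the nested sandwich I build in the induction. I would reconcile these by always peeling off the \emph{innermost} projector first (i.e. ordering the induction so that coordinate $k+1$ is added on the outside), so that the telescoping identity $\Id-\Pi_{[k+1]}=(\Id-\Pi_{[k]})+(\text{positive term conjugated by }\Pi_{[k]})$ holds with the correct nesting; a short separate check confirms that the resulting operator equals the palindrome written in the lemma. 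The second delicate point is that each replacement of $\hat Y_{q_i}^{a_i}$ by $\hat Y_{q_G}^{a_{G''}}$ (needed to make the bounds from Claim~\ref{claim:blocksgen}, which are stated for the single-coordinate marginal, usable inside a sandwich involving the full $G$-marginal) eats a fresh slice of the $(G,q_G)$ probability mass and a fresh power of $\eta$; I would absorb all of these into the two stated constants $c_2$ (freely assuming $c_2$ large enough, as was already done at the end of the proof of Claim~\ref{claim:blocksgen}) rather than tracking them explicitly, since only the existence of \emph{some} universal polynomial relationship between the error and $\eta$ is needed downstream in Section~\ref{sec:succbound}.
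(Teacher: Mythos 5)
Your overall architecture---induct coordinate-by-coordinate on a palindromic product of the $\Pi_{q_i}^{a_i}$, seeded by Claim~\ref{claim:blocksgen}---is the same as the paper's, but you induct on a different quantity and the inductive step as you describe it has a genuine gap. After your telescoping identity
\[
\Id-\Pi_{[k+1]}=(\Id-\Pi_{[k]})+M^\dagger\big(\Id-\Pi_{q_{k+1}}^{a_{k+1}}\big)M, \qquad M=\Pi_{q_k}^{a_k}\cdots\Pi_{q_1}^{a_1},
\]
you dispose of the second summand by asserting that ``conjugating by the contraction $M$ only decreases the trace,'' thereby reducing it to~\eqref{eq:blocksgen} for coordinate $k+1$. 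This is false. For positive $A$ and a contraction $M$ one has $\Tr_{\sigma}(\hat{Y}^\dagger M^\dagger A M\hat{Y})\le\|A\|_\infty\Tr_\sigma(\hat{Y}^\dagger\hat{Y})$, but there is no inequality of the form $\Tr_\sigma(\hat{Y}^\dagger M^\dagger A M\hat{Y})\lesssim\Tr_\sigma(\hat{Y}^\dagger A\hat{Y})$---and the latter is what you need, because the smallness in~\eqref{eq:blocksgen} comes from $(\Id-\Pi_{q_{k+1}}^{a_{k+1}})$ acting on $\hat{Y}_{q_G}^{a_{k+1}}$ itself, not on the rotated vector $M\hat{Y}_{q_G}^{a_{G''}}$. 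Since the $\Pi_{q_j}^{a_j}$ for different $j$ need not commute, $M\hat{Y}^{a_{G''}}$ can be rotated entirely out of the range of $\Pi_{q_{k+1}}^{a_{k+1}}$ even when $(\Id-\Pi_{q_{k+1}}^{a_{k+1}})\hat{Y}^{a_{k+1}}$ is tiny. (The two issues you do flag are not the problem: marginalizing the answers $a_j$, $j\neq k+1$, to pass from $\hat{Y}^{a_{G''}}$ to $\hat{Y}^{a_{k+1}}$ is free under the trace since the underlying POVM elements coincide, and the question-marginal discrepancy is indeed handled by the Claim~\ref{claim:exptrace3}/Markov argument. The unaddressed obstacle is the operator $M$ sitting between $(\Id-\Pi_{q_{k+1}}^{a_{k+1}})$ and $\hat{Y}$.)

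Closing this gap requires first controlling the displacement $\|(\Id-M)\hat{Y}_{q_G}^{a_{G''}}\|_{\rho_G}$ and then using Cauchy--Schwarz to absorb the cross terms created when $M\hat{Y}$ is swapped for $\hat{Y}$. That is precisely what the paper's induction does: it inducts on $\sum_{a_{F_i}}\Tr_{\rho_G}\big((\hat{Y}_{q_G}^{a_{F_i}}-\hat{Y}_{q_G})^\dagger\Pi_{q_i}^{a_i}\cdots\Pi_{q_1}^{a_1}\cdots\Pi_{q_i}^{a_i}(\hat{Y}_{q_G}^{a_{F_i}}-\hat{Y}_{q_G})\big)$ rather than on your sandwiched quantity, splits $\hat{Y}_{q_G}-\hat{Y}^{a_{F_i}}=(\hat{Y}_{q_G}-\hat{Y}^{a_i})+(\hat{Y}^{a_i}-\hat{Y}^{a_{F_i}})$, and bounds the resulting cross terms involving $(\Id-\Pi_{q_i}^{a_i})$ by Cauchy--Schwarz against~\eqref{eq:blocksgen}, paying $\sqrt{g}\,\eta^{1/(4c_2)}$ per step plus a Markov argument controlling $\Tr(Y_{q_G}\rho)$ (this is where the extra discarded fraction of $(G,q_G)$ actually comes from). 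Your route from the hoped-for bound on $\sum_{a_{G''}}\Tr_{\rho_G}((\hat{Y}^{a_{G''}})^\dagger(\Id-\Pi_{[g'']})\hat{Y}^{a_{G''}})$ to~\eqref{eq:prodc1} would be immediate and is a legitimate reordering relative to the paper (which instead derives~\eqref{eq:prodc1} from~\eqref{eq:prodc2} via Cauchy--Schwarz and the identity $\sum_{a_{G''}}\Pi_{q_{g''}}^{a_{g''}}\cdots\Pi_{q_1}^{a_1}\cdots\Pi_{q_{g''}}^{a_{g''}}=\Id$), but it rests on the unproven inductive step above.
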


\begin{proof} Let $\{\Pi_{q_i}^{a_i}\}$ be the orthogonal projectors promised by Claim~\ref{claim:blocksgen}. Let $g'' = |G''|$, and assume for simplicity that the first $g''$ questions in $G$ are those in $G''$. To prove the first inequality, we show the following by induction on $i=1,\ldots,g''$: there exists a constant $C>0$ such that, if we let $F_i = \{1,\ldots,i\}$, then
\beq\label{eq:indblocks}
\sum_{a_{F_i}}\Tr_{\rho_G}\big((\hat{Y}_{q_{G} }^{ a_{F_i}} - \hat{Y}_{q_G})^\dagger \Pi_{q_{i}}^{a_{i}}\cdots \Pi_{q_1}^{a_1}\cdots \Pi_{q_{i}}^{a_{i}} (\hat{Y}_{q_G}^{a_{F_i}} - \hat{Y}_{q_G})\big) \,\leq\, C\, i\,g\,\eta^{1/(3c_2)}\,\alpha_{a_R}
\eeq
The statement for $i=g''$ will imply~\eqref{eq:prodc2}. Let $C_0$ be the constant implicit in~\eqref{eq:blocksgen} from Claim~\ref{claim:blocksgen}. For $i=1$,~\eqref{eq:indblocks} is simply a re-statement of~\eqref{eq:blocksgen}, provided $C$ is chosen larger than $C_0$. Assume the inequality verified for $i-1$, and prove it for $i$. Write 
$$\hat{Y}_{q_G} - \hat{Y}_{q_G}^{a_{F_i}} = (\hat{Y}_{q_G} - \hat{Y}_{q_G}^{a_i}) + (\hat{Y}_{q_G}^{a_{i}} - \hat{Y}_{q_G}^{a_{F_i}})$$
The first term on the right-hand side (when plugged back into~\eqref{eq:indblocks}) can be bounded directly using~\eqref{eq:blocksgen} (and the fact that the projectors $\Pi_{q_j}^{a_j}$ sum to identity over $a_j$, for $j\in\{1,\ldots,i-1\}$). Regarding the second, we can use the Cauchy-Schwarz inequality together with~\eqref{eq:blocksgen} to bound
\begin{align*}
\sum_{a_{F_i}}\big|\Tr_{\rho_G}\big((\hat{Y}_{q_{G} }^{ a_{i}})^\dagger(\Id -  \Pi_{q_{i}}^{a_{i}})\Pi_{q_{i-1}}^{a_{i-1}}\cdots \Pi_{q_1}^{a_1}\cdots \Pi_{q_{i}}^{a_{i}} (\hat{Y}_{q_G}^{a_{F_i}} - \hat{Y}_{q_{G} }^{ a_{i}})\big)\big| \,\leq\, 2\sqrt{C_0} \sqrt{g} \eta^{1/(2c_2)} \alpha_{a_R}^{1/2}\Tr\big(Y_{q_G}\rho\big)^{1/2}
\end{align*}
By Markov's inequality, $\Tr\big(Y_{q_G}\rho\big) \leq \eta^{-1/4c_2}\Tr\big(Y\rho\big)$ for a fraction at least $(1-\eta^{1/4c_2})$ of $(G,q_G)$, so that for those indices the bound above can be replaced by $2\sqrt{C_0} \sqrt{g} \eta^{1/(4c_2)} \alpha_{a_R}$. For the rest of this proof we only consider questions $(G,q_G)$ for which the bound $\Tr\big(Y_{q_G}\rho\big) \leq \eta^{-1/4c_2}\Tr\big(Y\rho\big)$ applies. We can similarly obtain
\begin{align*}
\sum_{a_{F_i}}\big|\Tr_{\rho_G}\big((\hat{Y}_{q_{G} }^{ a_{F_i}})^\dagger(\Id -  \Pi_{q_{i}}^{a_{i}})\Pi_{q_{i-1}}^{a_{i-1}}\cdots \Pi_{q_1}^{a_1}\cdots \Pi_{q_{i}}^{a_{i}} (\hat{Y}_{q_G}^{a_{F_i}} - \hat{Y}_{q_{G} }^{ a_{i}})\big)\big| \,\leq\, 2\sqrt{C_0} \sqrt{g} \eta^{1/(4c_2)} \alpha_{a_R}
\end{align*}
so that
\begin{align*}
\sum_{a_{F_i}}&\Tr_{\rho_G}\big((\hat{Y}_{q_{G} }^{ a_{F_i}} - \hat{Y}_{q_G}^{a_i})^\dagger \Pi_{q_{i}}^{a_{i}}\cdots \Pi_{q_1}^{a_1}\cdots \Pi_{q_{i}}^{a_{i}} (\hat{Y}_{q_G}^{a_{F_i}} - \hat{Y}_{q_G}^{a_i})\big)\\
&\leq \sum_{a_{F_i}}\Tr_{\rho_G}\big((\hat{Y}_{q_{G} }^{ a_{F_i}} - \hat{Y}_{q_G}^{a_i})^\dagger \Pi_{q_{i-1}}^{a_{i-1}}\cdots \Pi_{q_1}^{a_1}\cdots \Pi_{q_{i-1}}^{a_{i-1}} (\hat{Y}_{q_G}^{a_{F_i}} - \hat{Y}_{q_G}^{a_i})\big) + 16\sqrt{C_0}\sqrt{g} \eta^{1/(4c_2)} \alpha_{a_R}\\
& = \sum_{a_{F_i}}\Tr_{\rho_G}\big((\hat{Y}_{q_{G} }^{ a_{F_{i-1}}} - \hat{Y}_{q_G})^\dagger \Pi_{q_{i-1}}^{a_{i-1}}\cdots \Pi_{q_1}^{a_1}\cdots \Pi_{q_{i-1}}^{a_{i-1}} (\hat{Y}_{q_G}^{a_{F_{i-1}}} - \hat{Y}_{q_G}^)\big) + 16\sqrt{C_0}\sqrt{g} \eta^{1/(4c_2)} \alpha_{a_R}
\end{align*}
which can then be bounded using the induction hypothesis. This concludes the induction step, provided $C \geq C_0 + 16\sqrt{C_0}$, and proves~\eqref{eq:prodc2}.   

We now prove~\eqref{eq:prodc1}. Use the Cauchy-Schwarz inequality to bound
\begin{align*}
\sum_{a_{G''}}& \big| \Tr_{\rho_G}\big( (\hat{Y}_{q_{G} }^{ a_{G''}} - \hat{Y}_{q_G})^\dagger \Pi_{q_{g''}}^{a_{g''}}\cdots \Pi_{q_1}^{a_1}\cdots \Pi_{q_{g''}}^{a_{g''}} \hat{Y}_{q_G}^{a_{G''}} \big) \big|\\
&\leq \Big( \sum_{a_{G''}} \Tr_{\rho_G}\big( (\hat{Y}_{q_{G} }^{ a_{G''}} - \hat{Y}_{q_G})^\dagger \Pi_{q_{g''}}^{a_{g''}}\cdots \Pi_{q_1}^{a_1}\cdots \Pi_{q_{g''}}^{a_{g''}} (\hat{Y}_{q_{G} }^{ a_{G''}} -\hat{Y}_{q_G}) \big)\Big)^{1/2}\\
&\hskip.5cm\cdot \Big(\sum_{a_{G''}} \Tr_{\rho_G}\big((\hat{Y}_{q_{G} }^{ a_{G''}} )^\dagger \Pi_{q_{g''}}^{a_{g''}}\cdots \Pi_{q_1}^{a_1}\cdots \Pi_{q_{g''}}^{a_{g''}} \hat{Y}_{q_{G} }^{ a_{G''}}\big)\Big)^{1/2}\\
&\leq O\big(g\eta^{1/(8c_2)}\big)\,\alpha_{a_R}
\end{align*}
by~\eqref{eq:prodc2}. We obtain~\eqref{eq:prodc1} by noting that 
$$\sum_{a_{G''}} \Tr_{\rho_G}\big( (\hat{Y}_{q_{G} }^{ a_{G''}})^\dagger \hat{Y}_{q_{G} }^{ a_{G''}}-    \hat{Y}_{q_G}^\dagger \Pi_{q_{g''}}^{a_{g''}}\cdots \Pi_{q_1}^{a_1}\cdots \Pi_{q_{g''}}^{a_{g''}} \hat{Y}_{q_G} \big) \, = \, 0 $$
since the $\Pi_{q_i}^{a_i}$ sum to identity over $a_i$. 
\end{proof}

\subsection{Bounding the success of players in a repeated game}\label{sec:succbound}

We proceed to show how the results from the two previous sections can be combined in order to prove Theorems~\ref{thm:mainproj} and~\ref{thm:maincons}.  For the remainder of this section we fix a game $G$ with question set $Q$ and answer set $A$, and consider the $\ell$-repeated games $G_{FK(\ell)}$ and $G_{DR(\ell)}$ for some fixed integer $\ell$. Let $s$ be the entangled value of the original game $G$, and $\{A_{q'}^{a'}\}_{a'}$ (resp. $\{(B_q^a)^T\}_{a}$) be an arbitrary fixed projective strategy for Alice (resp. Bob), using entangled state $\ket{\Psi}$, in the $\ell$-repeated game.\footnote{The transpose sign on Bob's operators is there for consistency of notation. For simplicity we will omit this transpose in the future whenever we consider expressions of the form $\bra{\Psi} A\otimes B \ket{\Psi}$, which should be read as $\bra{\Psi} A\otimes B^T \ket{\Psi}$. } Let $\rho = \Tr_A \ket{\Psi}\bra{\Psi}$ be the reduced density of $\ket{\Psi}$ on Bob's subsystem. 

We note here that both types of $\ell$-repeated games have the same overall structure, in that they consist of a set of $C_1$
``correlated'' rounds, in which the referee sends either ``game'' or ``consistency'' questions, and $C_2$ ``independent''
rounds, in which he asks questions chosen independently from a product distribution (we refer to
Definitions~\ref{def:fkrep} and~\ref{def:drrep} for more details, including the definition of $C_1$ and $C_2$). In both cases, we can think of the referee as choosing the $\ell$ pairs of questions in the following order.
\begin{enumerate}
\item First, a subset $R\subseteq [\ell]$ of size $r^*\leq C_1/2$ is chosen, and designated as indices for either game rounds (in the case of a projection game), or otherwise consistency rounds. Pairs of questions $(q'_R,q_R)$ are then picked according to the appropriate distribution.
\item A subset $G\subseteq[\ell]\backslash R$ of size $C_1-r^*$ is chosen. In the case of a projection game, all the indices in $G$ are designated as game rounds. In the other cases, $C_1/2$ of the indices in $G$ are designated (at random) as game rounds, and the remaining indices are designated as consistency rounds. Pairs of questions $(q'_G,q_G)$ are chosen accordingly. Note that the referee doesn't know the value of $r^*$, but he doesn't need to explicitly distinguish between the game and consistency rounds, since they use the same distribution on pairs of questions. The distinction is made only as a convenience for the analysis. 
\item Finally, we let $F = [\ell]\backslash(R\cup G)$. $F$ has size $C_2$, and the indices it contains are designated as confuse rounds, with corresponding pairs of questions $(q'_F,q_F)$.
\end{enumerate}
We will denote by $(q',q):=(q'_Rq'_Gq'_F,q_Rq_Gq_F)$ the $\ell$-tuple of pairs of questions chosen by the referee. Since questions on the indices in $R$ always correspond to cases where for every answer of Alice there is a unique possible valid answer for Bob, and since we will only perform consistency (as opposed to game) checks on questions in those indices, we may regroup Alice's tuples of answers $a'_R$ when they induce the same $a_R$ for Bob. Hence we re-define $A_{q_R q}^{a_R a} := \sum_{a'_R} A_{q_R q}^{a'_R a}$, where the summation runs over all $a'_R$ such that $(a'_R,a_R)$ are valid answers to the questions $(q'_R,q_R)$.

Our first claim shows that the players have a low success probability on blocks $(R,q_R)$ which are dead. 

\begin{claim}\label{claim:deadsuccess} Let $\eps>0$ be such that $\eps \geq C_1 C_2^{-1}$, and suppose that $(R,q_R)$ is an $\eps$-dead block. Then the success probability of the players, conditioned on the referee picking questions $(q',q)$ such that $q$ includes $q_R$ in the positions in $R$, is at most $\sqrt{2\,\eps}$.
\end{claim}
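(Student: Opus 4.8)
The plan is to relate the players' success probability, conditioned on Bob receiving the questions $q_R$ in the positions of $R$, to the collision probability $P_\text{col}(q_R|X,\rho)$ of Bob's (marginalized) strategy, and then to invoke the hypothesis that $(R,q_R)$ is $\eps$-dead. Recall that on the indices in $R$ we only ever perform consistency checks (identical questions, identical answers expected) — this is where the projection/consistency structure of the repeated game is used. So if the players succeed, in particular Bob's answer $a_R$ must equal Alice's answer $a'_R$ on those positions. First I would write the conditional success probability as a sum over answers $a_R$ of terms of the form $\bra{\Psi} A_{q_R q'}^{a_R a'} \otimes B_{q_R q}^{a_R a} \ket{\Psi}$, averaged over the completions $q',q$ of the questions outside $R$ (these are chosen independently, which is exactly what licenses the marginalization), and then bound this from above by $\sum_{a_R} \Tr\big( A_{q_R}^{a_R} \rho_A \big)^{1/2} \Tr\big( B_{q_R}^{a_R} \rho \big)^{1/2}$-type quantities, or more directly by relating it to $\bra{\Psi} B_{q_R}^{a_R} \otimes (B_{q_R}^{a_R})^T \ket{\Psi}$ using that Alice's operators are a POVM summing to identity.

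The key step is a Cauchy–Schwarz argument. Conditioned on success and on fixed questions, the probability of obtaining answer $a_R$ on Bob's side is at most $\Tr\big( B_{q_R q}^{a_R a} \rho \big)$ summed appropriately; pairing Alice's marginalized operator with Bob's and using $\sum_{a'_R} A_{q_R}^{a_R} \leq \Id$ together with the Cauchy–Schwarz inequality for the semi-inner-product $\langle\cdot,\cdot\rangle_\rho$ from~\eqref{eq:seminorm}, the success probability is bounded by something like
\begin{align*}
\sum_{a_R} \big\| B_{q_R}^{a_R}\big\|_\rho \,\cdot\, (\text{normalization}) \;\leq\; \Big(\sum_{a_R} \big\|B_{q_R}^{a_R}\big\|_\rho^2\Big)^{1/2}\Big(\sum_{a_R}(\text{norm.})^2\Big)^{1/2},
\end{align*}
where $\|B_{q_R}^{a_R}\|_\rho^2 = \Tr\big( B_{q_R}^{a_R}\rho^{1/2} B_{q_R}^{a_R}\rho^{1/2}\big)$ is exactly (half of) the second term appearing in $P_\text{col}(q_R,a_R|X,\rho)$ in~\eqref{eq:pcol2}. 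Summing over $a_R$, the first factor is at most $P_\text{col}(q_R|X,\rho)^{1/2} \leq \sqrt{\eps}$ by the dead hypothesis, and the second factor is a normalization of size $O(1)$ (here is where the mild condition $\eps \geq C_1 C_2^{-1}$ enters, to control the small discrepancy introduced by the fact that Alice's operators on positions outside $R$ are marginalized over a slightly different — $C_1$ versus $C_2$ — set of coordinates than Bob's, using Claims~\ref{claim:exptrace3} and~\ref{claim:exptrace2}). This yields the bound $\sqrt{2\eps}$, the factor $2$ absorbing the two terms of~\eqref{eq:pcol2} and the normalization slack.

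The main obstacle I anticipate is bookkeeping the normalization and the asymmetry between Alice's and Bob's marginalizations: Bob's marginalized operator $B_{q_R}^{a_R}$ averages over the $C_1 + C_2 - r^*$ coordinates outside $R$ with the appropriate product distribution, whereas in the actual game the referee correlates Alice's and Bob's questions on the $C_1$ ``correlated'' coordinates; one must argue that, after conditioning on $q_R$ and averaging, the effective Alice-side operator is close (in the relevant $\rho$-weighted sense) to a marginalization that can be paired against Bob's, with error controlled by $C_1 C_2^{-1} \leq \eps$ via the expectation-of-trace estimates. Once that is in place, the Cauchy–Schwarz step and the invocation of $\eps$-deadness are routine. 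I would also need to handle the regrouping $A_{q_R q}^{a_R a} := \sum_{a'_R} A_{q_R q}^{a'_R a}$ carefully so that it is genuinely Alice's probability of producing an answer consistent with Bob's $a_R$ on the positions in $R$, which is legitimate precisely because only consistency checks are applied there.
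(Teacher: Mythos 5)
Your plan follows the paper's proof almost step for step: reduce the referee to only checking consistency on $R$, use Cauchy--Schwarz in the $\|\cdot\|_\rho$ semi-norm to pair Alice's regrouped operator (bounded by $1$ after summing over $a_R$) against Bob's partially marginalized operator, invoke $\eps$-deadness via the term $\sum_{a_R}\|B_{q_R}^{a_R}\|_\rho^2 \le P_\text{col}(q_R|X,\rho)\le\eps$, and control the mismatch between the full marginalization $B_{q_R}^{a_R}$ and the partial one $B_{q_R q_G}^{a_R}$ (only the $F$ coordinates can be averaged out, since $G$ is correlated across players) via Claim~\ref{claim:exptrace3} and the hypothesis $\eps\geq C_1 C_2^{-1}$. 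One small inaccuracy: the factor $\sqrt{2}$ in the paper comes entirely from the $C_1 C_2^{-1}\leq\eps$ slack in Eq.~\eqref{eq:ds1}, not from ``absorbing the two terms of~\eqref{eq:pcol2}'' --- only the second term of $P_\text{col}$ is used, and Claim~\ref{claim:exptrace2} is not needed here; also the intermediate bound you float, $\sum_{a_R}\Tr(A_{q_R}^{a_R}\rho_A)^{1/2}\Tr(B_{q_R}^{a_R}\rho)^{1/2}$, would not by itself give anything smaller than $1$, so you must commit to the $\|\cdot\|_\rho$ route (your second alternative), as the paper does.
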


\begin{proof}
The definition of $(R,q_R)$ being $\eps$-dead implies that
$$ \sum_{a_R} \Tr\big(B_{q_R }^{a_R }\,\rho^{1/2}B_{q_R }^{a_R }\,\rho^{1/2}\big)\,\leq\, \eps$$
By applying Claim~\ref{claim:exptrace3} to the $B_{q_R q}^{a_R}$ together with Markov's inequality, we obtain that in expectation
\beq\label{eq:ds1}
 E_{G,q_G}\Big[\sum_{a_R} \Tr\big(B_{q_R q_G}^{a_R }\,\rho^{1/2}B_{q_R q_G}^{a_R }\,\rho^{1/2}\big)\Big] \,\leq\, \eps + C_1 C_2^{-1} \,\leq\, 2\eps
\eeq
where we used $|G|\leq C_1$ and our assumption on $\eps$. Condition on $(q'_R,q_R)$ being chosen as part of the referee's questions in the game, and assume that the referee only checks consistency of Alice and Bob's answers to the questions in $R$. This can only increase their success probability, which can then be bounded as
\begin{align*}
E_{(G,F),(q'_Gq'_F,q_Gq_F)}\Big[\sum_{a_R,a',a}  \langle \Psi | A_{q'_R q'_G q'_F}^{a_R a'} \otimes B_{q_R q_G q_F}^{a_R a}| \Psi \rangle\Big]
& \leq E_{G,(q'_G,q_G)}\Big[ \Big(\sum_{a_R} \| A_{q'_R q'_G}^{a_R}\|_\rho^2\Big)^{1/2}\Big(\sum_{a_R}\|B_{q_R q_G}^{a_R }\|_\rho^2\Big)^{1/2}\Big]\\
& \leq \sqrt{2\,\eps}
\end{align*}
where we used that $(q'_F,q_F)$ are chosen according to a product distribution, the first inequality follows from Cauchy-Schwarz (recall the definition of $\|\cdot\|_\rho$ given in~\eqref{eq:seminorm}), and for the second we upper-bounded $\sum_{a_R}\| A_{q'_R q'_G}^{a_R}\|_\rho^2$ by $1$ and used Jensen's inequality together with~\eqref{eq:ds1} to bound the other term. 
\end{proof}

We note informally that one can combine this claim with Lemma~\ref{lem:productgen} to obtain a form of ``direct product test'' for entangled strategies. Indeed, if two entangled players Alice and Bob win the game with probability $s \gg \eps$, then by the previous claim a fraction at least $s^2/2$ of blocks $(R,q_R)$ should be alive; moreover a non-negligible fraction\footnote{Note that one cannot hope to obtain any structural result on the strategies which would hold for more than a fraction $s$ of questions or answers, as the player's strategy could be a mixture of a perfect winning strategy with probability $s$, and a random strategy with probability $(1-s)$.} of answers $a_R$ to those blocks must be $(1-\eta)$-serial. Hence one can apply Lemma~\ref{lem:productgen} to those blocks $(R,q_R,a_R)$ and obtain a product form for the corresponding marginalized strategy.

The next claim shows that strategies which are product, even on a subset of the coordinates, also have a low success probability.

\begin{claim}\label{claim:prodsuccess} Fix $(R,q_R,a_R)$, and for every $(i,q_i)$, where $i\in [\ell]\backslash R$ and $q_i\in Q$, let $\{\Pi_{q_i}^{a}\}_{a\in A}$ be a fixed projective measurement. Suppose that Bob's strategy is such that, with probability at least $1-\delta$ over the choice of $(G,q_G)$ and $G_1\subseteq G$ of size $|G_1|=g$, there is a partition $G_1 = G'\cup G''$ such that $g''=|G''| \geq (1-\delta')g$ and Bob's POVM satisfies that for every $a_{G''}$
$$ B_{q_R q_G}^{a_R a_{G''}} =  (\hat{B}_{q_R }^{a_R })^\dagger \Pi_{q_{g''}}^{a_{g''}}\cdots \Pi_{q_{1}}^{a_{1}}\cdots \Pi_{q_{g''}}^{a_{g''}}\hat{B}_{q_R }^{a_R }$$
 where for simplicity we wrote $G''=\{1,\ldots,g''\}$.

Then the success probability of the players, conditioned on the referee asking questions $(q',q)$ such that $q$ includes $q_R$ in the positions in $R$, and summed over all valid answers which include $a_R$ for Bob, is at most 
$$\big(\delta+ e^{-(1-s-\delta')^2 g}\big)\, \Tr\big( B_{q_R}^{a_R}\rho\big)$$
\end{claim}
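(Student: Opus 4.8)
The plan is to relax the referee's acceptance condition to the bare minimum and then to exploit the assumed product form of Bob's strategy one coordinate of $G''$ at a time, paying a factor $\omega^*(G)=s$ per coordinate. Fix $q_R$ and split the referee's randomness into the choice of $(G,q_G,G_1)$ (together with the induced partition $G_1=G'\cup G''$) versus everything else (the questions $q'_R,q'_G,q'_F,q_F$, the designation of rounds, and the measurement outcomes). Since dropping any of the referee's checks only helps the players, one may replace the test by the single requirement that the game check pass on every $i\in G''$; this is legitimate because $G''\subseteq G_1$ consists of game indices. On the $(G,q_G,G_1)$ for which Bob's POVM does not have the assumed form the contribution is at most $\delta\,\Tr(B_{q_R}^{a_R}\rho)$: the total success contribution over all of the referee's choices, restricted to Bob answering $a_R$ on $R$, is exactly $\Tr(B_{q_R}^{a_R}\rho)$, and this event has probability at most $\delta$. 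It therefore suffices to bound the contribution of the good blocks.

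\textbf{Main estimate.} Let $\ket\phi$ be the (subnormalized) state obtained when Bob performs the marginalized measurement associated with $\hat B_{q_R}^{a_R}$, with the $G$-positions of the marginalization set to $q_G$, and obtains $a_R$; then $\|\ket\phi\|^2=\Tr(B_{q_R q_G}^{a_R}\rho)$ and $\Exs{q_G}{\|\ket\phi\|^2}=\Tr(B_{q_R}^{a_R}\rho)$. Writing $G''=\{1,\dots,g''\}$ and $M_{a_{G''}}:=\Pi_{q_1}^{a_1}\cdots\Pi_{q_{g''}}^{a_{g''}}$, the hypothesis says Bob's effective operator producing $(a_R,a_{G''})$ is $(\hat B_{q_R}^{a_R})^\dagger M_{a_{G''}}^\dagger M_{a_{G''}}\hat B_{q_R}^{a_R}$, so that after summing away all questions and answers outside $R\cup G''$ (which are now unconstrained) the relaxed success contribution of a good block equals
\[
\Exs{(q'_i,q_i)_{i\in G''}}{\ \sum_{\substack{a'_{G''},a_{G''}\,:\\ V(a'_i,a_i\mid q'_i,q_i)=1\ \forall i}}\bra\phi A_{q'_{G''}}^{a'_{G''}}\otimes M_{a_{G''}}^\dagger M_{a_{G''}}\ket\phi\,},
\]
where $A_{q'_{G''}}^{a'_{G''}}$ is Alice's measurement marginalized onto $G''$ and the pairs $(q'_i,q_i)$ for $i\in G''$ are drawn independently from $\pi$ (these are game rounds). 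The key claim is that this quantity is at most $s^{g''}\|\ket\phi\|^2$.

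\textbf{The induction.} I would prove the key claim by induction on $g''$, the case $g''=0$ being trivial. For the inductive step, peel off $\Pi_{q_{g''}}^{a_{g''}}$ --- the projector Bob applies first --- and write Alice's joint POVM on $G''$ via its standard sequential realization $A_{q'_{G''}}^{a'_{G''}}=\sqrt{A^{a'_{g''}}}\,\tilde A_{a'_{g''}}^{a'_{G''\setminus\{g''\}}}\,\sqrt{A^{a'_{g''}}}$, i.e. her marginal on coordinate $g''$ followed by a conditional joint POVM on the remaining coordinates. Setting $\ket{\phi_1}:=(\sqrt{A^{a'_{g''}}}\otimes\Pi_{q_{g''}}^{a_{g''}})\ket\phi$, one checks that for each fixed $(a'_{g''},a_{g''})$ the inner sum over the remaining answers is exactly a summand of the $(g''-1)$-coordinate instance for the state $\ket{\phi_1}$ and POVM $\tilde A_{a'_{g''}}$, to which the induction hypothesis applies and gives $s^{g''-1}\|\ket{\phi_1}\|^2$ in expectation over $(q'_i,q_i)_{i\neq g''}$. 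Finally
\[
\Exs{(q'_{g''},q_{g''})}{\sum_{a'_{g''},a_{g''}:\,V=1}\|\ket{\phi_1}\|^2}=\Exs{(q'_{g''},q_{g''})}{\sum_{a'_{g''},a_{g''}:\,V=1}\bra\phi A^{a'_{g''}}_{q'_{G''}}\otimes\Pi_{q_{g''}}^{a_{g''}}\ket\phi}\le s\,\|\ket\phi\|^2,
\]
since for every fixed value of the remaining questions $\{A^{a'_{g''}}_{q'_{G''}}\}_{a'_{g''}}$ and $\{\Pi_{q_{g''}}^{a_{g''}}\}_{a_{g''}}$ constitute a strategy for $G$ on $\ket\phi$ and $\omega^*(G)=s$. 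Multiplying the two bounds yields $s^{g''}\|\ket\phi\|^2$.

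\textbf{Conclusion and main difficulty.} A good block has $g''\ge(1-\delta')g$, so the contribution of the good blocks is at most $\Exs{q_G}{s^{g''}\|\ket\phi\|^2}\le s^{(1-\delta')g}\,\Tr(B_{q_R}^{a_R}\rho)\le e^{-(1-s-\delta')^2 g}\,\Tr(B_{q_R}^{a_R}\rho)$ (using $\ln(1/s)\ge 1-s$); together with the $\delta\,\Tr(B_{q_R}^{a_R}\rho)$ from the reduction this is the claimed bound. The genuinely delicate point is the per-round step of the induction: because Alice's repeated strategy need not be a product across coordinates, her single-coordinate marginal is not on its own a legitimate ``fresh'' single-round strategy, and one must carry the \emph{conditional} joint POVM on the remaining coordinates into the recursion --- which is precisely what the sequential realization of her POVM supplies, and it is what turns each peeling step into a bona fide instance of $G$ to which $\omega^*(G)=s$ can be applied. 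A secondary, purely bookkeeping, point is that the single-round bound requires the pair $(q'_i,q_i)$ to be distributed as $\pi$, so the expectation over the game-round questions must be kept globally; its interaction with the (question-dependent) event that a block is good is harmless but must be tracked.
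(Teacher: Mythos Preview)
Your proof is correct and follows the same overall plan as the paper's: set $\ket\phi := (\Id\otimes\hat B_{q_R}^{a_R})\ket\Psi$, write the relaxed success as $\sum \bra\phi A\otimes \Pi_{q_{g''}}^{a_{g''}}\cdots\Pi_{q_1}^{a_1}\cdots\Pi_{q_{g''}}^{a_{g''}}\ket\phi$, and exploit that Bob's measurement is sequential so that each round can succeed with probability at most $s$. The bookkeeping for the bad blocks is also handled the same way.

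The genuine difference is in how the last step is executed. The paper bounds the \emph{operator norm} of $\Exs{G,q'_G,q_G}{\sum A_{q'_R q'_G}^{a_R a'_{G''}}\otimes\Pi\cdots\Pi}$ by interpreting it as the maximum success probability of a game in which Bob uses the sequential strategy $\{\Pi_{q_i}^{a_i}\}$ on all of $G_1$ and the referee accepts whenever at least $g''$ of the $g$ rounds pass; a Chernoff-type tail bound then gives $e^{-(1-s-\delta')^2 g}$. Your argument instead peels off one coordinate of $G''$ at a time, paying exactly a factor $s$ per step via the sequential realization of Alice's joint POVM, and obtains the sharper $s^{g''}\le s^{(1-\delta')g}$ before weakening it to the stated bound. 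Your route is more explicit (in particular it makes transparent the martingale-type conditioning that is implicit in the paper's Chernoff step), while the paper's operator-norm formulation is uniform in the state and slightly more modular.

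One minor slip: since the hypothesis already involves $\hat B_{q_R}^{a_R}$ (marginalized over all of $[\ell]\setminus R$), your state $\ket\phi$ does \emph{not} depend on $q_G$, and $\|\ket\phi\|^2 = \Tr(B_{q_R}^{a_R}\rho)$ directly rather than $\Tr(B_{q_R q_G}^{a_R}\rho)$. This actually simplifies your per-round step, because $\ket\phi$ is then manifestly independent of $(q'_{g''},q_{g''})$, so fixing the remaining questions yields a bona fide strategy for a single copy of $G$.
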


\begin{proof}
Fixing the questions in $R$ and $G$, and conditioning on the players consistently answering $a_R$ to $(q'_R,q_R)$, their probability of being accepted is at most  
\begin{align}
\sum_{a'_{G''},a_{G''}} \bra{\Psi}A_{q'_R q'_G}^{a_R a'_{G''}} \otimes B_{q_R q_G}^{a_R a_{G''}}\ket{\Psi}&= \sum_{a'_{G''},a_{G''}} \bra{\Psi} A_{q'_R q'_G}^{a_R a'_{G''}} \otimes  (\hat{B}_{q_R }^{a_R })^\dagger \Pi_{q_{g''}}^{a_{g''}}\cdots \Pi_{q_{1}}^{a_{1}}\cdots \Pi_{q_{g''}}^{a_{g''}}\hat{B}_{q_R }^{a_R }\ket{\Psi}\notag\\
& = \sum_{a'_{G''},a_{G''}} \big(\bra{\Psi} \Id\otimes (\hat{B}_{q_R}^{a_R})^\dagger \big) \cdot A_{q'_R q'_{G}}^{a_R a'_{G''}} \otimes\Pi_{q_{g''}}^{a_{g''}}\cdots \Pi_{q_{1}}^{a_{1}} \cdots \Pi_{q_{g''}}^{a_{g''}} \cdot \big( \Id\otimes \hat{B}_{q_R}^{a_R} \ket{\Psi}\big)\label{eq:prod2}
\end{align}
The fact that sequential strategies cannot succeed in many rounds of the repeated game implies that
$$ \Big\|E_{(G,q'_G,q_G)}\Big[\sum_{a'_{G''},a_{G''}} A_{q'_R q'_{G}}^{a_R a'_{G''}} \otimes\Pi_{q_{g''}}^{a_{g''}}\cdots \Pi_{q_{1}}^{a_{1}} \cdots \Pi_{q_{g''}}^{a_{g''}}\Big] \Big\|_\infty \,\leq \,\text{exp}(-(1-s-\delta')^2 g) $$
Indeed, the expression on the left-hand side can be upper-bounded by the maximum success probability of an Alice playing an arbitrary strategy and Bob a
sequential strategy described by the measurements $\Pi_{q_i}^{a_i}$, provided the referee only checks the answers
to those questions in $G'' \subseteq G_1$, where $G_1$ is a random subset of $G$ of size $g$ chosen by the referee. But this success probability is even lower than the success probability that
Alice and Bob would have if Bob played his sequential strategy on \emph{all} questions in $G_1$, but the referee was
to accept as long as at least $g''$ out of Alice and Bob's $g$ answers were correct. Since the probability of such a
serial strategy succeeding in any round is at most the value $s$ of the original game, and $g'' \geq (1-\delta')g$, by a
Chernoff bound the probability that the players succeed in $g''$ out of the $g$ rounds is at most
$\text{exp}(-(1-s-\delta')^2 g)$. Hence the expression in~\eqref{eq:prod2} can be upper-bounded, in expectation, by 
$$e^{-(1-s-\delta')^2 g}\, \bra{\Psi}\Id\otimes (\hat{B}_{q_R}^{a_R})^\dagger \hat{B}_{q_R}^{a_R}\ket{\Psi} \,=\, e^{-(1-s-\delta')^2 g}\, \Tr\big( B_{q_R}^{a_R}\rho\big)$$ 

Finally, we must account for the small probability $\delta$ that the serial property does not hold; for those sets $G$ we can trivially bound the success probability, conditioned on Bob answering $a_R$ to $q_R$, by $\Tr\big(B_{q_R}^{a_R}\rho\big)$.
\end{proof}

We finally turn to the proof of our main theorem. 

\begin{proof}[Proof of Theorem~\ref{thm:mainproj}]
We first set parameters: let $C_0$ be a large enough constant, $\eps = C_0^{-1}\delta^{2}$ (recall that $\delta$ is the target value for the repeated game $G_{FK}(\ell)$), $\eta =  C_0^{-1}\delta^{24c_2}(1-s)$ (where $c_2$ is the constant which appears in Claim~\ref{claim:blocksgen}), $g = C_0 \log(1/\delta)(1-s)^{-1}$, and $\ell \geq C_0^{15} \delta^{-125c_2}(1-s)^{-4}$. Recall also that $C_1$ was defined as $C_1 = \sqrt{\ell}$, and $C_2 = \ell-C_1$. This choice of parameters satisfies the following constraints:
\begin{itemize}
\item $\eta\,\eps^3 > 16 \,C_1^{-1/2}$, which is used in Lemma~\ref{lem:alive}.
\item $\eta \geq C_2^{-1/2}$, which is used in Fact~\ref{fact:serial} and subsequent claims.
\item $\eps \geq C_1\, C_2^{-1}$, which is used in Claim~\ref{claim:deadsuccess}. 
\end{itemize}

As before, in game $G_{FK(\ell)}$, we can think of the referee as first picking $r^* \leq C_1/2$ pairs of questions $(R,(q'_R,q_R))$
for the players, then picking $g$ pairs $(G_1,(q'_{G_1},q_{G_1}))$, then $C_1-r^*-g$ pairs $(G_2,(q'_{G_2},q_{G_2}))$ and finally
$C_2$ independent pairs of confuse questions $(F,(q'_F,q_F))$. Let $G=G_1\cup G_2$ and $(q',q)=(q'_Rq'_Gq'_F,q_Rq_Gq_F)$. Let
$\{A_{q'}^{a'}\}_{a'}$ be Alice's POVM on questions $q'$, and $\{B_q^a\}_{a}$ Bob's POVM on questions $q$.

By Lemma~\ref{lem:alive}, one of two cases hold. Either a $(1-\eps)$ fraction of blocks $(R,q_R)$ are $\eps$-dead, in which case the player's success probability is readily bounded by $\eps+\sqrt{2\eps}$ by Claim~\ref{claim:deadsuccess}. Otherwise, it must be that we are in case 2 of the lemma, so that $\eps$-alive blocks are for the most part serial. Note that any dead blocks contribute at most $\sqrt{2\eps}$ to the success probability, by Claim~\ref{claim:deadsuccess}. A similar argument to that in Claim~\ref{claim:deadsuccess} shows that alive blocks which are not $(1-\eta)$-serial also contribute at most $\sqrt{2\eps}$, given the fact that we are in the case 2. of Lemma~\ref{lem:alive}, and there can only be few such blocks by~\eqref{eq:alivenotserial}.

Suppose $(R,q_R,a_R)$ is $(1-\eta)$-serial. By Lemma~\ref{lem:productgen}, for every $(i,q_i)$ there exists a projective measurement $\{\Pi_{q_i}^{a_i}\}_{a_i}$, depending only on $q_R,a_R,q_i,a_i$, such that with probability at least $(1-2\eta^{1/4c_2}-e^{-2g})$ over the choice of $(G,q_{G})$ such that $|G|=g$ there is a partition $G_1 = G'\cup G''$ such that $g''=|G''| \geq (1-4\eta^{c/4})g$ such that Eqs.~\eqref{eq:prodc2} and~\eqref{eq:prodc1} from Lemma~\ref{lem:productgen} are satisfied, 
where $\rho_G = \rho^{1/2} B_{q_R q_G}^{a_R} \rho^{1/2}$. To alleviate notation we let $\Pi =  \Pi_{q_1}^{a_1}\cdots \Pi_{q_{g''}}^{a_{g''}}$, and we first use Cauchy-Schwarz to bound
\begin{align}
\sum_{a'_{G''},a_{G''}}  & \bra{\Psi} A_{q'_R q'_G }^{a_R a'_{G''}} \otimes (\hat{B}_{q_R q_G}^{a_R a_{G''}})^\dagger(\Id - \Pi^\dagger\Pi) \hat{B}_{q_R q_G}^{a_R a_{G''}} \ket{\Psi} \notag\\
&\leq  \|A_{q'_R q_G}^{a_R }\|_\rho \, \Big\| \sum_{a_{G''}} ( \hat{B}_{q_R q_G }^{a_R a_{G''}})^\dagger(\Id - \Pi^\dagger \Pi) \hat{B}_{q_R q_G}^{a_R a_{G''}} \Big\|_\rho\notag\\
&\leq  \|A_{q'_R q_G}^{a_R }\|_\rho \, \Big(\sum_{a_{G''}} \Tr_{\rho_G} \big(( \hat{B}_{q_R q_G }^{a_R a_{G''}})^\dagger(\Id - \Pi^\dagger \Pi) \hat{B}_{q_R q_G}^{a_R a_{G''}}\big)\Big)^{1/2}\notag\\
&\leq O\big(\sqrt{g} \eta^{1/(16c_2)}\big) \|A_{q'_R q_G}^{a_R }\|_\rho\, \alpha_{a_R}^{1/2}\label{eq:fbound1}
\end{align}
where $\rho_G = \rho^{1/2}B_{q_R q_G}^{a_R} \rho^{1/2}$, the first inequality is by Cauchy-Schwarz, the second uses $(\Id-\Pi^\dagger\Pi)\leq\Id$, the last is by Eq.~\eqref{eq:prodc1} from Lemma~\ref{lem:productgen}, and $\alpha_{a_R}$ was defined in Eq.~\eqref{eq:alphar} (where here we substitute $\hat{B}_{q_R}^{a_R}$ for $\hat{X}_{q_R}^{a_R}$). A similar argument, using this time Eq.~\eqref{eq:prodc2}, lets us bound 
\begin{align}
\sum_{a'_{G''},a_{G''}}   \bra{\Psi} A_{q'_R q'_G }^{a_R a'_{G''}} \otimes \big(\hat{B}_{q_R q_G}^{a_R a_{G''}} - \hat{B}_{q_R q_G}^{a_R }\big)^\dagger  \Pi^\dagger\Pi \big(\hat{B}_{q_R q_G}^{a_R a_{G''}} - \hat{B}_{q_R q_G}^{a_R }\big) \ket{\Psi}  \leq  O\big(g \eta^{1/(8c_2)}\big) \|A_{q'_R q_G}^{a_R }\|_\rho\, \alpha_{a_R}^{1/2}\label{eq:fbound2}
\end{align}
and hence combining~\eqref{eq:fbound1} and~\eqref{eq:fbound2} we get
\begin{align*}
\sum_{a'_{G''},a_{G''}}  \big| \bra{\Psi} A_{q'_R q'_G }^{a_R a'_{G''}} \otimes \big(B_{q_R q_G}^{a_R a_{G''}} - (\hat{B}_{q_R q_G}^{a_R })^\dagger  \Pi^\dagger\Pi \hat{B}_{q_R q_G}^{a_R} \big) \ket{\Psi} \big| \leq  O\big(\sqrt{g} \eta^{1/(16c_2)}\big) \|A_{q'_R q_G}^{a_R }\|_\rho\, \alpha_{a_R}^{1/2}
\end{align*}
Finally, by Claim~\ref{claim:exptrace3} we have
\begin{align*}
\textsc{E}_{(G,q_G)} &\Big[\sum_{a'_{G''},a_{G''}}\big| \bra{\Psi} A_{q'_R q'_G }^{a_R a'_{G''}} \otimes \big((\hat{B}_{q_R}^{a_R })^\dagger  \Pi^\dagger\Pi \hat{B}_{q_R}^{a_R} - (\hat{B}_{q_R q_G}^{a_R })^\dagger  \Pi^\dagger\Pi \hat{B}_{q_R q_G}^{a_R} \big) \ket{\Psi} \big|\Big]\\
&\leq 4\|A_{q'_R q_G}^{a_R }\|_\rho\,\Exs{(G,q_G)}{ \big| \big\|B_{q_R}^{a_R } \big\|_\rho^2 - \big\|B_{q_R q_G}^{a_R } \big\|_\rho^2}^{1/2}\\
&\leq 4\eta \|A_{q'_R q_G}^{a_R }\|_\rho\, \alpha_{a_R}^{1/2}
\end{align*}
where for the first inequality we used $\sum_{a_G''}\Pi^\dagger\Pi = \Id$, and for the second that $\eta\geq C_2^{-1}$.
Hence the statistical distribution of outcomes produced by Alice and Bob (conditioned on answering $a_R$ to $q_R$) is close to that which would be obtained if Bob was to use the operators $(B_{q_R}^{a_R })^\dagger \Pi^\dagger \Pi B_{q_R}^{a_R }$ as his POVM on questions $q_G$. But the success probability of the latter, when summed over all valid answers to the pair of questions $(q'_{G''},q_{G''})$, can be bounded by Claim~\ref{claim:prodsuccess}.
 Hence summing over all $a_R$ (and using $\sum_{a_R}\|A_{q'_R q_G}^{a_R }\|_\rho\, \alpha_{a_R}^{1/2} \leq 3$)  and taking the expectation over $q_R$, the average winning probability of the players for all $(1-\eta)$-serial blocks $(R,q_R,a_R)$ is at most 
$$O\big(\sqrt{g}\, \eta^{1/(16c_2)} +2\eta^{c/4}+ e^{-2g}+ e^{-(1-s-4\eta^{1/4c_2})^2g}\big)$$
where we also accounted for those (rare) choices of $(G,q'_G,q_G)$ for which the previous bounds do not hold. 
 Given our choice of parameters $\eps,\eta, g$ and $\ell$, it can be checked that this expression is $\ll\delta$. Combining this bound with the one resulting from dead blocks shows that the winning probability of the players is at most $\delta$, which proves the theorem as long as $\ell = \poly(\delta^{-1},(1-s)^{-1})$ is large enough. 
\end{proof}

We conclude this section by briefly explaining how the proof of Theorem~\ref{thm:mainproj} can be adapted to prove Theorem~\ref{thm:maincons}. The main reason the proof carries over is that, in the proof of Theorem~\ref{thm:mainproj}, we only used the projection property for a subset of the game questions (to bound the success over dead blocks), while for $(1-\eta)$-serial blocks the game questions were only used in conjunction with the fact that the value of the game was at most $s$. Here, consistency rounds will play the role of the game questions previously in $R$, and game rounds will play the role of those game questions previously in $G$ (or rather its small subset $G_1$).

\begin{proof}[Proof of Theorem~\ref{thm:maincons}]
In game $G_{DR(\ell)}$, we think of the referee as first picking $r^* \leq C_1/2$ pairs of consistency questions
$(R,(q'_R,q_R))$ for the players, then picking $C_1/2-r^*$ additional consistency pairs $(R',(q'_{R'},q_{R'}))$, $C_1/2$ pairs  of
game questions $(G,(q'_{G},q_{G}))$ and finally $C_2$ independent pairs of confuse questions $(F,(q'_F,q_F))$. Let
$(q',q)=(q'_Rq'_{R'}q'_Gq'_F,q_Rq_{R'}q_Gq_F)$. 

Assume a choice of parameters made that is similar to the one in the proof of Theorem~\ref{thm:mainproj}. As before, we can apply Lemma~\ref{lem:alive} to Bob's strategy $B_q^a$, distinguishing between two cases.

In the first case, a fraction $(1-\eps)$ of blocks $(R,q_R)$ are dead, for $|R|=r^*$. Then Claim~\ref{claim:deadsuccess} again applies, as the only property we used in its proof was that any answer of Alice induced a fixed answer for Bob, which is the case for consistency questions.

In the second case, a fraction $\eps$ of blocks $(R,q_R)$ are alive. Those blocks which are dead can be dealt with as in the previous case, and we can focus on blocks $(R,q_R,a_R)$ which are $(1-\eta)$-serial. Here we can reason exactly as in Theorem~\ref{thm:mainproj}, using Claim~\ref{claim:prodsuccess} with $G_1$ chosen as a subset of the questions in $G$, and the remaining consistency questions playing the role of the remaining game questions before.
\end{proof}

%%%%%%%%%%%%%%%%%%%%%%%%%%%%%%%%%%%%%%
%%%%%%% Approx block-diag
%%%%%%%%%%%%%%%%%%%%%%%%%%%%%%%%%%%%%%

\section{Approximate block-diagonalization of almost-orthogonal operators}\label{sec:blockdiag}

In this section we prove our {\em orthogonalization lemma}, Lemma~\ref{lem:blockdiaggen} below, which shows that pairwise
almost-orthogonal operators are close to having a joint block-diagonal decomposition. The main ingredient in its proof is a robust orthogonalization lemma for families of pairwise almost-orthogonal projectors, Lemma~\ref{lem:kproj}. 

The proof of Lemma~\ref{lem:kproj} is based on a variant of Sch\"oneman's solution to the ``orthogonal Procrustes\footnote{According to Wikipedia, Procrustes, or ``the stretcher'', a figure from Greek mythology, was a rogue smith and bandit from Attica who physically attacked people, stretching them, or cutting off their legs so as to make them fit an iron bed's size.} problem''. Given any square matrices $A$ and $B$, this is the problem of finding the orthogonal matrix $\Omega$ which minimizes
$$\Omega:=\,\text{argmin } \| A-B\Omega\|_F^2 $$
where $\|X\|_F^2 = d^{-1} \Tr(X^\dagger X)$ is the normalized Frobenius norm.  
Sch\"oneman~\cite{Schonemann1966} showed that the optimal $\Omega$ is $\Omega = UV^\dagger$, where $U\Sigma V^\dagger$ is the singular value decomposition of $B^T A$.\footnote{We are grateful to the user ``ohai'' of MathOverflow.net for pointing out the connection between this problem and that of the robust orthonormalization of almost-orthogonal vectors.} Indeed, given unit vectors $\ket{u_1},\ldots,\ket{v_k}$, one can let $A$ be the matrix with columns the $\ket{u_i}$, and $B$ the identity. In this case, the orthogonal Procruste's problem consists in finding the best rigid rotation which maps the canonical basis of space to the vectors $\ket{v_i}$, where the error is measured in the least squares sense --- the columns of the corresponding orthogonal matrix will then form an orthonormal family close to the $\ket{u_i}$. 

We carry out this solution precisely in Claim~\ref{claim:vectors} below, which, even though we will not use it directly, contains all the intuition necessary to solve our original problem on positive matrices. Unfortunately, the solution to the latter is made more involved technically by the the matrices not being of rank $1$, and the slightly unorthodox (and, in particular, not rotationally invariant) way in which we measure the error.  

\begin{claim}\label{claim:vectors} Let $\ket{u_1},\ldots,\ket{u_k} \in\C^k$ be unit vectors such that $\frac{1}{k}\sum_{i\neq j} \langle u_i , u_j \rangle^2 \leq \eps$. Then there exist orthogonal unit vectors $\ket{v_1},\ldots,\ket{v_k} \in \C^k$ such that $\frac{1}{k} \sum_i \big\|\,\ket{u_i}-\ket{v_i}\,\big\|^2 \leq \eps$.
\end{claim}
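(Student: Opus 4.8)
The plan is to follow Sch\"oneman's solution to the orthogonal Procrustes problem, as the surrounding text advertises. Let $A$ be the $k\times k$ matrix whose $i$-th column is $\ket{u_i}$. The hypothesis $\frac1k\sum_{i\neq j}\langle u_i,u_j\rangle^2\le\eps$ together with $\|u_i\|=1$ says precisely that $\frac1k\|A^\dagger A - \Id\|_F^2 = \frac1k\sum_{i\neq j}|\langle u_i,u_j\rangle|^2 \le \eps$ (using the normalized Frobenius norm $\|X\|_F^2 = \frac1k\Tr(X^\dagger X)$, so that $\|\Id\|_F^2=1$; the diagonal entries of $A^\dagger A-\Id$ vanish). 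So $A$ is close to a matrix with orthonormal columns. I would then take $V$ to be the unitary factor in the polar decomposition $A = V P$, where $P = (A^\dagger A)^{1/2}\ge 0$ — equivalently, writing the SVD $A = W\Sigma Z^\dagger$, set $V = WZ^\dagger$ — and let $\ket{v_i}$ be the $i$-th column of $V$. These are orthonormal by construction. (If $A$ is singular one perturbs slightly, or simply picks any unitary completing the partial isometry; this is a standard limiting argument and I would not belabor it.)

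The remaining point is to bound $\frac1k\sum_i\|\ket{u_i}-\ket{v_i}\|^2 = \|A - V\|_F^2 = \frac1k\Tr\big((A-V)^\dagger(A-V)\big)$. Expanding, $\|A-V\|_F^2 = \|A\|_F^2 + \|V\|_F^2 - \frac2k\,\mathrm{Re}\,\Tr(V^\dagger A)$. Now $\|A\|_F^2 = 1$ since each column is a unit vector, and $\|V\|_F^2 = 1$ since $V$ is unitary. With the polar decomposition $A = VP$ we get $V^\dagger A = P = (A^\dagger A)^{1/2}$, which is positive semidefinite, so $\mathrm{Re}\,\Tr(V^\dagger A) = \Tr\big((A^\dagger A)^{1/2}\big) = \sum_i \sqrt{\lambda_i}$ where $\lambda_i\ge 0$ are the eigenvalues of $A^\dagger A$. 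Hence
\begin{equation*}
\|A-V\|_F^2 \;=\; \frac1k\sum_i\big(2 - 2\sqrt{\lambda_i}\big)\;=\;\frac1k\sum_i\frac{2(1-\lambda_i)}{1+\sqrt{\lambda_i}}\;\le\;\frac1k\sum_i |1-\lambda_i|.
\end{equation*}
Finally $\frac1k\sum_i(1-\lambda_i)^2 = \frac1k\|A^\dagger A - \Id\|_F^2 \le \eps$ in the normalized norm, so by Cauchy–Schwarz (Jensen) $\frac1k\sum_i|1-\lambda_i| \le \big(\frac1k\sum_i(1-\lambda_i)^2\big)^{1/2}\le \sqrt{\eps}$. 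This gives the bound $\sqrt\eps$; to get the stated bound $\eps$ one should instead keep the factor $1/(1+\sqrt{\lambda_i})$ and argue more carefully — note $\lambda_i$ is close to $1$ so that denominator is close to $2$, giving $\|A-V\|_F^2 \approx \frac1k\sum_i(1-\lambda_i)$, and then use that $\sum_i(1-\lambda_i) = \sum_i(1-\lambda_i)$ can be related to $\sum_i(1-\lambda_i)^2$ only up to sign issues; in fact the cleanest route is the elementary inequality $2-2\sqrt{\lambda}\le (1-\sqrt\lambda)^2\cdot\text{(something)}$ — more simply, $2 - 2\sqrt\lambda = (1-\sqrt\lambda)^2 + (1-\lambda)\le (1-\lambda)^2/(1+\sqrt\lambda)^2 \cdot(\cdots)$. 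I expect the right move is: since $\Tr(A^\dagger A) = \sum\lambda_i = k$ (columns are unit vectors), we have $\sum_i(1-\lambda_i) = 0$, so $\sum_i(2-2\sqrt{\lambda_i}) = \sum_i(2-2\sqrt{\lambda_i}) - \sum_i(1-\lambda_i) = \sum_i(\sqrt{\lambda_i}-1)^2 \le \sum_i(\lambda_i-1)^2 \le k\eps$, using $|\sqrt\lambda - 1|\le|\lambda-1|$ for $\lambda\ge 0$. That yields exactly $\frac1k\sum_i\|\ket{u_i}-\ket{v_i}\|^2\le\eps$.

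The only genuine subtlety — hence the step I would watch most carefully — is the identity $\sum_i\lambda_i = k$, i.e. that the trace of $A^\dagger A$ equals $k$ because the columns of $A$ are unit vectors; this is what lets us trade $2-2\sqrt{\lambda_i}$ for $(\sqrt{\lambda_i}-1)^2$ without loss and land on $\eps$ rather than $\sqrt\eps$. Everything else (polar decomposition, $\|V\|_F=1$, $V^\dagger A\ge 0$, the elementary bounds $|\sqrt\lambda-1|\le|\lambda-1|$ and Cauchy–Schwarz) is routine. The rank-deficient case is handled by a density/perturbation argument. I would present the polar-decomposition identity, the trace identity, and the two-line chain of inequalities, and leave the degenerate case as a remark.
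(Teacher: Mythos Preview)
Your proposal is correct and is essentially the paper's proof: both take the unitary factor $V=WZ^\dagger$ of the polar/SVD decomposition of the column matrix, and both finish with the pointwise inequality $(1-\sqrt{\lambda_i})^2\le(1-\lambda_i)^2$. The only difference is cosmetic: the paper reaches $\|A-V\|_F^2=\frac1k\sum_i(1-\sqrt{\lambda_i})^2$ in one line via unitary invariance of the Frobenius norm ($\|A-V\|_F=\|V(P-\Id)\|_F=\|P-\Id\|_F$), so your detour through the trace identity $\sum_i\lambda_i=k$ is not actually needed.
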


\begin{proof}
Let $X$ be the $k\times k$ matrix whose columns are made of the vectors $\ket{u_i}$, expressed in the canonical basis. The SVD of $X$ is $X = U \Sigma V^\dagger$, where $U,V$ are unitary and $\Sigma$ is diagonal with the singular values $s_i$ of $M$ on the the diagonal. Then 
\beq\label{eq:sibound}
 \frac{1}{k} \sum_{i=1}^k (1-s_i^2)^2 \,=\, \|\Sigma^\dagger \Sigma -\Id \|_F^2 \, = \,\|X^\dagger X - \Id \|_F^2 \,=\, \frac{1}{k}\sum_{i\neq j} \big|\langle u_i , u_j \rangle\big|^2 \, \leq \,\eps 
 \eeq
where for the first equality we used the unitary invariance of the Frobenius norm, and the second is by definition of $X$ and uses the fact that the $\ket{u_i}$ have unit norm. Let $Y = UV^\dagger$. $Y$ is a unitary matrix so its column vectors $\ket{v_i}$ form an orthonormal family. We have
$$
\frac{1}{k} \sum_{i=1}^k \big\|\,\ket{u_i}-\ket{v_i}\,\big\|_2^2 \,=\, \|X-Y\|_F^2 \, =\, \|\Id -\Sigma\|_F^2 \, = \, \frac{1}{k} \sum_{i=1}^k (1-s_i)^2
$$
which can be bounded by~\eqref{eq:sibound} since $(1-s_i)^2 \leq (1-s_i)^2(1+s_i)^2 = (1-s_i^2)^2$. 
\end{proof}

We now extend this claim to the case of almost-orthogonal projections, which need not have rank $1$, and to a slightly different way of measuring the error (most of the difficulty in proving the lemma comes from the different norm rather than from the higher rank). In order to understand the following, it may be helpful to first consider the case where $\rho_i = (dk)^{-1} \Id$ for every $i$. 

\begin{lemma}\label{lem:kproj} Let $\rho_i$, $i=1,\ldots,k$ be positive matrices, and $\rho := \sum_i \rho_i$. Let $P_1,\ldots,P_k$ be $d$-dimensional projectors such that $$\sum_{i\neq j} \Tr(P_iP_jP_i\, \rho_i) \leq \eps\qquad \text{and}\qquad \sum_{i\neq j} \Tr(P_i\,\rho_j) \leq\eps$$
 for some $0< \eps \leq \Tr(\rho)$. Then there exists orthogonal projectors $Q_1,\ldots,Q_k$ such that 
 $$\sum_{i=1}^k \Tr\big((P_i - Q_i)^2\,\rho_i\big) = O\big(\eps^{1/2}\big)\, \Tr(\rho)^{1/2}$$
\end{lemma}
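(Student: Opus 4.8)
The strategy is to follow the template of Claim~\ref{claim:vectors} (Schönemann's solution to the orthogonal Procrustes problem), but to carry it out ``weighted'' by the matrices $\rho_i$ and for the non-rotationally-invariant error measure $\sum_i \Tr((P_i-Q_i)^2\rho_i)$. The first step is to package the projectors into a single operator on an enlarged space: I would work on $\C^d \otimes \C^k$ and form the operator $X := \sum_{i=1}^k P_i \otimes \ketbra{i}{i}$ (or, to incorporate the weights, a suitably $\rho_i$-dressed version of it). The two hypotheses $\sum_{i\neq j}\Tr(P_iP_jP_i\rho_i)\le\eps$ and $\sum_{i\neq j}\Tr(P_i\rho_j)\le\eps$ are exactly the statements that the ``off-diagonal'' part of $X^\dagger X$ (measured in the $\rho$-weighted norm $\|\cdot\|_\rho$ from~\eqref{eq:seminorm}, suitably adapted to the enlarged space) is small — these are the analogue of the bound $\frac1k\sum_{i\neq j}|\langle u_i,u_j\rangle|^2\le\eps$ in Claim~\ref{claim:vectors}.

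Second, I would take a polar-type / singular value decomposition of $X$ relative to the weight $\rho$, i.e. write $X \approx U$ where $U$ is a partial isometry obtained from the ``$UV^\dagger$'' part of the SVD of $X$, and argue that $\|X - U\|_\rho^2$ is controlled by the size of the off-diagonal mass, using the elementary inequality $(1-s)^2 \le (1-s^2)^2$ for $s\in[0,1]$ exactly as at the end of Claim~\ref{claim:vectors}'s proof — except that since $X$ is built from projectors the relevant singular values are close to $0$ or $1$ rather than all close to $1$, so one really wants the statement ``$X$ is close to the projector onto the span of its large singular directions''. Concretely I expect to define $Q_i$ by compressing the block structure of this nearby isometry/projector back down: $Q_i := (\Id\otimes\bra{i})\, \Pi_{\text{big}}\, (\Id\otimes\ket{i})$ where $\Pi_{\text{big}}$ projects onto the span of the singular vectors of $X$ with singular value $\ge 1/2$, and then check that the $Q_i$ are orthogonal projectors (orthogonality comes for free from the block-diagonal form in the $\C^k$ register once $\Pi_{\text{big}}$ is genuinely a projector) and that $\sum_i \Tr((P_i - Q_i)^2 \rho_i)$ is small.

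Third comes the bookkeeping that converts the $\eps$ appearing in the hypotheses into the $O(\eps^{1/2})\Tr(\rho)^{1/2}$ in the conclusion. The loss of a square root is the telltale sign of a Cauchy--Schwarz step: somewhere one has a quantity of the form $\Tr(A\rho_i)$ with $A$ not a priori positive, and one bounds it by $\|A\|_{\rho_i}\cdot\Tr(\rho_i)^{1/2}$ (or by $\big(\sum_i\|A_i\|_{\rho_i}^2\big)^{1/2}\big(\sum_i\Tr(\rho_i)\big)^{1/2} = \big(\sum_i\|A_i\|_{\rho_i}^2\big)^{1/2}\Tr(\rho)^{1/2}$), and the inner sum of squared $\rho$-norms is exactly the $O(\eps)$ coming from the two hypotheses. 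I would also use $\rho_i\le\rho$ freely to pass between $\|\cdot\|_{\rho_i}$ and $\|\cdot\|_\rho$, and the fact that $\|\cdot\|_\rho$ obeys Cauchy--Schwarz (as noted after~\eqref{eq:seminorm}).

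The main obstacle, as the authors themselves flag in the text, is that the error measure is not unitarily invariant: in Claim~\ref{claim:vectors} one gets to use $\|X - Y\|_F = \|\Id - \Sigma\|_F$ by rotating both $U$ and $V$ away, but here the weights $\rho_i$ break that invariance, so one cannot simply diagonalize. The fix I anticipate is to do the ``Procrustes rotation'' only in the directions where it is cheap — i.e. work with $\rho^{1/2} X \rho^{1/2}$-type quantities or, more robustly, to note that the operator inequality $(P_i - Q_i)^2 \le 2(P_i - P_i Q_i) + 2(Q_i - Q_i P_i Q_i)$ or similar lets one reduce everything to traces of the form $\Tr(P_i(\Id - Q_i)P_i\,\rho_i)$ and $\Tr(Q_i(\Id-P_i)Q_i\,\rho_i)$, each of which can be estimated directly from the closeness of $X$ to $\Pi_{\text{big}}$ without needing a global rotation. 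Getting the two hypotheses (the $P_iP_jP_i$ one and the $P_i\rho_j$ one — note these are genuinely different because the weighting is by $\rho_i$ in the first and $\rho_j$ in the second) to combine cleanly into a single bound on the off-diagonal mass of $X^\dagger X$ in the enlarged space is the delicate point, and I expect the second hypothesis is exactly what is needed to control the ``cross terms'' that the first hypothesis alone misses when $\rho_i$ and $P_j$ don't commute.
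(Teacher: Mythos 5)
Your overall strategy coincides with the paper's: collect the projectors into a single operator whose Gram matrix the hypotheses force to be close to the identity in a $\rho$-weighted sense, take the unitary part of its polar decomposition after cutting off extreme singular values near $1/2$, and pay a square root via Cauchy--Schwarz. Two of your concrete steps, however, would fail as written. First, the packaging $X=\sum_i P_i\otimes\ketbra{i}{i}$ is block-diagonal, so $X^\dagger X$ is again $\sum_i P_i\otimes\ketbra{i}{i}$: it contains no cross terms $P_iP_j$ and hence carries no trace of the near-orthogonality hypothesis. You need the ``row'' form, and in fact the paper goes one step further: it picks an orthonormal basis $\{\ket{x_{i,l}}\}_l$ of the range of each $P_i$ and sets $X=\sum_{i,l}\ket{x_{i,l}}\bra{e_{i,l}}$, so that the diagonal blocks of $X^\dagger X$ are exactly identities and the off-diagonal entries are the overlaps $\braket{x_{i,l}}{x_{j,l'}}$; only then is ``$X^\dagger X\approx\Id$'' the correct reading of the hypothesis, in the precise weighted form $\sum_i\Tr\big((X^\dagger X-\Id)^2X_i^\dagger\rho_iX_i\big)\le\eps$.

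Second, and more seriously, $Q_i:=(\Id\otimes\bra{i})\,\Pi_{\mathrm{big}}\,(\Id\otimes\ket{i})$ is a compression of a projector, hence a positive contraction but in general not a projector, and distinct compressions of the same projector are not mutually orthogonal --- so orthogonality does not ``come for free from the block structure''. The mechanism that actually delivers orthogonal projectors is different: writing $X=U\Sigma$ in polar form (the $V$ of the SVD absorbed into the choice of basis $\ket{e_{i,l}}$) and letting $\Pi$ be the spectral cutoff of $\Sigma$ to $[1/2,2]$, one defines $Q_i$ as the sum of $\ket{\tilde u_{i,l}}\bra{\tilde u_{i,l}}$ over those columns $\ket{\tilde u_{i,l}}$ of $U\Pi$ indexed by $i$. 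These columns are orthonormal or zero because $U$ is unitary and $\Pi$ diagonal, so the $Q_i$ are genuinely pairwise orthogonal projectors by construction. The remaining elements of your plan --- the second hypothesis controlling the cross terms $\Tr(P_i\rho_j)$ when $\rho$ replaces $\rho_i$, the inequality $(1-s)^2\le(1-s^2)^2$, and Cauchy--Schwarz producing the $\eps^{1/2}\Tr(\rho)^{1/2}$ --- do match the paper's bookkeeping, but without the correct construction of $X$ and of the $Q_i$ the argument does not get off the ground.
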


\begin{proof}
For every $i$ write $P_i = \sum_l \ket{x_{i,l}}\bra{x_{i,l}}$, where the $\{\ket{x_{i,l}}\}_{l}$ are orthonormal, and let $X_i:=\sum_l \ket{x_{i,l}}\bra{e_{i,l}}$, $X:=\sum_i X_i$, where $\ket{e_{i,l}}$ is the canonical basis: $X$ has the $\ket{x_{i,l}}$ as its columns. In order for $X$ to be a square matrix, if necessary we extend the space in which the $\ket{x_{i,l}}$ vectors live, so as to make it the same dimension as $\text{Span}\{\ket{e_{i,l}}\}$. 
The inner-product condition on the $P_i$ implies that 
\beq\label{eq:pppbound1}
\sum_{i\neq j} \Tr\big(P_iP_jP_i\,\rho_i\big) \,=\, \sum_{i\neq j} \sum_{l,l',l''} \bra{x_{i,l}} x_{j,l'} \rangle \bra{x_{j,l'}} x_{i,l''}\rangle \bra{x_{i,l''}} \rho_i \ket{x_{i,l}} \,\leq\,\eps
\eeq
Write $X^\dagger X = \sum_{i,j,l,l'} \bra{x_{i,l}} x_{j,l'} \rangle \,\ket{e_{i,l}}\bra{e_{j,l'}}$, so that 
\beq\label{eq:pppbound2}
\sum_i \Tr \big(\big( X^\dagger X - \Id)^2\,X_i^\dagger \rho_i X_i \big) \,=\, \sum_{i,l,l''} \sum_{(j,l')\neq (i,l),(i,l'')} \bra{x_{i,l}} x_{j,l'} \rangle \bra{x_{j,l'}} x_{i,l''}\rangle \bra{x_{i,l''}} \rho_i \ket{x_{i,l}} \,\leq\,\eps
\eeq
where we used~\eqref{eq:pppbound1} to upper-bound the expression in the middle by $\eps$. Indeed, in the second summation, if $i=j$ then either $l' \neq l$ or $l' \neq l''$, so that one of the inner products $\bra{x_{i,l}} x_{i,l'}\rangle$ or $\bra{x_{i,l'}} x_{i,l''}\rangle$ is $0$, since the $\{\ket{x_{i,l}}\}_l$ are orthogonal. 
 
Let $X = U \Sigma V^\dagger$, where $\Sigma$ is diagonal positive and $U,V$ unitary, be the polar decomposition of $X$. By an appropriate choice of the basis $\ket{e_{i,l}}$ we can assume that $V = \Id$ (if not, re-define $X_i:= X_i V$; this corresponds to changing $\ket{e_{i,l}}\to V^\dagger \ket{e_{i,l}}$). Let $\Pi$ be the projector on the span of the eigenvectors of $\Sigma$ with corresponding eigenvalue at least $1/2$ and at most $2$. $\Pi$ is needed to control eigenvalues of $\Sigma$ which may be too small or too large. 

 Let $\tilde{U} = U\Pi$ and $\tilde{X} = X\Pi$. Let $\ket{\tilde{u}_{i,l}}$ (resp. $\ket{\tilde{x}_{i,l}}$) be the column vectors of $\tilde{U}$ (resp. $\tilde{X}$), so that $\tilde{U} = \sum_{i,l} \ket{\tilde{u}_{i,l}}\bra{e_{i,l}}$. 
We will show that the projectors $Q_i := \sum_l \ket{\tilde{u}_{i,l}}\bra{\tilde{u}_{i,l}}$ are close to the projectors $P_i$, in the sense claimed in the lemma (note that since $U$ is unitary and $\Pi$ a diagonal projector the $Q_i$ are orthogonal projectors, which do not necessarily sum to identity). We first state some consequences of~\eqref{eq:pppbound2}. 

\begin{fact}\label{fact:temp1} The following inequalities holds
\begin{align}
&\sum_{i,l,l'} \bra{\tilde{u}_{i,l}-\tilde{x}_{i,l}}\tilde{u}_{i,l'} - \tilde{x}_{i,l'}\rangle \bra{\tilde{x}_{i,l'}}\rho_i\ket{\tilde{x}_{i,l}}\,\leq\, \eps \label{eq:illbound}\\
&\sum_{i,l} |\bra{\tilde{u}_{i,l}}\rho\ket{\tilde{u}_{i,l}} - \bra{\tilde{x}_{i,l}}\rho\ket{\tilde{x}_{i,l}}| \,\leq\, 2\sqrt{2}\,\eps^{1/2} \Tr(\rho)^{1/2} \label{eq:uxrho}
\end{align}
\end{fact}

\begin{proof}
We start with proving~\eqref{eq:illbound}. Since $\Sigma$ is diagonal, one can immediately check that $X^\dagger X - \Id = (X-U)^\dagger(X+U)$. Note also that $(X+U)(X+U)^\dagger = U(\Id+\Sigma)^2 U^\dagger \geq \Id$. Hence
\begin{align}
\sum_{i} \Tr\big((\Sigma-\Id)^2 X_i^\dagger \rho_i X_i \big) &=
  \sum_i\Tr\big((X-U)^\dagger (X-U) X_i^\dagger \rho_i X_i \big)\notag\\
  &\leq \sum_i\Tr\big((X-U)^\dagger (X+U)(X+U)^\dagger (X-U) X_i^\dagger \rho_i X_i \big) \notag\\
  &\leq \eps \label{eq:ux}
\end{align}
where the last inequality is by~\eqref{eq:pppbound2}. This implies that $\sum_{i} \Tr((\Sigma-\Id)^2 (X_i \Pi)^\dagger \rho_i (X_i\Pi) \leq \eps$ (note that $\Pi$ commutes with $\Sigma$ by definition), which is just~\eqref{eq:illbound}. 

Before turning to the proof of~\eqref{eq:uxrho}, first observe that
\begin{align}
 \Tr((\Sigma-\Id)^2 \Pi X^\dagger \rho X) &= \sum_{i,j} \Tr((\Sigma-\Id)^2 \Pi X_i^\dagger \rho_j X_i\big)\notag\\
 &\leq 2\eps \label{eq:sigmaxbound}
\end{align} 
where the equality uses that $(\Sigma-\Id)^2\Pi)$ is diagonal, and the inequality is by~\eqref{eq:illbound} for the terms $i=j$ and uses $(\Sigma-\Id)^2\Pi \leq \Id$ and the second condition in the lemma for the terms $i\neq j$. From~\eqref{eq:sigmaxbound} we get
\begin{align}
\sum_{i,l} \bra{ \tilde{u}_{i,l} - \tilde{x}_{i,l}} \rho \ket{\tilde{u}_{i,l} - \tilde{x}_{i,l}} &= \Tr( \Pi (X-U)^\dagger \rho (X-U)) \notag\\
& \leq 4\,\Tr\big( \Sigma \Pi \Sigma (X-U)^\dagger \rho (X-U)\big) \notag\\
&= 4\,\Tr\big( (\Id-\Sigma)  \Pi (\Id-\Sigma) X^\dagger \rho X\big) \notag\\
& \leq 8\eps\label{eq:uminvrho}
\end{align}
where the first inequality uses $\Pi\Sigma \geq 1/2 \Pi$, by definition of $\Pi$, and the last is by~\eqref{eq:sigmaxbound}.

We now prove~\eqref{eq:uxrho}. By Cauchy-Schwarz, for every $(i,l)$
$$\bra{\tilde{u}_{i,l}-\tilde{x}_{i,l}}\rho\ket{\tilde{u}_{i,l}} \leq \bra{\tilde{u}_{i,l}-\tilde{x}_{i,l}}\rho\ket{\tilde{u}_{i,l}-\tilde{x}_{i,l}}^{1/2} \bra{\tilde{u}_{i,l}} \rho \ket{\tilde{u}_{i,l}}^{1/2}$$
hence by~\eqref{eq:uminvrho} we see that 
$$\sum_{i,l} |\bra{\tilde{u}_{i,l}-\tilde{x}_{i,l}}\rho\ket{\tilde{u}_{i,l}}| \leq 2\sqrt{2}\,\eps^{1/2} \Tr(\rho)^{1/2}$$
A symmetric inequality can be obtained, and~\eqref{eq:uxrho} follows by the triangle inequality. 
\end{proof} 

As a consequence of Fact~\ref{fact:temp1}, note that
\begin{align}
\Big|\sum_{i,l,l'} \bra{\tilde{u}_{i,l}} \tilde{x}_{i,l'}\rangle \,\bra{\tilde{x}_{i,l'}} \rho_i \ket{\tilde{u}_{i,l}-\tilde{x}_{i,l}}\Big|&\leq \Big(\sum_{i,l,l'} \bra{\tilde{x}_{i,l}}\tilde{x}_{i,l'}\rangle\bra{\tilde{x}_{i,l'}} \rho_i\ket{\tilde{x}_{i,l}} \Big)^{1/2}\Big(\sum_{i,l} \bra{\tilde{u}_{i,l}-\tilde{x}_{i,l}}\rho_i \ket{\tilde{u}_{i,l}-\tilde{x}_{i,l}}\Big)^{1/2}\notag\\
&\leq \Tr(\rho)^{1/2}\cdot ( 8\eps)^{1/2} \,=\, O(\eps^{1/2})\Tr(\rho)^{1/2} \label{eq:uxrho2}
\end{align}
where the first inequality is by Cauchy-Schwarz (and the $\ket{\tilde{u}_{i,l}}$ being orthonormal) and the second uses $\tilde{X}_i\tilde{X}_i^\dagger \leq \Id$, and~\eqref{eq:uminvrho} (with $\rho_i \leq \rho$).

In order to bound the distance between $Q_i = \sum_l \ket{\tilde{u}_{i,l}}\bra{\tilde{u}_{i,l}}$ and $P_i$, we first bound the distance between $Q_i$ and $\tilde{P}_i:= \tilde{X}_i\tilde{X}_i^\dagger$:
\begin{align}
\sum_i \Tr\big((\tilde{P}_i - Q_i)^2\,\rho_i\big) & =  \sum_{i,l} \big(\bra{\tilde{x}_{i,l}} \rho_i \ket{\tilde{x}_{i,l}} + \bra{\tilde{u}_{i,l}} \rho_i \ket{\tilde{u}_{i,l}}\big) - 2\sum_{i,l,l'} \Re\big(\bra{\tilde{u}_{i,l}} \tilde{x}_{i,l'}\rangle \,\bra{\tilde{x}_{i,l'}} \rho_i \ket{\tilde{u}_{i,l}}\big) \notag\\
&\leq 2\sum_{i,l} \bra{\tilde{x}_{i,l}} \rho_i \ket{\tilde{x}_{i,l}} - 2\sum_{i,l,l'} \Re\big(\bra{\tilde{u}_{i,l}} \tilde{x}_{i,l'}\rangle \,\bra{\tilde{x}_{i,l'}} \rho_i \ket{\tilde{x}_{i,l}}\big)+ O(\eps^{1/2} \Tr(\rho)^{1/2})\notag\\
& \leq O(\eps^{1/2} \Tr(\rho)^{1/2})\label{eq:piqi}
\end{align}
where the first inequality is by~\eqref{eq:uxrho} and~\eqref{eq:uxrho2} and the second by~\eqref{eq:illbound}. It remains to bound the distance between the $\tilde{P}_i$ and the $P_i$:
\begin{align}
\sum_i \Tr\big((\tilde{P}_i - P_i)^2\,\rho_i\big) &=  \sum_i \Tr\big( (\Id-\Pi) X_i^\dagger \rho_i X_i \big)\notag\\
&\leq 2\sum_i \Tr\big( |\Id-\Sigma| X_i^\dagger \rho_i X_i \big)\notag\\
&\leq 2\Big(\sum_i \Tr\big( (\Id-\Sigma)^2 X_i^\dagger \rho_i X_i \big)\Big)^{1/2}\Big(\sum_i \Tr\big( X_i^\dagger \rho_i X_i \big)\Big)^{1/2}\notag\\
&\leq 2\eps^{1/2} \Tr(\rho)^{1/2}\label{eq:pipitildez}
\end{align}
where the first inequality uses $(\Id-\Pi) \leq 2|\Sigma - \Id|$ by definition of $\Pi$, the second is Cauchy-Schwarz and the last is by~\eqref{eq:ux}. 
Combining~\eqref{eq:piqi} and~\eqref{eq:pipitildez} finishes the proof of the lemma. 
\end{proof}

Lemma~\ref{lem:kproj} lets us prove the orthogonalization lemma below. In that lemma one can think of the $\hat{Y}_i$ as operators in the Stinespring representation of a measurement $\mathcal{M}_i:\rho\mapsto \hat{Y}_i (\rho\otimes \Id) \hat{Y}_i^\dagger$, where $i$ refers to the $i$-th outcome of the measurement. In that setting the hypothesis of the lemma is that, when $\mathcal{M}$ is performed twice sequentially on a specific state $\rho$, it is likely that identical answers will be obtained. The conclusion is that the operators $\hat{Y}_i$ have an approximate joint block-diagonal form, as described by the orthogonal projectors $\Pi_i$. 

\begin{lemma}\label{lem:blockdiaggen}[Orthogonalization Lemma] There is a $c>0$ such that the following holds. 
Let $\rho_i$, $i=1,\ldots,k$ be positive, $\rho$ such that $\sum_i \rho_i \leq \rho$ and $\hat{Y}_i$, $i=1,\ldots,k$ (possibly rectangular) matrices, be  such that 
\begin{align}
\sum_{i\neq j} \Tr_{\rho_i}\big( \hat{Y}_i^\dagger\, (\hat{Y}_j\,  \hat{Y}_j^\dagger)\, \hat{Y}_i\big) &\leq \alpha \,\Tr(\rho) \label{eq:ortho}
\end{align}
and $\sum_i \hat{Y}_i \hat{Y}_i^\dagger\leq\Id$. Then there exists
orthogonal projectors $\{\Pi_i\}$ such that
$$ \sum_i \Tr_{\rho_i}\big(\hat{Y}_i^\dagger (\Id-\Pi_i) \hat{Y}_i \big) \,\leq O\big( \alpha^{c}\big)\Tr(\rho)$$
\end{lemma}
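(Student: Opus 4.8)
The plan is to deduce the statement from the orthogonalization lemma for projectors, Lemma~\ref{lem:kproj}. I first rewrite the hypotheses in a cleaner form. For any operator $Z$ on the output space, $\Tr_{\rho_i}\!\big(\hat{Y}_i^\dagger Z\hat{Y}_i\big)=\Tr(ZN_i)$ with $N_i:=\hat{Y}_i(\rho_i\otimes\Id_E)\hat{Y}_i^\dagger\succeq 0$; since the (marginalized) POVM elements composing $\hat{Y}_i$ sum to at most $\Id$ one has $\Tr(N_i)\le\Tr(\rho_i)$ and hence $\sum_i\Tr(N_i)\le\Tr(\rho)$. Writing $M_j:=\hat{Y}_j\hat{Y}_j^\dagger$, so that $\sum_jM_j\preceq\Id$, assumption~\eqref{eq:ortho} becomes $\sum_{i\ne j}\Tr(M_jN_i)\le\alpha\,\Tr(\rho)$, and the goal becomes to produce orthogonal projectors $\{\Pi_i\}$ with $\sum_i\Tr\big((\Id-\Pi_i)N_i\big)=O(\alpha^c)\,\Tr(\rho)$. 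As recommended in the text preceding Lemma~\ref{lem:kproj}, I would first run the whole argument in the commuting case $\rho_i=(dk)^{-1}\Id$, in which $N_i$ and $M_i$ are diagonal in a common basis and the estimates below become essentially immediate; the general case is the same argument with Cauchy--Schwarz inserted where commutation was used.

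The core idea is to convert the ``soft'' operators $M_i$ into genuine projectors by thresholding, apply Lemma~\ref{lem:kproj}, and then bound the two errors this introduces. Fix a threshold $t=\alpha^{\beta}$ (with $\beta\in(0,1)$ to be optimized at the end) and let $P_i$ be the spectral projector of $M_i$ onto its eigenvalues exceeding $t$, so that $tP_i\preceq M_i$ and, since $\|M_i\|\le1$, also $M_i\preceq P_i+t\,\Id$; moreover $N_i\preceq M_i$ because the $\rho_i$ are subnormalized states (a matter of rescaling otherwise). I would apply Lemma~\ref{lem:kproj} to the projectors $\{P_i\}$ with weight matrices built from the $N_i$ (the natural choice being the compressions $P_iN_iP_i$). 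The two hypotheses of Lemma~\ref{lem:kproj} are pairwise near-orthogonality of the $P_i$ in the relevant weighted senses, and should follow from~\eqref{eq:ortho} via $P_j\preceq t^{-1}M_j$: for instance $\sum_{i\ne j}\Tr(P_jN_i)\le t^{-1}\sum_{i\ne j}\Tr(M_jN_i)\le(\alpha/t)\,\Tr(\rho)=:\eps$, and similarly for the other hypothesis, with some bookkeeping to make sure the sum over the ``other'' index does not cost a factor $k$. Lemma~\ref{lem:kproj} then returns orthogonal projectors $Q_i$, which we take for the $\Pi_i$, satisfying $\sum_i\Tr\big((P_i-Q_i)^2\,P_iN_iP_i\big)=O(\eps^{1/2})\,\Tr(\rho)^{1/2}$.

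To conclude I would translate this back. Writing $(\Id-\Pi_i)\hat{Y}_i(\rho_i\otimes\Id_E)^{1/2}=(\Id-Q_i)(\Id-P_i)\hat{Y}_i(\rho_i\otimes\Id_E)^{1/2}+(\Id-Q_i)P_i\hat{Y}_i(\rho_i\otimes\Id_E)^{1/2}$, applying the triangle inequality for the Frobenius norm followed by $(a+b)^2\le2a^2+2b^2$, and dropping the contractions $\Id-Q_i$, gives
$$\sum_i\Tr\big((\Id-\Pi_i)N_i\big)\ \le\ 2\sum_i\Tr\big((\Id-P_i)N_i\big)\ +\ 2\sum_i\Tr\big((\Id-Q_i)\,P_iN_iP_i\big).$$
The second sum is small: since $Q_i$ is an orthogonal projector, $\Tr\big((\Id-Q_i)P_iN_iP_i\big)=\big|\Tr\big((P_i-Q_i)P_iN_iP_i\big)\big|$, so by Cauchy--Schwarz the sum is at most $\big(\sum_i\Tr((P_i-Q_i)^2P_iN_iP_i)\big)^{1/2}\big(\sum_i\Tr(N_i)\big)^{1/2}=O(\eps^{1/4})\,\Tr(\rho)^{3/4}$.

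The remaining term, the thresholding tail $\sum_i\Tr\big((\Id-P_i)N_i\big)$, is where I expect the real difficulty to lie, and it is the genuinely quantum part of the argument. The naive estimate $\Tr((\Id-P_i)N_i)\le t\cdot\dim(\Id-P_i)$ is useless, since $M_i$ may carry arbitrarily many small eigenvalues; instead one must use $N_i\preceq M_i$ together with $M_i\preceq P_i+t\,\Id$ to get $\Tr\big((\Id-P_i)N_i\big)\le\Tr\big((\Id-M_i)N_i\big)+t\,\Tr(N_i)$, and then bound $\sum_i\Tr\big((\Id-M_i)N_i\big)$ by a small power of $\alpha$ times $\Tr(\rho)$ by playing the near-orthogonality $\sum_{i\ne j}\Tr(M_jN_i)\le\alpha\Tr(\rho)$ against $\sum_jM_j\preceq\Id$ (and, if necessary, using the explicit Stinespring form of the $\hat{Y}_i$, whose ``column'' POVM elements sum to at most $\Id$). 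Granting a tail bound of the shape $O(t^{\gamma})\,\Tr(\rho)$ for some fixed $\gamma>0$, the choice $\eps=\alpha/t$ together with an optimization of $\beta$ balancing $\eps^{1/4}=\alpha^{(1-\beta)/4}$ against $t^{\gamma}=\alpha^{\beta\gamma}$ yields $\sum_i\Tr\big((\Id-\Pi_i)N_i\big)=O(\alpha^{c})\,\Tr(\rho)$ for a universal $c>0$. Every other step is bookkeeping or a direct application of Cauchy--Schwarz and Lemma~\ref{lem:kproj}; the tail estimate, together with the care needed to keep the various weighted sums free of a factor $k$, is the technical heart of the proof.
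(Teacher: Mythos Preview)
Your overall plan—threshold the $M_i=\hat{Y}_i\hat{Y}_i^\dagger$ to get projectors $P_i$, invoke Lemma~\ref{lem:kproj}, and control the thresholding error—matches the paper's. But you have misidentified where the difficulty lies, and the step you flag as the ``technical heart'' is both easier than you think and not provable by the route you suggest.

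\textbf{The tail bound is trivial, and your proposed proof of it fails.} Since $P_i$ is the spectral projector of $M_i$ onto eigenvalues $\ge t$, the operator $(\Id-P_i)\hat{Y}_i$ has all singular values below $\sqrt{t}$; hence $\hat{Y}_i^\dagger(\Id-P_i)\hat{Y}_i\le t\,\Id$ as an operator on the domain, and $\sum_i\Tr_{\rho_i}\big(\hat{Y}_i^\dagger(\Id-P_i)\hat{Y}_i\big)\le t\,\Tr(\rho)$ immediately. By contrast, your detour through $\sum_i\Tr((\Id-M_i)N_i)$ cannot work: take $k=2$ and $M_1=\tfrac12 P$, $M_2=\tfrac12(\Id-P)$ for a projector $P$. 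Then each $N_i$ is supported in the range of $M_i$, so $\sum_{i\ne j}\Tr(M_jN_i)=0$, i.e.\ $\alpha=0$; yet $\Tr((\Id-M_1)N_1)=\tfrac12\Tr(N_1)$, so $\sum_i\Tr((\Id-M_i)N_i)$ is of order $\Tr(\rho)$. No power of $\alpha$ bounds this.

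\textbf{The step you call ``bookkeeping'' is the real obstacle.} To feed the $P_i$ into Lemma~\ref{lem:kproj} you need $\sum_{i\ne j}\Tr(P_iP_jP_i\,N_i)$ small. Using $P_j\le t^{-1}M_j$ reduces this to $t^{-1}\sum_i\Tr(P_iY_{-i}P_i\,N_i)$ with $Y_{-i}=\sum_{j\ne i}M_j$. But there is no reason $\Tr(P_iY_{-i}P_i\,N_i)$ should be close to $\Tr(Y_{-i}N_i)$: a Cauchy--Schwarz/Fact~\ref{fact:posblocks} argument gives only $\Tr(P_iY_{-i}P_i\,N_i)\le 2\big(\Tr(Y_{-i}N_i)+\Tr((\Id-P_i)N_i)\big)$, and the second term is of order $t\,\Tr(\rho_i)$. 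After the $t^{-1}$ in front, you are left with $\eps=O(\Tr(\rho))$, which renders Lemma~\ref{lem:kproj} vacuous. The paper resolves this with a \emph{second} spectral cutoff: let $Q_i\le P_i$ project onto the eigenspace of $P_iY_{-i}P_i$ with eigenvalue $\le\beta_2$. Then $\|Q_iY_{-i}Q_i\|\le\beta_2$, which (via Fact~\ref{fact:posblocks} and the bound $\sum_i|\Tr((\Id-Q_i)Y_{-i}Q_i\,N_i)|\le\beta_1^{1/2}\beta_2\Tr(\rho)$) gives $\sum_{i\ne j}\Tr(Q_iQ_jQ_i\,N_i)\le\beta_1^{-1}(\alpha+2\beta_1^{1/2}\beta_2)\Tr(\rho)$, now genuinely small after optimizing $\beta_1,\beta_2$. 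The extra error from replacing $P_i$ by $Q_i$ is controlled by $P_i(\Id-Q_i)P_i\le\beta_2^{-1}P_iY_{-i}P_i$. This double threshold is the missing idea in your outline.
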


\begin{proof} The idea of the proof is simple. Let $\beta_1,\beta_2>0$ be parameters to be chosen later. For every $i$, let $P_i$ be the projector on the eigenvectors of $\hat{Y}_i\,  \hat{Y}_i^\dagger$ with corresponding eigenvalue at least $\beta_1$. Since $P_i$ contains all the large eigenvalues, $P_i \hat{Y}_i \approx \hat{Y}_i$. Moreover, by definition $P_i \le \beta_1^{-1}\hat{Y}_i\,  \hat{Y}_i^\dagger$. These two properties together with~\eqref{eq:ortho} \emph{almost} imply that $\sum_{i\neq j} \Tr_{\rho_i}\big( \hat{Y}_i^\dagger P_i\, P_j\, P_i \hat{Y}_i\big) \lesssim \beta^{-1} \alpha \Tr(\rho)$. Choosing $\beta_1 \approx \sqrt{\alpha}$, we could then apply Lemma~\ref{lem:kproj} to the $P_i$ and states $\sigma_i := \hat{Y}_i \rho_i \hat{Y}_i^\dagger$, recovering close orthogonal projectors $\Pi_i$ which would satisfy the required condition. Carrying out this intuition precisely is a bit tedious, and we now proceed to the details. We will use the following simple fact.

\begin{fact}\label{fact:posblocks} Let $A\geq 0$, $\rho \geq 0$, and $\Pi$ a projection. Let
$$a = \Tr_\rho(A),\qquad b = |\Trho((\Id-\Pi)A\Pi)|\qquad\text{and}\qquad c = \Trho\big((\Id-\Pi)A(\Id-\Pi)\big)$$
Then both the following hold
\begin{align*}
&\Trho\big(\Pi A \Pi\big) \leq \big( \sqrt{a} + \sqrt{c} \big)^2 \leq 2(a+c)\\
& \Trho\big(\Pi A \Pi\big) \,\leq\, \Big(\frac{\sqrt{a} + \sqrt{a+4 b}}{2}\Big)^2 \,\leq\, a + 2b
\end{align*}
\end{fact}

\begin{proof}
Write $\Pi = (\Pi-\Id)+\Id$, so $\Trho(\Pi A \Pi) \leq |\Trho((\Pi-\Id)A\Pi)| + |\Trho( A \Pi)|$. The second term can be bounded by $a^{1/2} \Trho(\Pi A \Pi)^{1/2}$ by Cauchy-Schwarz. 
Similarly bounding the first term by $c^{1/2} \Trho(\Pi A \Pi)^{1/2}$ yields the first equation. To get the second,
let $X = \Trho(\Pi A\Pi)^{1/2}$ to obtain the equation 
$$ X^2 - a^{1/2}X - b \leq 0$$
Solving and using $X \geq 0$, one finds that this is equivalent to $X \leq (\sqrt{a} + \sqrt{a+4 b})/2$.
\end{proof}
 
 Let $Y_{-i}:= \sum_{j\neq i} \hat{Y}_j \hat{Y}_j^\dagger \leq \Id$, and $Q_i$ be the projector on the eigenvectors of $P_i Y_{-i} P_i$ with eigenvalue at most $\beta_2$. Note that, by definition, $Q_i \leq P_i \leq \beta_1^{-1} \hat{Y}_i \hat{Y}_i^\dagger$ (and in particular $Q_i$ commutes with $P_i$). We first bound the distance between $\hat{Y}_i^\dagger$ and $\hat{Y}_i^\dagger Q_i$: since $\hat{Y}_i^\dagger(\Id-Q_i) = \hat{Y}_i^\dagger(\Id-P_i) + \hat{Y}_i^\dagger P_i (\Id-Q_i)P_i$,
\beq\label{eq:ybound0}
\sum_i \Tr_{\rho_i}\big(\hat{Y}_i^\dagger (\Id-Q_i) \hat{Y}_i\big) = \sum_i \Big(\Tr_{\rho_i}\big(\hat{Y}_i^\dagger (Id-P_i) \hat{Y}_i\big) + \Tr_{\rho_i}(\hat{Y}_i^\dagger P_i (Id-Q_i) P_i \hat{Y}_i)\Big)
\eeq
The first term is easily bounded by $\beta_1\, \Tr(\rho)$. For the second, note that $P_i (Id- Q_i) P_i \leq \beta_2^{-1} P_i Y_{-i} P_i$. Using Fact~\ref{fact:posblocks} with $A^i = Y_{-i}$, $\Pi^i = P_i$, and $\rho^i = \hat{Y}_i \rho_i (\hat{Y}_i)^\dagger$ we get $\sum_i a^i \leq \alpha\Tr(\rho)$ and $\sum_i c^i\leq \beta_1\Tr(\rho)$, so that   
\begin{align*}
\sum_i \Tr_{\rho_i}\big( \hat{Y}_i^\dagger P_i Y_{-i} P_i \hat{Y}_i \big) &\leq 2(\alpha + \beta_1)\Tr(\rho)
\end{align*}
Assuming $\alpha \leq \beta_1$ (which will hold for our choice of parameters), from~\eqref{eq:ybound0} we get
\beq\label{eq:ybound}
\sum_i \Tr_{\rho_i}\big(\hat{Y}_i^\dagger (\Id-Q_i) \hat{Y}_i\big) \,\leq\, O(\beta_2^{-1}\beta_1) \Tr(\rho)
\eeq

Next observe that, by definition of $Q_i$, followed by an application of the Cauchy-Schwarz inequality,
\begin{align}
\sum_{i}\big|\Tr_{\rho_i}\big(\hat{Y}_i^\dagger Q_i Y_{-i} (\Id-Q_i) \hat{Y}_i \big)\big| & =\sum_{i}\big|\Tr_{\rho_i}\big(\hat{Y}_i^\dagger Q_i Y_{-i} (\Id-P_i) \hat{Y}_i \big)\big| \notag\\
& \leq \Big( \sum_i\Tr_{\rho_i}\big(\hat{Y}_i^\dagger (\Id-P_i) \hat{Y}_i \big)\Big)^{1/2} \Big( \sum_i \Tr_{\rho_i}(\hat{Y}_i^\dagger Q_i Y_{-i}^2 Q_i \hat{Y}_i \big)\Big)^{1/2}\notag\\
&\leq \beta_1^{1/2} \beta_2 \Tr(\rho)\label{eq:ybound00}
\end{align}
where we used $Q_i Y_{-i}^2 Q_i \leq \beta_2^2 \Id$, which holds by definition of $Q_i$, to bound the second term in the last inequality. Using the second bound in Fact~\ref{fact:posblocks} with $A^i = Y_{-i}$, $\Pi^i = Q_i$, $\rho^i = \hat{Y}_i \rho_i (\hat{Y}_i)^\dagger$, we get $\sum_i a^i\leq\alpha\Tr(\rho)$ and $\sum_i b^i\leq \beta_1^{1/2} \beta_2 \Tr(\rho)$ by~\eqref{eq:ybound00}, so that 
\begin{align*}
\sum_{i\neq j} \Tr_{\rho_i}\big(\hat{Y}_i^\dagger Q_i Q_j Q_i \hat{Y}_i \big) &\leq \beta_1^{-1}\sum_i \Tr_{\rho_i}\big(\hat{Y}_i^\dagger Q_iY_{-i} Q_i\hat{Y}_i \big)  \\
&\leq \beta_1^{-1} \big(\alpha + 2 \beta_1^{1/2} \beta_2\big)\Tr(\rho)
\end{align*}
 Set $\beta_2 = \beta_1^{3/4}$ and $\beta_1  = \alpha^{4/5}$ to obtain 
\beq\label{eq:ybound2}
\sum_{i\neq j} \Tr_{\rho_i}\big(\hat{Y}_i^\dagger Q_i Q_j Q_i \hat{Y}_i \big) \leq O(\alpha^{1/5})\,\Tr(\rho)
\eeq
Let $\sigma_i := \hat{Y}_i\rho_i \hat{Y}_i^\dagger$. We are now ready to apply Lemma~\ref{lem:kproj} to the $Q_i$ and $\sigma_i$: the first condition holds by~\eqref{eq:ybound2}, and the second is a direct consequence of~\eqref{eq:ortho} and $Q_j \leq \beta_1^{-1} \hat{Y}_j\hat{Y}_j^\dagger$ for every $j$. The lemma then gives us pairwise orthogonal $\Pi_i$ such that 
$$\sum_i \Tr_{\rho_i}\big(\hat{Y}_i^\dagger (Q_i-\Pi_i)^2 \hat{Y}_i \big) \leq O(\alpha^{1/10})\Tr(\rho)$$
Combined with~\eqref{eq:ybound} and the triangle inequality, this leads to
$$\sum_i \Tr_{\rho_i}\big(\hat{Y}_i^\dagger (\Id-\Pi_i)\hat{Y}_i \big) \leq O(\alpha^{1/10})\,\Tr(\rho)$$
\end{proof}

\section{Discussion and open questions}

Our work shows for the first time that the entangled value of games can be decreased through parallel repetition. 
Even though we framed and proved our results in the context of $2$-player games, it should not be hard to extend them in some cases to multiple players, depending on the kind of projection or consistency constraints that one can assume on the game. On the other hand, extending the result to either many-round games, or games with quantum messages, is an interesting open question. 

One implication of our result is the following. The celebrated
PCP theorem says that given a game, it is NP-hard to tell if its value is $1$ or less than, say, $0.99$.
Combined with Raz's parallel repetition result, one obtains that it is also hard to tell if the value is $1$
or less than, say, $0.01$. The latter statement led to an enormous body of work on strong hardness of approximation
results~\cite{Hastad01}. It is currently a major open question whether an analogue of the PCP theorem
holds for the entangled value. \emph{If} such a result was proved, our results would
allow to amplify the hardness to $1$ vs.\ $0.01$, as in the classical case, possibly leading to further
surprising implications.

The main open question left by our work is whether it is possible to show a better rate of decay, in particular an
exponential rate as Raz obtained from direct parallel repetition, or~\cite{Impagliazzo2009} first obtained in the
setting of direct product testers. Another open question is whether our statement can be extended to hold for simple parallel repetition for arbitrary entangled games (i.e. without adding dummy or consistency questions).

We believe that our main conceptual contributions are the extension of the notion of ``approximately serial" to the
setting of measurements, and our subsequent {\em orthogonalization lemma}. We hope that these techniques might prove
useful elsewhere, perhaps in establishing hardness of entangled games. Lastly, product testers are very useful in
the area of property testing, and it remains to be seen if our result can be applied similarly.

\paragraph{Acknowledgments.} We are indebted to Ryan O'Donnell for making publicly available his extremely clear and helpful lecture notes~\cite{O'Donnell2005,O'Donnell2005a} on Feige and Kilian's parallel repetition result,  and to user ``ohai'' of MathOverflow.net for pointing out the connection between the classical Procrustes problem and that of the robust orthonormalization of almost-orthogonal families of vectors. We especially thank Oded Regev for useful discussions and helpful comments, Tsuyoshi Ito and Ben Reichardt for comments, and Ben Reichard for pointing out an error in the proof of Claim~\ref{claim:exptrace2} in a previous version of this manuscript. 

\bibliographystyle{alphaabbrv}
\bibliography{par-rep3}

\appendix

%%%%%%%%%%%%%%%%%%%%%%%%%%%%%%%%%%%%%%%
%%%%%%%% Technical facts
%%%%%%%%%%%%%%%%%%%%%%%%%%%%%%%%%%%%%%%

\section{Some useful technical facts}\label{sec:exp}

In this section we prove a series of useful claims showing that, in a strategy which has been marginalized over a large number of indices, fixing a particular coordinate $(i,q_i)$ does not have much influence on average. Throughout this question we fix a question set $Q$ and a distribution $\mu$ on $Q$. Whenever an expectation over tuples of questions $q\in Q^C$ is taken, it will be over the product distribution $\mu^C$.

Our claims will rely essentially on the following, which applies to \emph{any} matrix semi-norm $\|\cdot\|$, provided it is derived from a semi-inner product $\langle \cdot,\cdot\rangle$.

\begin{claim}\label{claim:exp}
Let $C$ be an integer, and $f:Q^C \rightarrow \{\, X\in\C^{d\times d}\,\}$. Let $M = \Exs{q}{f(q)}$ and for any $(i,q_i)$, $M_{i,q_i} = \Exs{q_{\neg i}}{f(q)}$. Suppose that $\Exs{q}{\|f(q)\|^2} \leq 1$. Then
\begin{enumerate}
\item $0 \leq \Exs{i,q_i}{\|M-M_{i,q_i}\|^2} \leq \frac{\Exs{q}{\|f(q)\|^2} }{C} \leq \frac{1}{C}$. \item
$\Exs{i,q_i}{\|M-M_{i,q_i}\|^2}  = \Exs{i,q_i}{\|M_{i,q_i}\|^2} - \|M\|^2$. \item
$\Pr_{i,q_i}(|\Tr(M)-\Tr(M_{i,q_i})|\geq C^{-1/3}  ) \leq C^{-1/3}$.
\end{enumerate}
\end{claim}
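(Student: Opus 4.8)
The plan is to handle the three parts in the order 2, 1, 3: part 2 is a one-line identity that feeds directly into part 1, the upper bound in part 1 is the only point with real content, and part 3 is a scalar corollary obtained by applying part 1 to $\Tr\circ f$. For part 2, I would expand the semi-norm: since the coordinates of $q$ are independent, $\Exs{q_i}{M_{i,q_i}}=M$ for each fixed $i$, so using bilinearity of $\langle\cdot,\cdot\rangle$,
$$\Exs{q_i}{\|M-M_{i,q_i}\|^2}=\|M\|^2-2\,\Re\big\langle M,\Exs{q_i}{M_{i,q_i}}\big\rangle+\Exs{q_i}{\|M_{i,q_i}\|^2}=\Exs{q_i}{\|M_{i,q_i}\|^2}-\|M\|^2,$$
and averaging over $i$ yields part 2. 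Since each $\|M-M_{i,q_i}\|^2\ge 0$, this also gives the lower bound $0\le\Exs{i,q_i}{\|M-M_{i,q_i}\|^2}$ of part 1.

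For the upper bound in part 1, I would view $f$ as an element of the Hilbert space $L^2(Q^C,\mu^C;\C^{d\times d})$ with the given semi-inner product and take its Hoeffding (ANOVA) decomposition $f=\sum_{S\subseteq[C]}f_S$, where $f_S(q)$ depends on $q$ only through the coordinates in $S$, one has $\Exs{q_j}{f_S}=0$ whenever $j\in S$, and the $f_S$ are pairwise orthogonal in $L^2$; this construction goes through verbatim for Hilbert-space-valued functions. By independence of the coordinates, $M_{i,q_i}=\Exs{q_{\neg i}}{f(q)}=\sum_{S\subseteq\{i\}}f_S=M+f_{\{i\}}(q_i)$ with $f_\emptyset=M$, so $\|M-M_{i,q_i}\|^2=\|f_{\{i\}}(q_i)\|^2$ pointwise, and writing $\|g\|_{L^2}^2:=\Exs{q}{\|g(q)\|^2}$ the orthogonality of the components gives
$$\Exs{i,q_i}{\|M-M_{i,q_i}\|^2}=\frac1C\sum_{i=1}^C\|f_{\{i\}}\|_{L^2}^2\le\frac1C\sum_{S\ne\emptyset}\|f_S\|_{L^2}^2=\frac1C\,\Exs{q}{\|f(q)-M\|^2}\le\frac1C\,\Exs{q}{\|f(q)\|^2}\le\frac1C.$$
The orthogonality of the ANOVA components is exactly what produces the factor $1/C$: a naive application of Jensen's inequality to $M_{i,q_i}=\Exs{q_{\neg i}}{f(q)}$ only yields the far weaker bound $\Exs{q}{\|f(q)\|^2}$, so this is the step carrying the weight of the lemma and I expect it to be the main (indeed essentially the only) obstacle.

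For part 3, I would apply part 1 to the scalar function $g:=\Tr\circ f\colon Q^C\to\C$ with the standard inner product $\langle a,b\rangle=\bar a b$ on $\C$; the hypothesis transfers since $\Exs{q}{|\Tr(f(q))|^2}\le\Exs{q}{\|f(q)\|^2}\le 1$ for the semi-norms used in the applications. Part 1 then gives $\Exs{i,q_i}{|\Tr(M)-\Tr(M_{i,q_i})|^2}\le 1/C$, and Markov's inequality applied to the non-negative random variable $|\Tr(M)-\Tr(M_{i,q_i})|^2$ at threshold $C^{-2/3}$ gives $\Pr_{i,q_i}\!\big(|\Tr(M)-\Tr(M_{i,q_i})|\ge C^{-1/3}\big)\le C^{-1/3}$. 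The one caveat I would flag is this normalization step, namely the passage from the matrix semi-norm hypothesis to a bound on $\Exs{q}{|\Tr(f(q))|^2}$; it is harmless in all our applications, where $\Tr(f(q))$ is itself controlled by $\|f(q)\|$ (typically being the trace of a sub-normalized positive operator).
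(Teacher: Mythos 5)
Your proof is correct, and for part 1 it takes a genuinely different (though closely related) route from the paper. You invoke the full ANOVA/Hoeffding decomposition $f=\sum_S f_S$ and use Parseval for the pairwise-orthogonal components, which is exactly the Efron--Stein inequality for $H$-valued functions. The paper instead obtains the same bound by a single direct expansion: it observes that $0\le \Exs{q}{\|f(q)-\sum_i(M_{i,q_i}-M)\|^2}=\Exs{q}{\|f(q)\|^2}-\sum_i\Exs{q_i}{\|M-M_{i,q_i}\|^2}$, using only that $\Exs{q_i}{M_{i,q_i}-M}=0$ and that the order-one increments for $i\neq j$ are orthogonal under the product measure. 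This is precisely the ANOVA argument restricted to the $|S|\le 1$ components (the quantity $f-\sum_i(M_{i,q_i}-M)$ is $f_\emptyset+\sum_{|S|\ge 2}f_S$), so the two computations are the same idea, but the paper's version avoids constructing and invoking the full decomposition and is therefore slightly more self-contained; your version makes the mechanism transparent and scales better to higher-order versions if one ever needed them. Your remark that a naive Jensen bound only gives $\Exs{q}{\|f(q)\|^2}$ with no factor $1/C$ is exactly right and is the point of the lemma. Parts 2 and 3 match the paper; the caveat you flag in part 3 (passing from the matrix semi-norm hypothesis to $\Exs{q}{|\Tr f(q)|^2}\le 1$) is the very same Cauchy--Schwarz step $|\Tr(A)|=|\langle A,\Id\rangle|\le\|A\|\,\|\Id\|$ together with $\|\Id\|\le 1$ that the paper uses, so your flag is well placed and consistent with the paper's (slightly loosely stated) argument.
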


\begin{proof} The proof of all three parts is in close analogy to that of Lemma~2.1 in~\cite{O'Donnell2005a}, which shows
similar statements for a {\em Boolean} function $f$. For part 1 note that $\Exs{i,q_i}{\|M-M_{i,q_i}\|^2}=\frac{1}{C}
\sum_{i=1}^C \Exs{q_i}{\|M-M_{i,q_i}\|^2}$ and hence it suffices to show that $\sum_{i=1}^C
\Exs{q_i}{\|M-M_{i,q_i}\|^2}\leq \Tr(M)$. Observe that
\begin{align*}
0 & \leq \Exs{q}{\|f(q)-\sum_i (M_{i,q_i}-M)\|^2}\\
& =\Exs{q}{\|f(q)\|^2}- \sum_i \Exs{q_i}{\langle M_{i,q_i}-M, M_{i,q_i}\rangle+\langle M_{i,q_i}, M_{i,q_i}-M\rangle}+\sum_{i,j}\Exs{q_i,q_j}{\langle M-M_{i,q_i},M-M_{j,q_j}\rangle}\\
& = \Exs{q}{\|f(q)\|^2}-\sum_i\Exs{q_i}{\|M-M_{i,q_i}\|^2},
\end{align*}
where for the last equality we have used that $\Exs{q_i}{M_{i,q_i}-M}=0$ and hence $\Exs{q_i}{\langle M_{i,q_i}-M,
M_{i,q_i}\rangle}=\Exs{q_i}{\langle M_{i,q_i}-M, M_{i,q_i}-M\rangle}$ and, for $i \neq j$, 
$$\Exs{q_i,q_j}{\langle M-M_{i,q_i},M-M_{j,q_j}\rangle}=\langle \Exs{q_i}{M-M_{i,q_i}}, \Exs{q_j}{M-M_{j,q_j}}\rangle=0$$
Part 1. now follows, and the second inequality is simply the assumption that $\Exs{q}{\|f(q)\|^2} \leq 1$.

Part 2 is trivial from the expansion of $\|M-M_{i,q_i}\|^2$. Part 3 follows from part 1 using Markov's inequality,
which gives $\Pr_{i,q_i}((\Tr(M-M_{i,q_i}))^2 \geq C^{-2/3}) \leq C^{2/3} \Exs{i,q_i}{(\Tr(M-M_{i,q_i}))^2}$. Observing
that for $A:=M-M_{i,q_i}$ we have $(\Tr(A))^2=\langle A,\Id \rangle^2 \leq \|A\|^2 \cdot \|\Id\|^2=\|A\|^2$ gives the
desired bound.
\end{proof}

The following is a direct corollary of Claim~\ref{claim:exp}, obtained for a specific instantiation of the norm $\|\cdot\|$.

\begin{claim}\label{claim:exptrace3}
Let $Y_q^a$, for $q\in Q^C$ and $a\in A^C$, be positive matrices such that $Y_q:= \sum_a Y_q^a \leq \Id$, and $\rho\geq 0$. Let $Y = \Exs{q}{Y_q}$. Then 
$$ \Exs{(i,q_i)}{\big|\Tr\big(Y\,\rho^{1/2} Y\,\rho^{1/2} \big) -  \Tr\big(Y_{q_i }\,\rho^{1/2}Y_{q_i}\,\rho^{1/2} \big)\big|} \,\leq\, C^{-1} \Exs{q}{\Tr\big(Y_q\rho^{1/2} Y_q \rho^{1/2}\big)}\, \leq\, \Trho(Y) $$
\end{claim}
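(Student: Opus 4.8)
The plan is to derive the statement from the general fluctuation bound of Claim~\ref{claim:exp}, instantiated with the matrix function $f(q):=Y_q=\sum_a Y_q^a$ and the seminorm $\|\cdot\|_\rho$ of~\eqref{eq:seminorm}. As observed right after its definition, $\|\cdot\|_\rho$ comes from the semi-inner product $\langle A,B\rangle_\rho:=\Tr\!\big(A\rho^{1/2}B^\dagger\rho^{1/2}\big)$, so Claim~\ref{claim:exp} applies. With this choice $M=\Exs{q}{f(q)}=Y$ and $M_{i,q_i}=\Exs{q_{\neg i}}{f(q)}=Y_{q_i}$, and since $Y,Y_{q_i},Y_q$ are Hermitian we have $\|M\|_\rho^2=\Tr(Y\rho^{1/2}Y\rho^{1/2})$, $\|M_{i,q_i}\|_\rho^2=\Tr(Y_{q_i}\rho^{1/2}Y_{q_i}\rho^{1/2})$, and $\|f(q)\|_\rho^2=\Tr(Y_q\rho^{1/2}Y_q\rho^{1/2})$. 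I will use part~1 of Claim~\ref{claim:exp} in the form $\Exs{i,q_i}{\|M-M_{i,q_i}\|_\rho^2}\le C^{-1}\Exs{q}{\|f(q)\|_\rho^2}$, which, as its proof shows, does not actually require the normalization $\Exs{q}{\|f(q)\|_\rho^2}\le1$.

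First I would record the elementary identity
$$\Tr\!\big(Y\rho^{1/2}Y\rho^{1/2}\big)-\Tr\!\big(Y_{q_i}\rho^{1/2}Y_{q_i}\rho^{1/2}\big)=\big\langle\, Y+Y_{q_i}\,,\,Y-Y_{q_i}\,\big\rangle_\rho ,$$
valid because the two cross terms $\Tr(Y\rho^{1/2}Y_{q_i}\rho^{1/2})$ and $\Tr(Y_{q_i}\rho^{1/2}Y\rho^{1/2})$ agree by cyclicity of the trace. Cauchy--Schwarz for $\langle\cdot,\cdot\rangle_\rho$ then bounds the left-hand side in absolute value by $\|Y+Y_{q_i}\|_\rho\,\|Y-Y_{q_i}\|_\rho$. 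Since $Y$ and $Y_{q_i}$ are averages of operators $Y_q\le\Id$, they satisfy $0\le Y,Y_{q_i}\le\Id$; writing $B=Y+Y_{q_i}\le2\Id$ and $\tilde B=\rho^{1/4}B\rho^{1/4}\le2\rho^{1/2}$ gives $\|B\|_\rho^2=\Tr(\tilde B^2)\le2\Tr(\tilde B\rho^{1/2})=2\Tr(B\rho)\le4\Tr(\rho)$, so $\|Y+Y_{q_i}\|_\rho\le2$ in the (sub)normalized case $\Tr(\rho)\le1$.

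It remains to control $\Exs{i,q_i}{\|Y-Y_{q_i}\|_\rho}$. Part~2 of Claim~\ref{claim:exp} gives the companion identity $\Exs{i,q_i}{\|Y-Y_{q_i}\|_\rho^2}=\Exs{i,q_i}{\Tr(Y_{q_i}\rho^{1/2}Y_{q_i}\rho^{1/2})}-\Tr(Y\rho^{1/2}Y\rho^{1/2})$; in particular this difference is nonnegative, so the signed version of the claim (drop the absolute value) is immediate from part~1. Feeding part~1 and Jensen's inequality into the bound of the previous paragraph then controls $\Exs{i,q_i}{\big|\Tr(Y\rho^{1/2}Y\rho^{1/2})-\Tr(Y_{q_i}\rho^{1/2}Y_{q_i}\rho^{1/2})\big|}$ in terms of $\Exs{q}{\Tr(Y_q\rho^{1/2}Y_q\rho^{1/2})}$ and $C$. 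Finally the second inequality $C^{-1}\Exs{q}{\Tr(Y_q\rho^{1/2}Y_q\rho^{1/2})}\le\Trho(Y)$ follows since, for each $q$, $\Tr(Y_q\rho^{1/2}Y_q\rho^{1/2})=\Tr\!\big((\rho^{1/4}Y_q\rho^{1/4})^2\big)\le\Tr\!\big(\rho^{1/4}Y_q\rho^{1/4}\cdot\rho^{1/2}\big)=\Tr(Y_q\rho)$ (using $\rho^{1/4}Y_q\rho^{1/4}\le\rho^{1/2}$), whence $\Exs{q}{\Tr(Y_q\rho^{1/2}Y_q\rho^{1/2})}\le\Tr(Y\rho)=\Trho(Y)$, and $C\ge1$.

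The point that will require the most care is matching the precise $C^{-1}$ decay rate. The route above through Cauchy--Schwarz and Jensen loses a square root, yielding only a bound of order $C^{-1/2}\sqrt{\Trho(Y)\,\Exs{q}{\Tr(Y_q\rho^{1/2}Y_q\rho^{1/2})}}$. Recovering the stated $C^{-1}\Exs{q}{\Tr(Y_q\rho^{1/2}Y_q\rho^{1/2})}$ calls for treating $\big|\,\|Y\|_\rho^2-\|Y_{q_i}\|_\rho^2\,\big|$ not as a generic difference of squared seminorms but via the one-sided variance identity of part~2 of Claim~\ref{claim:exp} together with the normalization $Y_q\le\Id$; this is the step where I would expect to spend the bulk of the effort, and where settling for a slightly weaker rate (with a correspondingly larger exponent $c$ in Theorems~\ref{thm:mainproj} and~\ref{thm:maincons}) would still suffice for all downstream uses.
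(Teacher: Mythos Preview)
Your approach is exactly the paper's: instantiate Claim~\ref{claim:exp} with $f(q)=Y_q$ and the seminorm $\|\cdot\|_\rho$; the paper's proof is just those two sentences. Where you go further is in taking the absolute value inside the expectation seriously, and you are right to do so. Parts~1 and~2 of Claim~\ref{claim:exp} only give
\[
\Exs{i,q_i}{\|Y_{q_i}\|_\rho^2}-\|Y\|_\rho^2 \;=\; \Exs{i,q_i}{\|Y-Y_{q_i}\|_\rho^2}\;\in\;\big[0,\;C^{-1}\Exs{q}{\|Y_q\|_\rho^2}\big],
\]
i.e.\ the signed (or absolute-value-outside) version. The literal first inequality of the claim, with the absolute value inside and the exact constant $C^{-1}$, does \emph{not} follow and is in fact false: take $d=1$, $\rho=1$, $C=2$, $Q=\{\pm1\}$ uniform and $Y_q=\tfrac12+\tfrac{q_1+q_2}{4}$; then the left-hand side equals $\tfrac14$ while $C^{-1}\Exs{q}{Y_q^2}=\tfrac{3}{16}$.

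Your Cauchy--Schwarz route through $\langle Y+Y_{q_i},\,Y-Y_{q_i}\rangle_\rho$ is correct. A slightly sharper way to package it is
\[
\Exs{i,q_i}{\big|\|Y\|_\rho^2-\|Y_{q_i}\|_\rho^2\big|}
\;\le\;\Big(\Exs{i,q_i}{\|Y+Y_{q_i}\|_\rho^2}\Big)^{1/2}\Big(\Exs{i,q_i}{\|Y-Y_{q_i}\|_\rho^2}\Big)^{1/2}
\;\le\;2\,C^{-1/2}\,\Trho(Y),
\]
using $\|Y\|_\rho^2\le\Trho(Y)$ and $\Exs{i,q_i}{\|Y_{q_i}\|_\rho^2}\le\Exs{i,q_i}{\Trho(Y_{q_i})}=\Trho(Y)$ for the first factor and Part~1 for the second. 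This is what is actually needed: in Fact~\ref{fact:serial} the claim is applied with $C=C_2$ under the standing hypothesis $\eta\ge C_2^{-1/2}$, so the bound $2C^{-1/2}\Trho(Y)\le 2\eta\,\alpha_{a_R}$ is exactly the right shape (the extra constant disappears into the $O(\cdot)$). Your handling of the second inequality is fine.

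So: same approach as the paper, and your instinct that the $C^{-1}$ rate is the sticking point is not a gap in your argument but a small overstatement in the claim itself. Do not spend effort trying to recover it.
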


\begin{proof} The statement follows from Claim~\ref{claim:exp}, applied to $f(q) = Y_q$ and the (semi)-norm $\|A\|^2 = \Tr\big( A \rho^{1/2} A^\dagger \rho^{1/2}\big)$, which is derived from the inner-product $(A,B) \mapsto \Tr\big(A \rho^{1/2} B^\dagger \rho^{1/2}\big)$. The second inequality holds since $0\leq Y_q \leq \Id$ for every $q$. 
\end{proof}

We now give two simple calculations which will be useful. The first is a well-known operator version of the Cauchy-Schwarz inequality.

\begin{claim}\label{claim:matrixcs} Let $A,B$ be (possibly rectangular) matrices such that $A^\dagger B$ exists, and $B^\dagger B$ is invertible. Then
$$ (A^\dagger B)(B^\dagger B)^{-1} (B^\dagger A) \,\leq\, A^\dagger A$$
\end{claim}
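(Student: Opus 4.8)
The plan is to obtain this inequality as the Schur‑complement form of the positivity of a single block Gram matrix. First I would stack the two matrices side by side, setting $C = \begin{pmatrix} A & B \end{pmatrix}$ (this is legal precisely because $A^\dagger B$ exists, i.e.\ $A$ and $B$ have the same number of rows), and observe that
\[
 G \;:=\; C^\dagger C \;=\; \begin{pmatrix} A^\dagger A & A^\dagger B \\ B^\dagger A & B^\dagger B \end{pmatrix} \;\ge\; 0 ,
\]
the positivity being automatic. The hypotheses that $A^\dagger B$ exists and that $B^\dagger B$ is invertible are exactly what is needed for the block entries, and their subsequent manipulation, to be well-defined; note also that $B^\dagger B \ge 0$ always, so ``invertible'' here means $B^\dagger B > 0$, and in particular $(B^\dagger B)^{-1}$ is itself positive.

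Next I would conjugate $G$ by the block upper‑triangular matrix $L = \begin{pmatrix} \Id & -(A^\dagger B)(B^\dagger B)^{-1} \\ 0 & \Id \end{pmatrix}$. A direct computation, using $(A^\dagger B)(B^\dagger B)^{-1}(B^\dagger B) = A^\dagger B$, gives
\[
 L\, G\, L^\dagger \;=\; \begin{pmatrix} A^\dagger A - (A^\dagger B)(B^\dagger B)^{-1}(B^\dagger A) & 0 \\ 0 & B^\dagger B \end{pmatrix} .
\]
Since $G \ge 0$ and congruence preserves positive semidefiniteness ($z^\dagger L G L^\dagger z = (L^\dagger z)^\dagger G (L^\dagger z) \ge 0$ for every $z$), the left‑hand side is positive semidefinite; reading off its top‑left block yields exactly $(A^\dagger B)(B^\dagger B)^{-1}(B^\dagger A) \le A^\dagger A$.

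If a fully self‑contained one‑line argument is preferred, I would instead fix an arbitrary vector $x$, set $y = -(B^\dagger B)^{-1}(B^\dagger A)\,x$, and expand $0 \le \|A x + B y\|^2$: the two cross terms each equal $-x^\dagger (A^\dagger B)(B^\dagger B)^{-1}(B^\dagger A)\, x$, while $y^\dagger (B^\dagger B)\, y = +x^\dagger (A^\dagger B)(B^\dagger B)^{-1}(B^\dagger A)\, x$, so the expansion telescopes to $x^\dagger\big( A^\dagger A - (A^\dagger B)(B^\dagger B)^{-1}(B^\dagger A)\big) x \ge 0$, and arbitrariness of $x$ gives the operator inequality. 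I do not anticipate any real obstacle: this is the standard operator Cauchy--Schwarz inequality, and the only points needing care are bookkeeping the (possibly non‑square) dimensions so that every product is legal, and recording the positivity of $(B^\dagger B)^{-1}$.
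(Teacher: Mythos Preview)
Your proposal is correct; in fact your ``one-line'' alternative is exactly the paper's proof, carried out at the operator level: the paper sets $\Delta = (B^\dagger B)^{-1}(B^\dagger A)$ and expands $(A-B\Delta)^\dagger(A-B\Delta)\ge 0$, which is precisely your expansion of $\|Ax+By\|^2$ with $y=\Delta x$, done before pairing with a vector. The Schur-complement presentation you give first is an equivalent repackaging of the same computation.
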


\begin{proof} Let $\Delta = (B^\dagger B)^{-1} (B^\dagger A)$. Then the matrix $(A-B\Delta)^\dagger (A-B\Delta)$ is positive, which gives the result. 
\end{proof}

\begin{claim}\label{claim:expsquare}
Let $Y_q \in\C^{d\times d}$, $0\leq Y_q\leq\Id$, for $q\in Q^C$, and let $Y = \Exs{q}{Y_q}$, $Y_{i,q_i} = \Exs{q_{\neg i}}{Y_q}$ for $i\in [C]$. Then
$$\Exs{(i,q_i)}{ (Y-Y_{i,q_i})^2} \,\leq C^{-1}\Exs{q}{Y_q^2}$$
\end{claim}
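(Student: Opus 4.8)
The plan is to establish the stronger operator inequality $\sum_{i=1}^C \Exs{q_i}{(Y-Y_{i,q_i})^2} \leq \Exs{q}{Y_q^2}$ in the positive-semidefinite order, from which the claimed bound follows immediately since $\Exs{(i,q_i)}{(Y-Y_{i,q_i})^2} = \frac{1}{C}\sum_{i=1}^C \Exs{q_i}{(Y-Y_{i,q_i})^2}$. This is the matrix analogue of Claim~\ref{claim:exp}(1), and I would prove it by the same ANOVA-style expansion, the only new ingredient being care with non-commutativity.

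First I would set $D_i := Y_{i,q_i} - Y$ for $i\in[C]$, observing that $D_i$ is Hermitian, depends only on $q_i$, and satisfies $\Exs{q_i}{D_i} = 0$. The key object is the Hermitian operator $Z := Y_q - \sum_{i=1}^C D_i$; since $Z$ is Hermitian we have $Z^2\geq 0$ and hence $\Exs{q}{Z^2}\geq 0$. Expanding $Z^2$ and taking expectations, I would evaluate the cross terms using independence of the $q_i$'s together with the identity $\Exs{q_{\neg i}}{Y_q} = Y_{i,q_i} = Y + D_i$: concretely, $\Exs{q}{Y_q D_i} = \Exs{q_i}{(Y+D_i)D_i} = \Exs{q_i}{D_i^2}$ and likewise $\Exs{q}{D_i Y_q} = \Exs{q_i}{D_i^2}$ (using $\Exs{q_i}{D_i}=0$), while for $i\neq j$ the term $\Exs{q}{D_i D_j}$ vanishes because $D_i,D_j$ depend on the independent variables $q_i,q_j$ and each has mean zero, so $\Exs{q}{(\sum_i D_i)^2} = \sum_i \Exs{q_i}{D_i^2}$. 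Collecting everything,
$$0 \,\leq\, \Exs{q}{Z^2} \,=\, \Exs{q}{Y_q^2} - 2\sum_{i=1}^C \Exs{q_i}{D_i^2} + \sum_{i=1}^C \Exs{q_i}{D_i^2} \,=\, \Exs{q}{Y_q^2} - \sum_{i=1}^C \Exs{q_i}{D_i^2},$$
which rearranges to the desired operator inequality; dividing by $C$ proves the claim.

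I do not expect a genuine obstacle here: the hypothesis $0\leq Y_q\leq \Id$ is in fact not needed (only that the $Y_q$ are Hermitian, which makes $Z$ Hermitian and $Z^2\geq 0$), and the single point requiring attention is to keep the non-commuting products $Y_q D_i$ and $D_i Y_q$ separate when expanding $Z^2$ — but since both have the same expectation $\Exs{q_i}{D_i^2}$, this bookkeeping is entirely routine.
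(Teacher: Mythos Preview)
Your proposal is correct and is essentially identical to the paper's own proof: both form the Hermitian operator $Z=Y_q-\sum_i(Y_{i,q_i}-Y)$, use $Z^2\geq 0$, expand, and take expectations to obtain the operator inequality $\sum_i \Exs{q_i}{(Y-Y_{i,q_i})^2}\leq \Exs{q}{Y_q^2}$ before dividing by $C$. Your remark that only Hermiticity of $Y_q$ is used (not $0\leq Y_q\leq\Id$) is also accurate.
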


\begin{proof}
Write
\begin{align*}
 0 &\leq \Big(Y_q - \sum_i( Y_{i,q_i}-Y)\Big)\Big(Y_q - \sum_i( Y_{i,q_i}-Y)\Big) \\
 & = Y_q^2 - \sum_i \big( Y_q( Y_{i,q_i}-Y) + \big(Y_{i,q_i}-Y)Y_q  \big)+ \sum_{i,j} \big( Y_{i,q_i}-Y\big)\big( Y_{j,q_j}-Y\big)
\end{align*}
Taking the expectation over $q$, we obtain
$$
\sum_{i} \Exs{q_i}{( Y_{i,q_i}-Y)^2}\,\leq\, \textsc{E}_q\big[Y_q^2\big]
$$
Dividing by $C$ on both sides proves the claim. 
\end{proof}

%As a corollary, we can state the following variants of Claim~\ref{claim:exptrace3}. The first is valid for the trace norm. 
%
%\begin{claim}\label{claim:exptrace}
%Let $f:Q^C \rightarrow \{ X\in\C^{r\times d},\, X^\dagger X\leq \Id\}$, $\rho$ a positive matrix, and define $\mE_q(\rho) = f(q) \rho f(q)^\dagger$, $\mE(\rho) = \Exs{q}{\mE_q(\rho)}$, and $\mE_{i,q_i}(\rho) = \Exs{q_{\neg i}}{ \mE_q(\rho)}$. Then
%$$\Exs{(i,q_i)}{ \| \mE_{i,q_i}(\rho)-\mE(\rho)\|_1} \,\leq\, C^{-1/2}\Tr(\rho)$$
%\end{claim}
%
%\begin{proof}
%Note that, by definition, for any fixed $i$, $\Exs{q_i}{ \mE_{i,q_i}(\rho) } = \mE(\rho)$. Applying Claim~\ref{claim:expsquare} to the $\mE_q(\rho)$,
%$$
%\Exs{(i,q_i)}{( \mE_{i,q_i}(\rho)-\mE(\rho))^2} \,\leq\, C^{-1}\Exs{q}{\mE_q(\rho)^2}
%$$
%Since the square root function is operator monotone, we can apply it on both sides of this inequality. Since it is also operator concave, we get
%$$
%\Exs{(i,q_i)}{ | \mE_{i,q_i}(\rho)-\mE(\rho)|}\,\leq \, \Big(\Exs{(i,q_i)}{( \mE_{i,q_i}(\rho)-\mE(\rho))^2}\Big)^{1/2} \,\leq\, C^{-1/2}\Big(\Exs{q}{\mE_q(\rho)^2}\Big)^{1/2}
%$$
%Taking the trace on both sides, we obtain
%$$\Exs{(i,q_i)}{ \| \mE_{i,q_i}(\rho)-\mE(\rho)\|_1} \,\leq\, C^{-1/2} \Tr\Big( \big(\Exs{q}{\mE_q(\rho)^2}\big)^{1/2}\Big) \, \leq \, C^{-1/2}\Tr(\rho)$$
%where for the last inequality we used $ f(q)^\dagger f(q) \leq \Id$ to upper-bound $\mE_q(\rho)^2\leq\mE_q(\rho^2)$. 
%\end{proof}

\begin{claim}\label{claim:exptrace2}
For every $q\in Q^C$ let $\{X_q^a\}_{a\in A^{C'}}$ be a POVM, and $\hat{X}_q^a := \sqrt{\pi(q)} \sqrt{X_q^a} \otimes \bra{q,a}$ (as described in Section~\ref{sec:notation}), and $\rho\geq 0$. Assume that $\hat{X} \hat{X}^\dagger = \sum_a\Exs{q}{\hat{X}_q^a (\hat{X}_q^a)^\dagger} \leq \Id$. Then 
$$ \sum_{a} \Exs{(i,q_i)}{ \big| \Trho\big( (\hat{X}^a)^\dagger \hat{X}^a(\hat{X}^a)^\dagger \hat{X}^a\big) - \Trho\big( (\hat{X}_{q_i}^a)^\dagger \hat{X}_{q_i}^a(\hat{X}_{q_i}^a)^\dagger \hat{X}_{q_i}^a\big) \big|} \,\leq\, 2\,C^{-1/2} \Tr(\rho)$$
\end{claim}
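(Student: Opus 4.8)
The plan is to reduce the statement to the abstract conditioning lemma, Claim~\ref{claim:exp}, in the same spirit as Claim~\ref{claim:exptrace3} is, the extra twist being that $T^a := \Trho\big((\hat X^a)^\dagger\hat X^a(\hat X^a)^\dagger\hat X^a\big)$ is \emph{quartic} in $\hat X^a$ rather than quadratic, so it is not literally a squared semi-norm of $\hat X^a$. Two observations make this manageable. First, cycling the trace gives the identity $T^a = \Tr\big(X^a\,\sigma^a\big)$, where $X^a=\hat X^a(\hat X^a)^\dagger=\textsc{E}_q[X^a_q]$ and $\sigma^a:=\hat X^a(\rho\otimes\Id_E)(\hat X^a)^\dagger=\textsc{E}_q\big[\sqrt{X^a_q}\,\rho\,\sqrt{X^a_q}\big]$ are operators on the measured system that are \emph{single} averages over $q$, so that conditioning on $(i,q_i)$ simply replaces $X^a,\sigma^a$ by $X^a_{q_i}=\textsc{E}_{q_{\neg i}}[X^a_q]$ and $\sigma^a_{q_i}=\textsc{E}_{q_{\neg i}}[\sqrt{X^a_q}\rho\sqrt{X^a_q}]$, with $X^a=\textsc{E}_{q_i}[X^a_{q_i}]$ and $\sigma^a=\textsc{E}_{q_i}[\sigma^a_{q_i}]$. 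Second, purifying, i.e.\ fixing $\ket{\Psi'}$ with $\Trho(A)=\bra{\Psi'}A\otimes\Id_{\mathrm{aux}}\ket{\Psi'}$, and using that $P^a:=(\hat X^a)^\dagger\hat X^a$ is Hermitian, one gets $T^a=\Trho\big((P^a)^2\big)=\|v^a\|^2$ with $v^a:=\big((\hat X^a)^\dagger\otimes\Id\big)\ket{\phi^a}$, $\ket{\phi^a}:=\big(\hat X^a\otimes\Id\big)\ket{\Psi'}$; the point is that $\ket{\phi^a}=\textsc{E}_q\big[(\hat X^a_q\otimes\Id)\ket{\Psi'}\big]$ is \emph{linear} in the $\hat X^a_q$ and its reduced density is exactly the fixed state $\sigma^a$. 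The analogous facts hold after conditioning: $T^a_{q_i}=\|v^a_{q_i}\|^2$ with $v^a_{q_i}=\big((\hat X^a_{q_i})^\dagger\otimes\Id\big)\ket{\phi^a_{q_i}}$.

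The heart of the argument is to decompose $v^a-v^a_{q_i}$ correctly: one writes
\[
v^a-v^a_{q_i}\;=\;\underbrace{\big((\hat X^a-\hat X^a_{q_i})^\dagger\otimes\Id\big)\ket{\phi^a}}_{u_1}\;+\;\underbrace{\big((\hat X^a_{q_i})^\dagger\otimes\Id\big)\big(\ket{\phi^a}-\ket{\phi^a_{q_i}}\big)}_{u_2},
\]
i.e.\ one inserts the \emph{conditioned} outer operator applied to the \emph{unconditioned} inner vector. For $u_2$, since $\hat X^a_{q_i}(\hat X^a_{q_i})^\dagger=X^a_{q_i}\le\Id$ one has $\|u_2\|\le\|\ket{\phi^a}-\ket{\phi^a_{q_i}}\|$, and Claim~\ref{claim:exp} applied to $q\mapsto(\hat X^a_q\otimes\Id)\ket{\Psi'}$ with the Euclidean norm, summed over $a$ and using $\sum_aX^a_q=\Id$, bounds $\sum_a\textsc{E}_{i,q_i}[\|u_2\|^2]\le \Tr(\rho)/C$. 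For $u_1$, $\|u_1\|^2=\Tr\big((\hat X^a-\hat X^a_{q_i})(\hat X^a-\hat X^a_{q_i})^\dagger\,\sigma^a\big)$ with $\sigma^a$ \emph{fixed}; this is $\|\hat X^a-\hat X^a_{q_i}\|^2_{\sigma^a}$ for the semi-norm $\|Z\|^2_{\sigma^a}:=\Tr(ZZ^\dagger\sigma^a)$ coming from the semi-inner-product $(Z,W)\mapsto\Tr(ZW^\dagger\sigma^a)$, so Claim~\ref{claim:exp} applies again, and since $\sum_a\Tr(X^a_q\sigma^a)\le\sum_a\Tr(\sigma^a)\le\Tr(\rho)$ one gets $\sum_a\textsc{E}_{i,q_i}[\|u_1\|^2]\le \Tr(\rho)/C$. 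Altogether $\textsc{E}_{i,q_i}\big[\sum_a\|v^a-v^a_{q_i}\|^2\big]=O(\Tr(\rho)/C)$.

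To assemble: from $T^a=\|v^a\|^2$, $|T^a-T^a_{q_i}|\le(\|v^a\|+\|v^a_{q_i}\|)\,\|v^a-v^a_{q_i}\|$; a Cauchy--Schwarz over $a$ together with $\sum_aT^a\le\sum_a\Tr(X^a\rho)=\Tr(\rho)$ (and likewise $\sum_aT^a_{q_i}\le\Tr(\rho)$) gives $\sum_a|T^a-T^a_{q_i}|\le2\sqrt{\Tr(\rho)}\,\big(\sum_a\|v^a-v^a_{q_i}\|^2\big)^{1/2}$; taking $\textsc{E}_{i,q_i}$, applying Jensen and the previous bound yields $\sum_a\textsc{E}_{i,q_i}[|T^a-T^a_{q_i}|]=O(C^{-1/2})\Tr(\rho)$, and tracking the constants in the two triangle inequalities gives the stated $2\,C^{-1/2}\Tr(\rho)$.

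The main obstacle is exactly the choice of decomposition in the second paragraph. The naive split of $v^a-v^a_{q_i}$ (inserting the \emph{unconditioned} outer operator on the \emph{conditioned} inner vector) leaves a residual term weighted by the \emph{$q_i$-dependent} post-measurement state $\sigma^a_{q_i}$, which cannot be fed into Claim~\ref{claim:exp} and, bounded crudely through the trace/Hilbert--Schmidt norms, costs a spurious factor of the dimension; arranging that this residual weights against the \emph{fixed} $\sigma^a$ is what makes the estimate dimension-free, and is presumably the subtle point behind the erratum acknowledged in the paper. The remaining work is the (routine but somewhat heavy) bookkeeping with the register $E$ and the distribution $\pi$, of exactly the kind already carried out in the reductions to Claim~\ref{claim:exp} for Claims~\ref{claim:exptrace3} and~\ref{claim:expsquare}.
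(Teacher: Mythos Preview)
Your overall strategy---writing $T^a=\Tr(X^a\sigma^a)$ and splitting the difference into a piece with the \emph{fixed} post-measurement state $\sigma^a$ and a piece that tracks the change $\sigma^a\to\sigma^a_{q_i}$---is exactly the right decomposition, and it matches the paper's triangle inequality~\eqref{eq:exptrace5}. Your treatment of the first piece (your $u_1$) is also morally correct: it is the paper's first term, handled by Cauchy--Schwarz and Claim~\ref{claim:exp} with the weight $\sigma^a$.

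The gap is in your handling of $u_2$. The key assertion ``$\ket{\phi^a}=\textsc{E}_q\big[(\hat X^a_q\otimes\Id)\ket{\Psi'}\big]$'' is false under the paper's conventions: one has $\hat X^a=\sum_q\hat X^a_q$ (a \emph{sum}, the $\sqrt{\pi(q)}$ already being built into $\hat X^a_q$), so $\ket{\phi^a}=\sum_q(\hat X^a_q\otimes\Id)\ket{\Psi'}$, not an expectation. More fundamentally, no choice of purifications can give what you need simultaneously: for your $u_1$ bound you require $\ket{\phi^a}$ to have reduced density exactly $\sigma^a$, which forces the $q$-components of $\ket{\phi^a}$ to be \emph{orthogonal} (the Stinespring picture, with $\sqrt{\pi(q)}$ weights). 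For your $u_2$ bound via Claim~\ref{claim:exp} you need $\ket{\phi^a}=\textsc{E}_{q_i}[\ket{\phi^a_{q_i}}]$ with $\ket{\phi^a_{q_i}}$ having reduced density $\sigma^a_{q_i}$; but an expectation of such vectors has reduced density $\textsc{E}_{q_i,q_i'}[\cdots]$ with cross terms, not $\sigma^a$. The two requirements are incompatible, so Claim~\ref{claim:exp} cannot be invoked to control $\|\ket{\phi^a}-\ket{\phi^a_{q_i}}\|$. (Uhlmann's theorem would reduce this to $\|\sigma^a-\sigma^a_{q_i}\|_1$, but that is precisely the quantity Claim~\ref{claim:exp} does \emph{not} bound, and controlling it naively reintroduces the dimension factor you warn against.)

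The paper's proof of the second term is genuinely different from what you propose and does not go through a purification at all. It assembles block-column matrices $A$ (with blocks $\sqrt{\pi(q_i)}\,|\sigma^a-\sigma^a_{q_i}|$) and $B$ (with blocks $\sqrt{\pi(q_i)}\,X^a_{q_i}$), applies the operator Cauchy--Schwarz inequality (Claim~\ref{claim:matrixcs}) to get $DD^\dagger\le A^\dagger A$ where $D=A^\dagger B=\sum_a\textsc{E}_{i,q_i}[\,|\sigma^a-\sigma^a_{q_i}|\,X^a_{q_i}\,]$, bounds $A^\dagger A$ via Claim~\ref{claim:expsquare}, and finishes with $\Tr(D)\le\|D\|_1\le\|\sqrt{DD^\dagger}\|_1$ together with H\"older's inequality. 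This block-matrix/trace-norm manoeuvre is precisely what makes the estimate dimension-free; it is not ``routine bookkeeping with the register~$E$'' and cannot be replaced by a second invocation of Claim~\ref{claim:exp}.
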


\begin{proof}
Let $\tilde{X}_i^a = \big|\hat{X}^a(\hat{X}^a)^\dagger - \hat{X}_{q_i}^a(\hat{X}_{q_i}^a)^\dagger\big|$, and $\tilde{\rho}_i^a = \big|\hat{X}^a\rho(\hat{X}^a)^\dagger-\hat{X}_{q_i}^a\rho(\hat{X}_{q_i}^a)^\dagger\big|$, where the notation keeps the dependence on $q_i$ implicit. Use the triangle inequality to write
\begin{align}
\big|\Tr\big( \hat{X}^a (\hat{X}^a)^\dagger \hat{X}^a\rho (\hat{X}^a)^\dagger\big) -  \Tr\big( \hat{X}_{q_i}^a (\hat{X}_{q_i}^a)^\dagger \hat{X}_{q_i}^a\rho (\hat{X}_{q_i}^a)^\dagger\big)\big| &\leq \Tr\big( \tilde{X}_i^a \hat{X}^a\rho (\hat{X}^a)^\dagger\big) + \Tr\big(\hat{X}_{q_i}^a (\hat{X}_{q_i}^a)^\dagger \tilde{\rho}_i^a\big)\label{eq:exptrace5}
\end{align}
The expectation of the first term on the right-hand side of~\eqref{eq:exptrace5} can be bounded by Cauchy-Schwarz as
\begin{align*}
\Exs{(i,q_i)}{\Tr\big( \tilde{X}_i^a \hat{X}^a\rho (\hat{X}^a)^\dagger\big)} &\leq \Exs{(i,q_i)}{\Tr_\rho((\hat{X}^a)^\dagger \hat{X}^a)^{1/2}  \Tr\big( (\tilde{X}^a_i)^2 \hat{X}^a\rho (\hat{X}^a)^\dagger\big)^{1/2}} \\
&\leq C^{-1/2} \Tr_\rho((\hat{X}^a)^\dagger \hat{X}^a)
\end{align*}
by Claim~\ref{claim:exp}, applied to the (semi)-norm $\|A\|^2 := \Tr\big((A^\dagger A)\,(\hat{X}^a\rho (\hat{X}^a)^\dagger)\big)$ and the mapping $f:q\mapsto \hat{X}^a_q (\hat{X}_q^a)^\dagger$.

Regarding the second term on the right-hand side of~\eqref{eq:exptrace5}, let $A$ be the block-column matrix with blocks $\sqrt{\pi(q_i)}\tilde{\rho}_i^a$ for every $(i,q_i)$ and $a$, and $B$ with blocks $\sqrt{\pi(q_i)}\hat{X}_i^a(\hat{X}_i^a)^\dagger$. Then $B^\dagger B = \sum_a \Exs{(i,q_i)}{\big(\hat{X}_i^a (\hat{X}_i^a)^\dagger\big)^2} \leq \Id$. Let $D = A^\dagger B = \sum_a \Exs{(i,q_i)}{\tilde{\rho}_i^a \hat{X}_i^a(\hat{X}_i^a)^\dagger}$; the operator Cauchy-Schwarz inequality from Claim~\ref{claim:matrixcs} gives
$$
DD^\dagger \,\leq\, D(B^\dagger B)^{-1}D^\dagger\,\leq\, A^\dagger A \,=\, \sum_a \Exs{(i,q_i)}{(\tilde{\rho}_i^a)^2 }
$$
Applying Claim~\ref{claim:expsquare} to $\hat{X}_{q}^a\rho(\hat{X}_{q}^a)^\dagger$ (for every $a$), we can then bound
\beq\label{eq:dd}
 DD^\dagger \,\leq\,C^{-1} \Exs{q}{ (\hat{X}_q \rho \hat{X}_q^\dagger)^2}\,\leq\, C^{-1} \Exs{q}{ \hat{X}_q \rho^2 \hat{X}_q^\dagger} 
 \eeq
 where for the second inequality we used $ \hat{X}_q^\dagger \hat{X}_q\leq \Id$. 
Since $\Tr(D) \leq \Tr\big(\sqrt{DD^\dagger}\big) = \|D\|_1$, taking the square root on both sides of~\eqref{eq:dd} (the square root being operator monotone) and then the trace, we obtain
$$
\sum_a \Exs{(i,q_i)}{\Tr\big(\tilde{\rho}_i^a \hat{X}_i^a(\hat{X}_i^a)^\dagger \big)} \,\leq\, C^{-1/2} \Tr\sqrt{ \Exs{q}{ \hat{X}_q \rho^2 \hat{X}_q^\dagger}}\,=\, C^{-1/2} \big\| \hat{X} \rho \big\|_1
$$
where $\hat{X}$ is the rectangular matrix with square blocks $\pi(q)^{-1/2}\hat{X}_q^a$ arranged in a column. By Holder's inequality $\big\| \hat{X} \rho \big\|_1 \leq \Tr(\rho) \|\hat{X}\|_\infty$, and $\|\hat{X}\|_\infty \leq 1$ since $\hat{X}^\dagger\hat{X} = \Exs{q}{ \hat{X}_q^\dagger \hat{X}_q} \leq \Id$. This finishes the proof of the claim. 
\end{proof}

\end{document}